\renewcommand\cite[1]{\citep{#1}}
\newtheorem{theorem}{Theorem}
\newtheorem{lemma}[theorem]{Lemma}
\newtheorem{proposition}[theorem]{Proposition}
\newtheorem{corollary}[theorem]{Corollary}
\newdefinition{definition}[theorem]{Definition}
\newdefinition{remark}[theorem]{Remark}
\newdefinition{example}[theorem]{Example}
\newproof{proof}{Proof}
\newcommand{\R}{\mathbb{R}}
\newcommand{\Q}{\mathbb{Q}}
\newcommand{\N}{\mathbb{N}}
\newcommand{\Z}{\mathbb{Z}}
\newcommand{\K}{\mathbb{K}}
\newcommand{\MAT}[3]{M_{#1\ifthenelse{\equal{#2}{}}{}{,#2}}\ifthenelse{\equal{#3}{}}{}{\left(#3\right)}}
\newcommand{\Rgen}{\R_{G}}
\newcommand{\Rpoly}{\R_{P}}
\newcommand{\Rp}{\R_{\geqslant0}}
\newcommand{\intinterv}[2]{\llbracket#1,#2\rrbracket}
\newcommand{\powerset}[1]{\mathcal{P}(#1)}
\newcommand{\dom}{\operatorname{dom}}
\newcommand{\card}[1]{{\##1}}
\newcommand{\inorm}[2]{\left\lVert{#1}\right\rVert_{#2}}
\newcommand{\twonorm}[1]{\inorm{#1}{2}}
\newcommand{\infnorm}[1]{\inorm{#1}{}}
 \newcommand{\openball}[2]{B_{#2}(#1)}
\newcommand{\sigmap}[1]{{\Sigma{#1}}}
\newcommand{\degp}[1]{{\operatorname{deg}(#1)}}
\newcommand{\poly}{\operatorname{poly}}
\newcommand{\sgn}[1]{\operatorname{sgn}(#1)}
\newcommand{\intp}{\operatorname{int}}
\newcommand{\fiter}[2]{#1^{[#2]}}
\newcommand{\frestrict}[2]{#1\restriction_{#2}}
\newcommand{\idfun}{\operatorname{id}}
\newcommand{\round}[1]{\left\lfloor#1\right\rceil}
\newcommand{\myclass}[1]{\operatorname{#1}}
 \newcommand{\gval}[2][]{\ensuremath{\myclass{GVAL}_{#1}\ifthenelse{\equal{#2}{}}{}{[#2]}}}
\newcommand{\lambdafun}[2]{(#1)\mapsto{#2}}
\newcommand{\lambdafunex}[2]{#1\mapsto{#2}}
\newcommand{\jacobian}[1]{J_{#1}}
\newcommand{\scalarprod}[2]{{#1}\cdot{#2}}
\newcommand{\mtt}[1]{\mathtt{#1}}
\newcommand{\ovl}[1]{\overline{#1}}
\newcommand{\myop}[1]{\operatorname{#1}}
\newcommand{\abs}{\myop{abs}}
\newcommand{\sg}{\myop{sg}}
\newcommand{\ip}[1]{\myop{ip}_{#1}}
\newcommand{\rnd}{\myop{rnd}}
\newcommand{\lxh}{\myop{lxh}}
\newcommand{\hxl}{\myop{hxl}}
\newcommand{\norm}{\myop{norm}}
\newcommand{\mx}{\myop{mx}}
\newcommand{\mn}{\myop{mn}}
\newcommand{\clamp}{\myop{clamp}}
\newcommand{\fnsg}[3]{tanh((#1)*(#2)*(#3))}
\newcommand{\fnipone}[3]{(1+\fnsg{(#1)-1}{#2}{#3})/2.}
\newcommand{\fnlxh}[5]{\fnipone{(#3)-(#1+#2)/2.+1}{#4+log(1+(#5)*(#5))}{2./(#2-#1)}*(#5)}
\newcommand{\fnhxl}[5]{\fnipone{(#1+#2)/2.-(#3)+1}{#4+log(1+(#5)*(#5))}{2./(#2-#1)}*(#5)}
\newcommand{\lemref}[1]{Lemma~\ref{#1}}
\newcommand{\thref}[1]{Theorem~\ref{#1}}
\newcommand{\amaurycorref}[1]{Corollary~\ref{#1}}
\newcommand{\figref}[1]{Figure~\ref{#1}}
\newcommand{\propref}[1]{Proposition~\ref{#1}}
\newcommand{\remref}[1]{Remark~\ref{#1}}
\title{On the Functions Generated by the General Purpose Analog Computer}
\begin{document}

\author[lix]{Olivier Bournez\corref{cor}\fnref{dga}}
\ead{bournez@lix.polytechnique.fr}
\author[fct,sqig]{Daniel Gra\c{c}a\fnref{feder}}
\ead{dgraca@ualg.pt}
\author[lix,mpi]{Amaury Pouly\fnref{dga}}
\ead{pamaury@lix.polytechnique.fr}

\address[lix]{École Polytechnique, LIX, 91128 Palaiseau Cedex, France}
\address[fct]{FCT, Universidade do Algarve, C. Gambelas, 8005-139 Faro, Portugal}
\address[sqig]{Instituto de Telecomunica\c{c}\~{o}es, Lisbon, Portugal}
\address[mpi]{MPI-SWS, E1 5, Campus, 66123 Saarbr\"{u}cken, Germany }
\cortext[cor]{Corresponding author}
\fntext[feder]{Daniel Gra\c{c}a was partially supported by
  \emph{Funda\c{c}\~{a}o para a Ci\^{e}ncia e a Tecnologia} and EU FEDER
  POCTI/POCI via SQIG - Instituto de Telecomunica\c{c}\~{o}es through
  the FCT project UID/EEA/50008/2013. This project has received funding from the European Union’s Horizon 2020 research and innovation
  programme under the Marie Skłodowska-Curie grant agreement No 731143.}
\fntext[dga]{Olivier Bournez and Amaury Pouly were partially supported by
  \emph{DGA Project CALCULS}}

\begin{abstract}
  We consider the General Purpose Analog Computer (GPAC), introduced by
  Claude Shannon in 1941 as a mathematical
  model of Differential Analysers, that is to say as a model of
  continuous-time analog (mechanical, and later one electronic)
  machines of that time. 

  The GPAC generates as output univariate functions  (i.e. functions $f: \R \to \R$).
  In this paper we extend this model by: (i) allowing multivariate functions (i.e. functions
  $f: \R^n \to \R^m$); (ii) introducing a notion of amount of
  resources (space) needed to generate a function, which allows the
  stratification of GPAC generable functions into proper subclasses.
  We also prove that a wide class of (continuous and
  discontinuous) functions can be uniformly approximated over their
  full domain.

  We prove a few stability properties of this
  model, mostly stability by arithmetic operations, composition and
  ODE solving, taking into account the amount of resources needed to perform each operation.
  
  We establish that generable functions are always
  analytic but that they can nonetheless (uniformly) approximate a wide range
of nonanalytic functions. Our model and results extend some of the results from
\cite{Sha41} to the multidimensional case, allow one to define classes of functions generated by GPACs which take into account bounded resources, and also strengthen the
approximation result from \cite{Sha41} over a compact domain to a
uniform approximation result over unbounded domains.

\end{abstract}

\begin{keyword}
Analog Computation \sep Continuous-Time Computations \sep General
Purpose Analog Computer \sep Real Computations
\end{keyword}

\maketitle

\section{Introduction}

In 1941, Claude Shannon introduced in \cite{Sha41} the GPAC
model as a model for the Differential Analyzer \cite{Bus31}, which are
mechanical (and later on electronics) continuous time analog machines,
on which he worked as an operator. The model was later refined in
\cite{Pou74}, \cite{GC03}.  Originally it was presented as a model
based on circuits. Basically, a GPAC is any circuit that can be build
from the 4 basic units of Figure \ref{fig:gpac_circuit}, that is to
say from basic units realizing constants, additions, multiplications
and integrations, all of them working over analog real quantities
(that were corresponding to angles in the mechanical Differential
Analysers, and later on to voltage in the electronic versions).

\begin{figure}
\begin{center}
 \setlength{\unitlength}{1200sp}%
\begin{tikzpicture}
 \begin{scope}[shift={(0,0)},rotate=0]
  \draw (0,0) -- (0.7,0) -- (0.7,0.7) -- (0,0.7) -- (0,0);
  \node at (.35,.35) {$k$};
  \draw (.7,.35) -- (1,.35);
  \node[anchor=west] at (1,.35) {$k$};
  \node at (.35, -.3) {A constant unit};
 \end{scope}
 \begin{scope}[shift={(4,0)},rotate=0]
  \draw (0,0) -- (0.7,0) -- (0.7,0.7) -- (0,0.7) -- (0,0);
  \node at (.35,.35) {$+$};
  \draw (.7,.35) -- (1,.35); \draw (-.3,.175) -- (0,.175); \draw (-.3,.525) -- (0,.525);
  \node[anchor=west] at (1,.35) {$u+v$};
  \node at (.35, -.3) {An adder unit};
  \node[anchor=east] at (-.3,.525) {$u$};
  \node[anchor=east] at (-.3,.175) {$v$};
 \end{scope}
 \begin{scope}[shift={(0,-2)},rotate=0]
  \draw (0,0) -- (0.7,0) -- (0.7,0.7) -- (0,0.7) -- (0,0);
  \node at (.35,.35) {$\times$};
  \draw (.7,.35) -- (1,.35); \draw (-.3,.175) -- (0,.175); \draw (-.3,.525) -- (0,.525);
  \node[anchor=west] at (1,.35) {$uv$};
  \node at (.35, -.3) {An multiplier unit};
  \node[anchor=east] at (-.3,.525) {$u$};
  \node[anchor=east] at (-.3,.175) {$v$};
 \end{scope}
 \begin{scope}[shift={(4,-2)},rotate=0]
  \draw (0,0) -- (0.7,0) -- (0.7,0.7) -- (0,0.7) -- (0,0);
  \node at (.35,.35) {$\int$};
  \draw (.7,.35) -- (1,.35); \draw (-.3,.175) -- (0,.175); \draw (-.3,.525) -- (0,.525);
  %\node[anchor=west] at (1,.35) {$w=\int u\thinspace dv\Leftrightarrow
  %    \left\{\begin{array}{@{}r@{}l}w'(t)&=u(t)v'(t)\\w(t_0)&=\alpha\end{array}\right.$};
  \node[anchor=west] at (1,.35) {$w=\int u\thinspace dv$};
  \node at (.35, -.3) {An integrator unit};
  \node[anchor=east] at (-.3,.525) {$u$};
  \node[anchor=east] at (-.3,.175) {$v$};
 \end{scope}
\end{tikzpicture}
\end{center}
\caption{Circuit presentation of the GPAC: a circuit built from basic
  units. Presentation of the 4 types of units: constant, adder,
  multiplier, and integrator.}
\label{fig:gpac_circuit}
\end{figure}
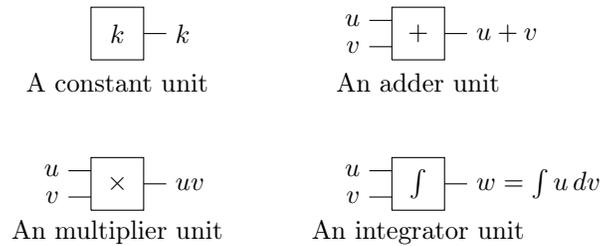%

Figures~\ref{fig:gpac_example_sin} illustrates for example how the
sine function can generated using two integrators, with suitable
initial state, as being the
solution of ordinary differential equation $$\left\{\begin{array}{@{}c@{}l}
y'(t)&=z(t)\\
z'(t)&=-y(t)\\
\end{array}
\right.$$
with suitable initial conditions.

\begin{figure}
\begin{center}
\begin{tikzpicture}
 \begin{scope}[shift={(-4.2,0)},rotate=0]
  \draw (0,0) -- (0.7,0) -- (0.7,0.7) -- (0,0.7) -- (0,0);
  \node at (.35,.35) {$-1$};
 \end{scope}
 \draw (-3.5,.35) -- (-3.15,.35) -- (-3.15,.175) -- (-2.8,.175);
 \begin{scope}[shift={(-2.8,0)},rotate=0]
  \draw (0,0) -- (0.7,0) -- (0.7,0.7) -- (0,0.7) -- (0,0);
  \node at (.35,.35) {$\times$};
 \end{scope}
 \draw (-2.1,.35) -- (-1.75,.35) -- (-1.75,.525) -- (-1.4,.525);
 \begin{scope}[shift={(-1.4,0)},rotate=0]
  \draw (0,0) -- (0.7,0) -- (0.7,0.7) -- (0,0.7) -- (0,0);
  \node at (.35,.35) {$\int$};
 \end{scope}
 \draw (-.7,.35) -- (-.35,.35) -- (-.35,.525) -- (0,.525);
 \begin{scope}[shift={(0,0)},rotate=0]
  \draw (0,0) -- (0.7,0) -- (0.7,0.7) -- (0,0.7) -- (0,0);
  \node at (.35,.35) {$\int$};
 \end{scope}
 \draw (.7,.35) -- (1.4,.35);
 \node[anchor=west] at (1.4,.35) {$\sin(t)$};
 \node[anchor=north] at (-1, -0.5) {$\left\lbrace
\begin{array}{@{}c@{}l}
y'(t)&=z(t)\\
z'(t)&=-y(t)\\
y(0)&=0\\
z(0)&=1
\end{array}
\right.\Rightarrow\left\lbrace\begin{array}{@{}c@{}l}
y(t)&=\sin(t)\\
z(t)&=\cos(t)
\end{array}\right.$};
 \draw (1,.35) -- (1,1) -- (-3.15,1) -- (-3.15,.525) -- (-2.8,.525);
 \fill (1,.35) circle[radius=.07];
 \draw (-4.9,.35) -- (-4.55,.35) -- (-4.55,-.3) -- (-.3,-.3) -- (-.3,.175) -- (0,.175);
 \draw (-1.75,-.3) -- (-1.75,.175) -- (-1.4,.175);
 \fill (-1.75,-.3) circle[radius=.07];
 \node[anchor=east] at (-4.9,.35) {$t$};
\end{tikzpicture}
\end{center}
\caption{Example of GPAC circuit: computing sine and cosine with two variables}
\label{fig:gpac_example_sin}
\end{figure}
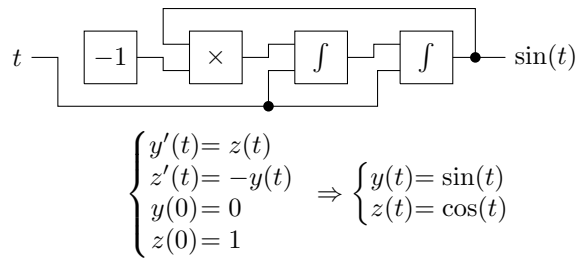

The original GPAC model introduced by Shannon has the feature that it works
in \emph{real time}: for example if the input $t$ is updated in the GPAC circuit of
Figure \ref{fig:gpac_example_sin}, then the output is immediately updated for the corresponding value of $t$.

Shannon himself realized that functions computed by a GPAC
are nothing more than solutions of a special class of polynomial
differential equations. In particular it can be shown that a function
$f: \R \to \R$ is computed by Shannon's model \cite{Sha41},
\cite{GC03} if and only if it is a (component of the) solution of a
polynomial initial value problem of the form 

\begin{equation}\label{eq:ode}
\left\{\begin{array}{@{}r@{}l}y'(t)&=p(y(t))\\y(t_0)&=y_0\end{array}\right.\qquad t\in\R
\end{equation}
where $p$ is a vector of polynomials and $y(t)$ is vector. In other
words, $f(t)=y_1(t)$, and $y_i'(t)=p_i(y(t))$
where $p_i$ is a multivariate polynomial.

Basically, the idea is just to introduce a variable for each output of
a basic unit, and write the corresponding ordinary differential
equation (ODE), and observe that it can be written as an ODE
with a polynomial right hand side. 

\begin{remark}
Technically speaking, the initial model studied by Claude Shannon in
\cite{Sha41} suffers from problems of being sometimes not fully
formally defined and some key proofs in that paper contained imprecisions. This has been observed and
refined later in several papers in particular in \cite{Pou74} in order
to get to a model where the result of Shannon stating the equivalence
of computable functions with differentially algebraic functions precisely hold. However,
the paper \cite{Pou74} had the problem that the GPAC model it presented had no direct
connection to circuits built using the units of Figure \ref{fig:gpac_circuit}, and therefore seemed
to lack the physical resemblance to Differential analysers (see \cite{GC03} for a discussion).
In the paper \cite{GC03} these problems are solved by formally defining rules to get allowable GPAC
circuits (removing bizarre possibilities that could happen in Shannon's original model
like linking the output of an adder unit to one of its inputs) which ensure that each GPAC
circuit has one or more outputs, which exist and are unique.
Moreover, the GPAC defined as in \cite{GC03} seems to capture all the functions computed
by the original model of Shannon and it is shown there that all outputs of a GPAC satisfy
Equation \eqref{eq:ode}. The GPAC model of \cite{GC03} was further refined in
\cite{Gra04}, where a simpler structure of the GPAC circuits is shown to be equivalent to that presented in \cite{GC03}.

Here, we consider the
formal, nice and clear definition of \cite{GC03} of GPACs, and for
this class there is a clear equivalence between GPACs and polynomial
initial value problems of the form \eqref{eq:ode}. 
\end{remark}

We say that
%Technically, the problems
%about later refinements of \cite{Sha41}, before \cite{GC03} are about
%discussions about when such a polynomial initial value correspond
%precisely to a differentially algebraic functions, but we prefer to
%stick to what is clear from the beginning, that is to say that GPAC's
%correspond to polynomial initial value problem, and that 
a function
$f: \R \to \R$ is generable (by a GPAC) if and only if it corresponds to some component of a
solution of such a polynomial initial value problem \eqref{eq:ode}.

The discussion on how to go from univariate to multivariate
functions, that is to say from functions $f: \R \to \R^m$ to functions
$f: \R^n \to \R^m$ is briefly discussed in \cite{Sha41}, but no clear definitions and results
for this case have been stated or proved previously, up to our
knowledge. This is the purpose of the current paper. Another objective of this paper is to introduce
basic measures of the resources used by a GPAC (in particular on the
growth of functions), which might be used in the future to
establish complexity results for functions generated with GPACs.

We introduce the notion of \emph{generable} functions which are solutions of a polynomial
initial-value problem (PIVP) defined with an ODE \eqref{eq:ode}, and generalize this
notion to several input variables. We prove that this class enjoys a number of
stability and robustness properties.

Notice that extending the GPAC model to deal with several variables
have also been considered in \cite{PZ17}: Analog
networks on function data steams are considered, and
their semantic is obtained as the fixed point of suitable operators on
continuous data streams. A characterization of generable functions
generalizing some of Shannon's results is also provided.

The work we present here is different in the sense that we are interested in measuring
resources used in the GPAC and that we try to stay as close to the original GPAC as possible
(e.g.~we do not introduce new types of units or of data as done in \cite{PZ17}).

The paper is organized as follows:
\begin{itemize}
\item Section \ref{sec:generable} will introduce the notion of \emph{generable} function,
in the unidimensional and  multidimensional case. % We finally introduce
% the class $\expanalog$.
\item Section \ref{sec:pivp_stable} will give some stability properties of
  the class of generable functions, mostly
stability by arithmetic operations, composition and ODE solving.
\item Section \ref{sec:pivp_analytic} will show that generable functions are always analytic
\item Section \ref{sec:helper_func} will give a list of useful generable functions, as a way to see what can be achieved with
generable functions. In particular, we prove that a wide class of
functions (including piecewise defined functions, or periodic
functions) 
can be uniformly approximated over their domain. Notice that
Shannon proved a similar result but only over a compact domain (using
basically Weierstrass's theorem) and for dimension $1$. Here,
unlike Shannon, we prove a uniform approximation (distance can be 
controlled and set arbitrary small), and over the full domain of the
functions (not only over compact domains). 
% \item Section \ref{sec:dp} proves our main result that is to say, that $\textbf{DP}(\expanalog)=\EXPTIME$. 
\item Section \ref{sec:generable_field} will discuss the issue of
  constants. We give a few properties of \emph{generable fields}
which are fields with an extra property related to generable
functions, used in the previous proofs, and we prove basically that
constants can always be chosen to be polynomial time computable numbers. 
\end{itemize}

The current paper is mainly based on some extensions of results
present in the chapter 2 of the PhD document of Amaury
Pouly\footnote{\url{https://pastel.archives-ouvertes.fr/tel-01223284}}
  \cite{TheseAmaury}. This PhD was defended on July 2015, but
  presented results are original and have not been published
  otherwise. Furthermore, we go further here than what is established
  in Chapter 2 of this PhD document.
  
 Some results of this paper are already stated, without proofs in \cite{BGP16}, with a reference
 pointing to a preprint which ultimately would lead to the current paper. The difference between
 the two papers is that this paper focus on the class of \emph{generable functions} by GPACs, while
 \cite{BGP16} focus on the class of \emph{computable functions} by GPACs
 (see \cite{BCGH07} for an overview of the distinction between
 generable and computable functions by a GPAC. The work presented in this paper and in \cite{BGP16}
 extends earliers results present in \cite{BCGH07} by considering the multivariate case and complexity). The class of generable functions
 is discussed in detail here and many properties are proved (stability by several operations,
 analyticity, existence of a strict hierarchy of subclasses, etc.) and several functions and
 techniques which can be used for ``analog programming'' are introduced. We also consider the amount of resources used by a GPAC to perform several operations involving generable functions.
 
 The paper \cite{BGP16} by its turn focus on the class of computable functions by a GPAC, which are defined
 with the use of generable functions, hence the need to cite several results about
 generable functions which are proved here. In \cite{BGP16} we show that the class of computable functions
 is well defined (and that we can take into account bounded resources) and that several different (yet intuitive) notions of computability for the GPAC
 all yield the same class of functions, showing that a well-defined class of computable functions
 exists for the GPAC, even if we restrict the resources used by a GPAC.

\subsection{Notations}

In this paper, $\R$ denotes the real numbers, $\Rp=[0,+\infty)$ the nonnegative real numbers,
$\N=\{0,1,2,\ldots\}$ the natural numbers, $\Z$ the integers, $\intinterv{a}{b}=\{a,a+1,\ldots,b\}$
the integers between $a$ and $b$, $\Q$ the rational numbers,
$\Rpoly$ the polynomial time computable real numbers \cite{Ko91}, $\Rgen$ the
smallest generable field (see Section~\ref{sec:generable_field}). $\MAT{n}{d}{\K}$
denotes the set of $n\times d$ matrices over the ring $\K$. For any set $X$,
$\powerset{X}$ denotes the powerset of $X$ and $\card{X}$ the cardinal of $X$. For any function $f$, $\dom{f}$ is
the domain of $f$, $\fiter{f}{n}$ the $n^{th}$ iterate of $f$, $\frestrict{f}{X}$ the
restriction of $f$ to $X$, $\jacobian{f}(x)$ denotes the Jacobian matrix of $f$ at $x$.
For any vector $y\in\R^n$ and $e\leqslant n$, $y_{1..e}=(y_1,\ldots,y_e)$ denotes the first $e$ components of $y$
and $\infnorm{y}=\max(|y_1|,\ldots,|y_n|)$ denotes the infinity norm.
For any $x_0\in\R^n$ and $r>0$, $\openball{x_0}{r}=\{x:\twonorm{x-x_0}<r\}$ denotes
the open of radius $r$ and center $p$ for the euclidean norm. Given a (multivariate)
polynomial $p$, $\degp{p}$ denotes its degree and $\sigmap{p}$ the sum of the absolute
value of its coefficients. We denote by $\K[\R^d]$ the set of polynomial functions in $d$ variables
with coefficients in $\K$. Given a vector of polynomial $p=(p_1,\ldots,p_k)$,
which we simply refer to as a polynomial, $\degp{p}=\max(\degp{p_1},\ldots,\degp{p_k})$
and $\sigmap{p}=\max(\sigmap{p_1},\ldots,\sigmap{p_k})$. We denote by $\K^k[\R^d]$ the set of vectors
of polynomial functions in $d$ variables of size $k$ with coefficients in $\K$.
In this article, we write $\poly$ to denote an unspecified polynomial. For any $x\in\R$,
$\sgn{x}$ denotes the sign of $x$, $\lfloor x\rfloor$ the integer part of $x$,
$\intp_k(x)=\max(0,\min(k,\lfloor x\rfloor))$, $\round{x}$ the nearest integer (undefined
for $n+\tfrac{1}{2}$).

\section{Generable functions}\label{sec:generable}

In this section, we will define a notion of function generated by a
PIVP. From previous discussions, they correspond to functions
generated by the General Purpose Analog Computers of Claude Shannon
\cite{Sha41}; 

This class of functions is closed by a number of natural operations such
as arithmetic operators or composition. In particular, we will see that those functions are always analytic.
The major property of this class is the stability by ODE solving: if $f$ is \emph{generable}
and $y$ satisfies $y'=f(y)$ then $y$ is generable. This means that we can design
differential systems where the right-hand side contains much more general functions
than polynomials, and this system can be rewritten to use polynomials
only.

Several of the results here are extensions to the multidimensional
case of results established in \cite{TheseDaniel}. Moreover, % Notice that
a noticeable difference is that here we are also talking about complexity,
whereas \cite{TheseDaniel} is often not precise about the growth of
functions as only motivated by computability theory.

In this section, $\K$ will always refer to a real field, for example $\K=\Q$.
The basic definitions work for any such field but the main results will require
some assumptions on $\K$. These assumptions will be formalized in
Definition \ref{def:generable_field}
and detailed in Section \ref{sec:generable_field}.

\subsection{Unidimensional case}

We start with the definition of generable functions from $\R$ to
$\R^n$.  Those are defined as the solution of some polynomial IVP
(PIVP) with an additional boundedness constraint. This will be of
course key to talk about complexity theory for the GPAC, since if no constraint is
put on the growth of functions, it is easy to see that arbitrary
growing functions can be generated by a GPAC (or, equivalently, by a PIVP), such as the
$t \mapsto \exp(\exp(\dots \exp(t)))$ function. Indeed consider the following system
\[
\left\{\begin{array}{@{}r@{}l}y_1(0)&=1\\y_2(0)&=1\\\ldots&\\ y_n(0)&=1
\end{array}\right.
\qquad
\left\{\begin{array}{@{}r@{}l}y_1'(t)&=y_1(t)\\y_2(t)&=y_1(t)y_2(t)\\\ldots&\\
y_d'(t)&=y_1(t)\cdots y_n(t)
\end{array}\right.
\]
This system has the form \eqref{eq:ode} and can be solved explicitly. It has the following solution:
\[y_1(t)=e^t\qquad y_{n+1}(t)=e^{y_n(t)-1}
\qquad y_d(t)=e^{e^{\iddots^{e^{e^t}-1}}-1}\label{eq:exp_tower}
\]

Hence, although previous papers about the GPAC studied computability,
like \cite{Sha41}, \cite{Pou74}, \cite{GC03} or \cite{Gra04},
they said nothing about complexity. And as the previous example shows,
the output of a GPAC can have an arbitrarily high growth and thus arbitrarily high complexity.
Hence, to distinguish between reasonable GPACs, it is natural to bound the growth of
the outputs of a GPAC and use those bounds as a complexity measure. Moreover, as we
have shown in \cite{BGP12}, we can compute (in the Computable Analysis setting
\cite{newcomputationalparadigms}) the solution of a PIVP in time polynomial in the growth
bound of the PIVP. This motivates the following definition (in what follows, $\K[\R^n]$ denotes polynomial functions
with $n$ variables and with coefficients in $\K$, where variables live
in $\R^n$ and\footnote{We write $[a,b]$ (respectively:
  $]a,b]$, $[a,b[$, $]a,b[$) for closed
  (resp. semi-closed, open)  interval.} $\Rp=[0,+\infty [$): 

\begin{definition}[Generable function]\label{def:gpac_generable}
Let $\mtt{sp}:\Rp\rightarrow\Rp$ be a nondecreasing function and $f:\R\rightarrow\R^m$.
We say that $f\in\gval[\K]{\mtt{sp}}$ if and only
if there exists $n\geqslant m$, $y_0\in\K^n$ and $p\in\K^n[\R^n]$ such that there is a
(unique) $y:\R\rightarrow\R^n$ satisfying for all time $t\in\R$:
\begin{itemize}
\item $y'(t)=p(y(t))$ and $y(0)=y_0$\hfill$\blacktriangleright$ $y$ satisfies a differential equation
\item $f(t)=y_{1..m}(t)=(y_1(t),\ldots,y_m(t))$\hfill$\blacktriangleright$ $f$ is a component of $y$
\item $\infnorm{y(t)}\leqslant \mtt{sp}(|t|)$\hfill$\blacktriangleright$ $y$ is bounded by $\mtt{sp}$
\end{itemize}
The set of all generable functions is denoted by
$\gval[\K]{}=\bigcup_{\mtt{sp}:\R\rightarrow\Rp}\gval[\K]{\mtt{sp}}$.
When this is not ambiguous, we do not specify the field $\K$ and write $\gval{\mtt{sp}}$
or simply $\gval{}$. We will also write $\gval{\poly}$ (or $\gval[\K]{\poly}$)  as a synonym of
$\gval{\mtt{sp}}$ (respectively: $\gval[\K]{\mtt{sp}}$) for some
polynomial $\mtt{sp}$ (see coming Remark \ref{rq:vingtquatre}). 
\end{definition}

\begin{remark}[Uniqueness]\label{rem:gpac_uniqueness_regularity}
The uniqueness of $y$ in Definition \ref{def:gpac_generable} is a consequence of the
Cauchy-Lipschitz theorem. Indeed a polynomial is a locally Lipschitz function.
\end{remark}

\begin{remark}[Regularity]\label{rem:gpac_regularity}
As a consequence of the Cauchy-Lipschitz theorem, the solution $y$ in
Definition \ref{def:gpac_generable} is at least $C^\infty$. It can be seen that
it is in fact real analytic, as it is the case for analytic differential equations
in general \cite{Arn78}.
\end{remark}

\begin{remark}[Multidimensional output]\label{rem:multidim_out}
It should be noted that although Definition \ref{def:gpac_generable} defines generable
functions with output in $\R^m$, it is completely equivalent to say that $f$ is
generable if and only if each of its component is (\emph{i.e.} $f_i$ is generable for every $i$);
and restrict the previous definition to functions from $\R$ to $\R$ only.
Also note that if $y$ is the solution from Definition \ref{def:gpac_generable},
then obviously $y$ is generable.
\end{remark}

Although this might not be obvious at first glance, this class contains polynomials,
and contains many elementary functions such as the exponential function, as well
as the trigonometric functions. Intuitively, all functions in this class can be
computed efficiently by classical machines, where $\mtt{sp}$ measures
some ``hardness'' in computing the function.  We took care to choose the constants
such as the initial time and value, and the coefficients of the polynomial in $\K$.
The idea is to prevent any uncomputability from arising by the choice of
uncomputable real numbers in the constants.

\begin{example}[Polynomials are generable]\label{ex:poly_generable}
Let $p$ in $\Q(\pi)[\R]$. For example $p(x)=x^7-14x^3+\pi^2$. We will show that $p\in\gval[\K]{\mtt{sp}}$
where $\mtt{sp}(x)=x^7+14x^3+\pi^2$. We need to rewrite $p$ with a polynomial differential equation:
we immediately get that $p(0)=\pi^2$ and $p'(x)=7x^6-42x^2$.
However, we cannot express $p'(x)$ as a polynomial of $p(x)$ only: we need access
to $x$. This can be done by introducing a new variable $v(x)$ such that $v(x)=x$.
Indeed, $v'(x)=1$ and $v(0)=0$. Finally we get:
\[\left\{\begin{array}{@{}r@{}l}p(0)&=\pi^2\\p'(x)&=7v(x)^6-42v(x)^2\end{array}\right.
\qquad
\left\{\begin{array}{@{}r@{}l}v(0)&=0\\v'(x)&=1\end{array}\right.
\]
Formally, we define $y(x)=(p(x),x)$ and show that $y(0)=(\pi^2,0)\in\K^2$ and $y'(x)=p(y(x))$
where $p_1(a,b)=7b^6-42b^2$ and $p_2(a,b)=1$. Also note that the coefficients are clearly in $\Q(\pi)$).
We also need to check that $\mtt{sp}$ is a bound on $\infnorm{y(x)}$ (for $x\geq 0$):
\[\infnorm{y(x)}=\max(|x|,|x^7-14x^3+\pi^2|)\leq\mtt{sp}(x)\]
This shows that $p\in\gval[\K]{\mtt{sp}}$ and can be generalized to show that any polynomial
in one variable is generable.
\end{example}

\begin{example}[Some generable elementary functions]\label{ex:elem_func_gpac}
We will check that $\exp\in\gval[\Q]{\exp}$ and $\sin,\cos,\tanh\in\gval[\Q]{\lambdafunex{x}{1}}$.
We will also check that $\arctan\in\gval[\Q]{\lambdafunex{x}{\max(x,\frac{\pi}{2})}}$.
\begin{itemize}
\item A characterization of the exponential function is the following: $\exp(0)=1$ and $\exp'=\exp$.
Since $\infnorm{\exp}=\exp$, it is immediate that $\exp\in\gval[\Q]{\exp}$.
The exponential function might be the simplest generable function.
\item The sine and cosine functions are related by their derivatives since $\sin'=\cos$ and
$\cos'=-\sin$. Also $\sin(0)=0$ and $\cos(0)=1$, and $\infnorm{(\sin(x),\cos(x))}\leqslant 1$,
we get that  $\sin,\cos\in\gval[\Q]{\lambdafunex{x}{1}}$ with the same system.
\item The hyperbolic tangent function will be very useful in this paper. Is it known to satisfy the very simple
polynomial differential equation $\tanh'=1-\tanh^2$. Since $\tanh(0)=0$ and $|\tanh(x)|\leqslant1$,
this shows that $\tanh\in\gval[\Q]{\lambdafunex{x}{1}}$.
\item Another very useful function will be the arctangent function. A possible definition of the arctangent
is the unique function satisfying $\arctan(0)=0$ and $\arctan'(x)=\frac{1}{1+x^2}$.
Unfortunately this is neither a polynomial in $\arctan(x)$ nor in $x$. A common trick is
to introduce a new variable $z(x)=\frac{1}{1+x^2}$ so that $\arctan'(x)=z(x)$, in the hope
that $z$ satisfies a PIVP. This is the case since $z(0)=1$ and $z'(x)=\frac{-2x}{(1+x^2)^2}=-2xz(x)^2$
which is a polynomial in $z$ and $x$. We introduce a new variable for $x$ as we did in the previous examples.
Finally, define $y(x)=(\arctan(x),\frac{1}{1+x^2},x)$
and check that $y(0)=(0,1,0)$ and $y'(x)=(y_2(x),-2y_3(x)y_2(x)^2,1)$. The $\frac{\pi}{2}$
bound on $\arctan$ is a textbook property, and the bound on the other variables is immediate.
\end{itemize}
\end{example}

Not only the class of generable functions contains many classical and useful functions,
but it is also closed under many operations. We will see that the sum, difference, product and composition
of generable functions are still generable. Before moving on to the properties of this class,
we need to mention the easily overlooked issue about constants, best illustrated as
an example.

\begin{example}[The issue of constants]
Let $\K$ be a field, containing at least the rational numbers. Assume
that generable functions are closed under composition, that is for any two $f,g\in\gval[\K]{}$
we have $f\circ g\in\gval[\K]{}$. Let $\alpha\in\K$ and $g=\lambdafunex{x}{\alpha}$.
Then for any $(f:\R\rightarrow\R)\in\gval[\K]{}$,
$f\circ g\in\gval[\K]{}$. Using Definition \ref{def:gpac_generable}, we
get that $f(g(0))\in\K$ which means $f(\alpha)\in\K$ for any $\alpha\in\K$.
In other words, $\K$ must satisfy the following property:
\[f(\K)\subseteq\K\qquad\forall f\in\gval[\K]{}\]
This property does not hold for general fields.
\end{example}

The example above outlines the need for a stronger hypothesis on $\K$ if
we want to be able to compose functions. Motivated by this example, we introduce the
following notion of \emph{generable field}.

\begin{definition}[Generable field]\label{def:generable_field}
A field $\K$ is \emph{generable} if and only if $\Q\subseteq\K$ and for any
$\alpha\in\K$ and $(f:\R\rightarrow\R)\in\gval[\K]{}$, we have $f(\alpha)\in\K$.
\end{definition}

\begin{tikzpicture}[baseline=-.75ex] \node[shape=regular polygon, regular polygon sides=3, inner sep=0pt, draw, thick] {\textbf{!}};\end{tikzpicture}%
\hfill%
\begin{minipage}{0.9\linewidth}
From now on, we will assume that $\K$ is a generable field. See Section \ref{sec:generable_field}
for more details on this assumption.
\end{minipage}

\begin{example}[Usual constants are generable]
In this paper, we will use again and again that some well-known constants
belong to any generable field. We detail the proof for $\pi$ and $e$:
\begin{itemize}
\item It is well-known that $\frac{\pi}{4}=\arctan(1)$. We saw in Example \ref{ex:elem_func_gpac}
    that $\arctan\in\gval[\Q]{}$ and since $1\in\K$ we get that $\frac{\pi}{4}\in\K$ because
    $\K$ is a generable field. We conclude that $\pi\in\K$ because $\K$ is a field and $4\in\K$.
\item By definition, $e=\exp(1)$ and $\exp\in\gval[\Q]{}$, so $e\in\K$ because $\K$
    is a generable field and $1\in\K$.
\end{itemize}
\end{example}

\begin{lemma}[Arithmetic on generable functions]\label{lem:gpac_gen_op}
Let $f\in\gval{\mtt{sp}}$ and $g\in\gval{\mtt{\ovl{sp}}}$.
\begin{itemize}
\item $f+g, f-g\in\gval{\mtt{sp}+\ovl{\mtt{sp}}}$
\item $fg\in\gval{\max(\mtt{sp},\ovl{\mtt{sp}},\mtt{sp}\thinspace\ovl{\mtt{sp}})}$
\item $\frac{1}{f}\in\gval{\max(\mtt{sp},\mtt{sp}')}$ where $\mtt{sp}'(t)=\frac{1}{|f(t)|}$, if $f$ never cancels
\item $f\circ g\in\gval{\max(\ovl{\mtt{sp}},\mtt{sp}\circ\ovl{\mtt{sp}})}$
\end{itemize}
Note that the first three items only require that $\K$ is a field, whereas the
last item also requires $\K$ to be a generable field.
\end{lemma}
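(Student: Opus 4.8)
The plan is to handle each of the four operations by exhibiting an explicit PIVP whose solution contains the desired function as a component, starting from the PIVPs that witness $f\in\gval{\mtt{sp}}$ and $g\in\gval{\ovl{\mtt{sp}}}$. So let $u:\R\to\R^n$ satisfy $u'=p(u)$, $u(0)=u_0\in\K^n$, $f=u_{1..m}$, $\infnorm{u(t)}\leqslant\mtt{sp}(|t|)$; and similarly let $v:\R\to\R^k$ witness $g$ with polynomial $q$, initial value $v_0\in\K^k$, and bound $\ovl{\mtt{sp}}$. In each case I must produce the new dimension, the new polynomial, the new initial condition (checking it lies in the appropriate $\K^N$), verify the differential equation is genuinely polynomial in the new variables, identify the target function as the first components, and finally verify the norm bound.

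First, for $f+g$ and $f-g$: take the concatenated system on $\R^{n+k}$ with variables $(u,v)$, equations $u'=p(u)$, $v'=q(v)$, initial condition $(u_0,v_0)$, and then append an extra block of $m$ variables $w$ with $w'=p_{1..m}(u)\pm q_{1..m}(v)$ and $w(0)=u_0{}_{1..m}\pm v_0{}_{1..m}$; then $w=f\pm g$. The right-hand side is polynomial in $(u,v,w)$ and the constants remain in $\K$. For the bound, $\infnorm{(u,v,w)(t)}\leqslant\max(\mtt{sp}(|t|),\ovl{\mtt{sp}}(|t|),\mtt{sp}(|t|)+\ovl{\mtt{sp}}(|t|))=\mtt{sp}(|t|)+\ovl{\mtt{sp}}(|t|)$ since $\mtt{sp},\ovl{\mtt{sp}}\geqslant0$, giving membership in $\gval{\mtt{sp}+\ovl{\mtt{sp}}}$. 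The product is the same construction with $w'=(pq)_{ij}$-style products obtained from the Leibniz rule: for $h=f_if_j$-type components one writes $h'=\sum$ of products of $u$'s, $v$'s, so append $w$ with $w_\ell' = \sum$ of the relevant polynomial expressions; $w=fg$ componentwise, and $\infnorm{w(t)}\leqslant\mtt{sp}(|t|)\ovl{\mtt{sp}}(|t|)$, so with the sub-blocks one gets the bound $\max(\mtt{sp},\ovl{\mtt{sp}},\mtt{sp}\,\ovl{\mtt{sp}})$. For $1/f$ (with $f$ scalar, never zero), introduce $w$ with $w'=-w^2 f'=-w^2 p_1(u)$ and $w(0)=1/u_0{}_1\in\K$ (here using only that $\K$ is a field); then $w=1/f$ by uniqueness, and $\infnorm{(u,w)(t)}\leqslant\max(\mtt{sp}(|t|),1/|f(t)|)$.

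The composition $f\circ g$ is the one genuine obstacle, and is where the generable-field hypothesis enters. The naive idea is to run $v'=q(v)$ to produce $g$, and simultaneously run a copy of the $u$-system but "reparametrised" so that its time variable tracks $g(t)$ rather than $t$. Concretely, since $f$ is scalar-in / vector-out, write $z(t)=u(g(t))$; by the chain rule $z'(t)=p(z(t))\,g'(t)=p(z(t))\,q_1(v(t))$, which is polynomial in $(z,v)$. The subtlety when $g$ is vector-valued is that $\gval{}$ only defines composition for compatible arities, so here $g:\R\to\R$ and the chain rule applies cleanly; I should state this restriction (consistent with the univariate setting of this subsection). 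The initial condition is $z(0)=u(g(0))=u(v_1(0))$; this point is $\bigl(u_1(g(0)),\ldots,u_n(g(0))\bigr)$, and each coordinate is $(\text{component of }u)$ evaluated at the constant $g(0)=v_1(0)\in\K$ — and each component of $u$ is itself a generable function over $\K$, so by the definition of generable field each such value lies in $\K$. This is exactly the closure property that fails for general fields, and it is the reason the hypothesis is needed. With $z$ in hand, $f\circ g = z_{1..m}$, and the bound is $\infnorm{(v,z)(t)}\leqslant\max(\ovl{\mtt{sp}}(|t|),\infnorm{u(g(t))})\leqslant\max(\ovl{\mtt{sp}}(|t|),\mtt{sp}(|g(t)|))\leqslant\max(\ovl{\mtt{sp}}(|t|),\mtt{sp}(\ovl{\mtt{sp}}(|t|)))$, using $|g(t)|\leqslant\ovl{\mtt{sp}}(|t|)$ and monotonicity of $\mtt{sp}$; that is the claimed bound $\max(\ovl{\mtt{sp}},\mtt{sp}\circ\ovl{\mtt{sp}})$. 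The remaining routine verifications — that all appended initial values stay in $\K^N$, that the enlarged right-hand sides are polynomials with coefficients in $\K$, and uniqueness of the enlarged solution (immediate from Remark~\ref{rem:gpac_uniqueness_regularity}) — I would dispatch briefly.
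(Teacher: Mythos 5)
Your proposal is correct and follows essentially the same route as the paper's proof: explicit augmented PIVPs built from the Leibniz rule for the product, $w'=-w^2f'$ for the inverse, and the chain rule $z'=p(z)\,q_1(v)$ for the composition, with the generable-field hypothesis invoked exactly where the paper needs it, namely to place the initial value $u(g(0))=u(v_{0,1})$ in $\K^n$, the other initial conditions requiring only that $\K$ is a field. The bound computations also match the paper's ($\mtt{sp}+\ovl{\mtt{sp}}$, $\max(\mtt{sp},\ovl{\mtt{sp}},\mtt{sp}\,\ovl{\mtt{sp}})$, $\max(\mtt{sp},1/|f|)$, $\max(\ovl{\mtt{sp}},\mtt{sp}\circ\ovl{\mtt{sp}})$), so there is nothing substantive to add.
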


\begin{proof}
Assume that $f:\R\rightarrow\R^m$ and $g:\R\rightarrow\R^\ell$.
We will make a detailed proof of the product and composition cases, since the
sum and difference are much simpler. The intuition follows from basic differential
calculus and the chain rule: $(fg)'=f'g+fg'$ and $(f\circ g)'=g'(f'\circ g)$.
Note that $\ell=1$ for the composition to make sense and $\ell=m$ for the product to make
sense (componentwise). The only difficulty
in this proof is technical: the differential equation may include more variables
than just the ones computing $f$ and $g$. This requires a bit of notation to
stay formal.
Apply Definition \ref{def:gpac_generable} to $f$ and $g$ to get $p,\ovl{p},y_0,\ovl{y}_0$.
Consider the following systems:
\[
\left\{\begin{array}{r@{}l}y(0)&=y_0\\y'(t)&=p(y(t))\\
\ovl{y}(0)&=\ovl{y}_0\\\ovl{y}'(t)&=\ovl{p}(\ovl{y}(t))\end{array}\right.
\qquad\left\{\begin{array}{r@{}l}z_i(0)&=y_{0,i}\ovl{y}_{0,i}\\
z_i'(t)&=p_i(y(t))\bar{y}_i(t)+y_i(t)\ovl{p}_i(\bar{y}(t))\\
u_i(0)&=f_i(\ovl{y}_{0,1})\\
u_i'(t)&=\ovl{p}_i(\ovl{y}(t))p(u(t))\end{array}\right.
\qquad i\in\intinterv{1}{m}
\]
Those systems are clearly polynomial. By construction, $u$ and $z$ exist over $\R$
since $z_i(t)=y_i(t)\bar{y}_i(t)$ satisfies the differential equation over
$\R$ (indeed $y$ and $\bar{y}$ exist over $\R$). Similarly, $u_i(t)=y_i(\bar{y}(t))$
exists over $\R$ and satisfies the equation.
Remember that by definition, for any $i\in\intinterv{1}{m}$ and $j\in\intinterv{1}{\ell}$,
$f_i(t)=y_i(t)$ and $g_j(t)=z_j(t)$. Consequently, $z_i(t)=f_i(t)g_i(t)$
and $u_i(t)=f_i(g_1(t))$.

Also by definition, $\infnorm{y(t)}\leqslant\mtt{sp(t)}$ and $\infnorm{\ovl{y}(t)}\leqslant\mtt{\ovl{sp}}(t)$.
It follows that $|z_i(t)|\leqslant|y_i(t)||\ovl{y}_i(t)|\leqslant\mtt{sp}(t)\mtt{\ovl{sp}}(t)$,
and similarly $|u_i(t)|\leqslant |f_i(g_1(t))|\leqslant\mtt{sp}(g_1(t))\leqslant\mtt{sp}(\mtt{\ovl{sp}}(t))$.

The case of $\frac{1}{g}$ is very similar: define $g=\frac{1}{f}$ then $g'=-f'g^2$.
The only difference is that we don't have an a priori bound on $g$ except $\frac{1}{|f|}$,
and we must assume that $f$ is never zero for $g$ to be defined over $\R$.

Finally, a very important note about constants and coefficients which appear in those systems.
It is clear that $y_{0,i}\ovl{y}_{0,i}\in\K$ because $\K$ is a field. Similarly,
for $\frac{1}{f}$ we have $\frac{1}{f(0)}=\frac{1}{y_{0,1}}\in\K$. However,
there is no reason in general for $f_i(\ovl{y}_{0,1})$ to belong to $\K$,
and this is where we need the assumption that $\K$ is generable.
\end{proof}

\subsection{Multidimensional case}

We introduced generable functions as a special kind of function from $\R$ to
$\R^n$. We saw that this class nicely contains polynomials, however it comes
with two defects which prevents other interesting functions from being generable:
\begin{itemize}
\item The domain of definition is $\R$: this is very strong, since other ``easy''
  targets such as $\tan$, $\log$ or even $\lambdafunex{x}{\frac{1}{x}}$ cannot be defined,
  despite satisfying polynomial differential equations.
\item The domain of definition is one-dimensional: it would be useful to define
  generable functions in several variables, like multivariate polynomials.
\end{itemize}
The first issue can be dealt with by adding restrictions on the domain where the
differential equation holds, and by shifting the initial condition
($0$ might not belong to the domain). Overcoming the second problem is
less obvious.

The examples below give two intuitions before introducing the formal definition.
The first example draws inspiration from multivariate calculus and differential
form theory. The second example focuses on GPAC composition. As we will see,
both examples highlight the same properties of multidimensional generable functions.

\begin{figure}
\begin{minipage}{0.45\linewidth}%
\begin{center}
\begin{tikzpicture}[scale=1.5]
 \begin{scope}[shift={(0,0)},rotate=0]
  \draw (0,0) -- (0.7,0) -- (0.7,0.7) -- (0,0.7) -- (0,0);
  \node at (.35,.35) {$\int$};
 \end{scope}
 \draw (.7,.35) -- (1.4,.35);
 \node[anchor=west] at (1.4,.35) {$f(t)=e^t$};
 \draw (1,.35) -- (1,1) -- (-.3,1) -- (-.3,.525) -- (0,.525);
 \fill (1,.35) circle[radius=.07];
 \draw (-.7,.175) -- (0,.175);
 \node[anchor=east] at (-.7,.175) {$t$};
\end{tikzpicture}
\end{center}
\caption{Simple GPAC}
\label{fig:multidim_gpac_simple_int}
\end{minipage}%
\hfill
\begin{minipage}{0.60\linewidth}%
\begin{center}
\begin{tikzpicture}[scale=1.5]
  \begin{scope}[shift={(0,0)},rotate=0]
    \begin{scope}[shift={(0,0)},rotate=0]
    \draw (0,0) -- (0.7,0) -- (0.7,0.7) -- (0,0.7) -- (0,0);
    \node at (.35,.35) {$\int$};
    \end{scope}
    \draw (.7,.35) -- (1.4,.35);
    \draw (-.7,.175) -- (0,.175);
    \node[anchor=east] at (-.7,.175) {$x_1$};
  \end{scope}
  \begin{scope}[shift={(0,1)},rotate=0]
    \begin{scope}[shift={(0,0)},rotate=0]
    \draw (0,0) -- (0.7,0) -- (0.7,0.7) -- (0,0.7) -- (0,0);
    \node at (.35,.35) {$\int$};
    \end{scope}
    \draw (.7,.35) -- (1.4,.35);
    \draw (-.7,.175) -- (0,.175);
    \node[anchor=east] at (-.7,.175) {$x_2$};
  \end{scope}
  \begin{scope}[shift={(2,0.5)}]
    \draw (0,0) -- (0.7,0) -- (0.7,0.7) -- (0,0.7) -- (0,0);
    \node at (.35,.35) {$+$};
  \end{scope}
  \begin{scope}[shift={(-2.1,0.5)}]
    \draw (0,0) -- (0.7,0) -- (0.7,0.7) -- (0,0.7) -- (0,0);
    \node at (.35,.35) {$1$};
    \draw (.7,.35) -- (1.4,.35);
  \end{scope}
  \draw (-0.7,0.85) -- (-0.35,0.85) -- (-0.35,1.525) -- (0,1.525);
  \fill (-0.35,0.85) circle[radius=.07];
  \draw (-0.35,0.85) -- (-0.35,0.525) -- (0,0.525);
  \draw (1.4,.35) -- (1.4,0.73) -- (2,0.73);
  \draw (1.4,1.35) -- (1.4,0.96) -- (2,0.96);
  \draw (2.7,0.85) -- (3.1,0.85);
  \node[anchor=west] at (3.1,0.85) {$g$};
\end{tikzpicture}
\end{center}
\caption{GPAC with two inputs}
\label{fig:multidim_gpac_simple}
\end{minipage}%
\end{figure}

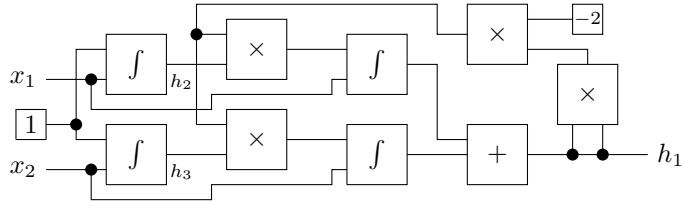
\begin{figure}
\begin{center}
\begin{tikzpicture}[scale=0.8]
 \begin{scope}[shift={(10,0)},rotate=0]
  \draw (0,0) -- (1,0) -- (1,1) -- (0,1) -- (0,0);
  \node at (.5,.5) {$+$};
  \draw (1,.5) -- (3,.5);
  \node[anchor=west] at (3,.5) {$h_1$};
 \end{scope}
 \begin{scope}[shift={(8,0)},rotate=0]
  \draw (0,0) -- (1,0) -- (1,1) -- (0,1) -- (0,0);
  \node at (.5,.5) {$\int$};
  \draw (1,.5) -- (2,.5);
 \end{scope}
 \begin{scope}[shift={(8,1.5)},rotate=0]
  \draw (0,0) -- (1,0) -- (1,1) -- (0,1) -- (0,0);
  \node at (.5,.5) {$\int$};
  \draw (1,.5) -- (1.5,.5) -- (1.5,-0.75) -- (2,-0.75);
 \end{scope}
 \begin{scope}[shift={(11.5,1)},rotate=0]
  \draw (0,0) -- (1,0) -- (1,1) -- (0,1) -- (0,0);
  \node at (.5,.5) {$\times$};
  \draw (0.25, 0) -- (0.25,-.5);
  \draw (0.75, 0) -- (0.75,-.5);
  \fill (0.25,-.5) circle[radius=.1];
  \fill (0.75,-.5) circle[radius=.1];
  \draw (0.5, 1) -- (.5, 1.25) -- (-.5,1.25);
 \end{scope}
 \begin{scope}[shift={(11.75,2.5)},rotate=0]
  \draw (0,0) -- (.5,0) -- (.5,.5) -- (0,.5) -- (0,0);
  \node at (.25,.25) {$\scriptstyle -2$};
  \draw (0,.25) -- (-.75,.25);
 \end{scope}
 \begin{scope}[shift={(10,2)},rotate=0]
  \draw (0,0) -- (1,0) -- (1,1) -- (0,1) -- (0,0);
  \node at (.5,.5) {$\times$};
  \draw (0,.5) -- (-.5,.5) -- (-.5,1) -- (-4.5,1) -- (-4.5,-1);
 \end{scope}
 \begin{scope}[shift={(6,1.75)},rotate=0]
  \draw (0,0) -- (1,0) -- (1,1) -- (0,1) -- (0,0);
  \node at (.5,.5) {$\times$};
  \draw (1,.5) -- (2,.5);
  \draw (0,.75) -- (-.5,.75);
  \fill (-.5,.75) circle[radius=.1];
 \end{scope}
 \begin{scope}[shift={(4,0)},rotate=0]
  \draw (0,0) -- (1,0) -- (1,1) -- (0,1) -- (0,0);
  \node at (.5,.5) {$\int$};
  \draw (1,.5) -- (2,.5);
  \draw (0,.25) -- (-1,.25);
  \node[anchor=east] at (-1,.25) {$x_2$};
  \node [anchor=north] at (1.25,.5) {$\scriptstyle h_3$};
 \end{scope}
 \begin{scope}[shift={(6,0.25)},rotate=0]
  \draw (0,0) -- (1,0) -- (1,1) -- (0,1) -- (0,0);
  \node at (.5,.5) {$\times$};
  \draw (1,.5) -- (2,.5);
  \draw (0,.75) -- (-.5,.75);
 \end{scope}
 \begin{scope}[shift={(4,1.5)},rotate=0]
  \draw (0,0) -- (1,0) -- (1,1) -- (0,1) -- (0,0);
  \node at (.5,.5) {$\int$};
  \draw (1,.5) -- (2,.5);
  \draw (0,.25) -- (-1,.25);
  \node[anchor=east] at (-1,.25) {$x_1$};
  \node [anchor=north] at (1.25,.5) {$\scriptstyle h_2$};
 \end{scope}
 \begin{scope}[shift={(2.5,.75)},rotate=0]
  \draw (0,0) -- (.5,0) -- (.5,.5) -- (0,.5) -- (0,0);
  \node at (.25,.25) {$1$};
  \draw (.5,.25) -- (1,.25) -- (1,1.5) -- (1.5,1.5);
  \draw (1,.25) -- (1,0) -- (1.5,0);
  \fill (1,.25) circle[radius=.1];
 \end{scope}
 \begin{scope}[shift={(3.75, 1.75)},rotate=0]
  \draw (0,0) -- (0, -.5) -- (2, -.5) -- (2,-.25) -- (4,-.25) -- (4,0) -- (4.25,0);
  \fill (0,0) circle[radius=.1];
 \end{scope}
 \begin{scope}[shift={(3.75, .25)},rotate=0]
  \draw (0,0) -- (0, -.5) -- (2, -.5) -- (2,-.25) -- (4,-.25) -- (4,0) -- (4.25,0);
  \fill (0,0) circle[radius=.1];
 \end{scope}
\end{tikzpicture}
\end{center}
\caption{A more involved multidimensional GPAC}
\label{fig:multidim_gpac_complex}
\end{figure}

\begin{figure}
\begin{center}
\begin{tikzpicture}[scale=0.8]
 \begin{scope}[shift={(0,0)}]
  \begin{scope}[shift={(0,0)}]
   \begin{scope}[shift={(0, 0)},rotate=0]
    \draw (0,0) -- (1,0) -- (1,1) -- (0,1) -- (0,0);
    \node at (.5,.5) {$\int$};
    \draw (1,.5) -- (1.5,.5) -- (1.5, .25) -- (2, .25);
    \draw (0,.25) -- (-1,.25);
    \node[anchor=east] at (-1,.25) {$t$};
    \draw (0,.75) -- (-1,.75);
    \node[anchor=east] at (-1,.75) {$u$};
   \end{scope}
   \begin{scope}[shift={(2,0)},rotate=0]
    \draw (0,0) -- (1,0) -- (1,1) -- (0,1) -- (0,0);
    \node at (.5,.5) {$\int$};
    \draw (1,.5) -- (1.5,.5);
    \node[anchor=west] at (1.5,.5) {$w$};
    \draw (0,.75) -- (-.5,.75);
    \node[anchor=south] at (-.5,.75) {$v$};
   \end{scope}
  \end{scope}
  \node at (5, .5) {$\leadsto$};
  \begin{scope}[shift={(8,0)}]
   \begin{scope}[shift={(0, 0)},rotate=0]
    \draw (0,0) -- (1,0) -- (1,1) -- (0,1) -- (0,0);
    \node at (.5,.5) {$\times$};
    \draw (1,.5) -- (1.5,.5) -- (1.5, .75) -- (2, .75);
    \draw (0,.25) -- (-1,.25);
    \node[anchor=east] at (-1,.25) {$u$};
    \draw (0,.75) -- (-1,.75);
    \node[anchor=east] at (-1,.75) {$v$};
   \end{scope}
   \begin{scope}[shift={(2,0)},rotate=0]
    \draw (0,0) -- (1,0) -- (1,1) -- (0,1) -- (0,0);
    \node at (.5,.5) {$\int$};
    \draw (1,.5) -- (1.5,.5);
    \node[anchor=west] at (1.5,.5) {$w$};
    \draw (0,.25) -- (-.5,.25);
    \node[anchor=north] at (-.5,.25) {$t$};
   \end{scope}
  \end{scope}
 \end{scope}
 \begin{scope}[shift={(0,-3)}]
  \begin{scope}[shift={(0,0)}]
   \begin{scope}[shift={(0, 0)},rotate=0]
    \draw (0,0) -- (1,0) -- (1,1) -- (0,1) -- (0,0);
    \node at (.5,.5) {$\times$};
    \draw (1,.5) -- (1.5,.5) -- (1.5, .25) -- (2, .25);
    \draw (0,.25) -- (-1,.25);
    \node[anchor=east] at (-1,.25) {$v$};
    \draw (0,.75) -- (-1,.75);
    \node[anchor=east] at (-1,.75) {$u$};
   \end{scope}
   \begin{scope}[shift={(2,0)},rotate=0]
    \draw (0,0) -- (1,0) -- (1,1) -- (0,1) -- (0,0);
    \node at (.5,.5) {$\int$};
    \draw (1,.5) -- (1.5,.5);
    \node[anchor=west] at (1.5,.5) {$y$};
    \draw (0,.75) -- (-.5,.75);
    \node[anchor=south] at (-.5,.75) {$w$};
   \end{scope}
  \end{scope}
  \node at (5, .5) {$\leadsto$};
  \begin{scope}[shift={(11,0)}]
   \begin{scope}[shift={(-2, 1)},rotate=0]
    \draw (0,0) -- (1,0) -- (1,1) -- (0,1) -- (0,0);
    \node at (.5,.5) {$\times$};
    \draw (1,.5) -- (1.5,.5) -- (1.5,.75) -- (2,.75);
   \end{scope}
   \begin{scope}[shift={(0, 1)},rotate=0]
    \draw (0,0) -- (1,0) -- (1,1) -- (0,1) -- (0,0);
    \node at (.5,.5) {$\int$};
    \draw (1,.5) -- (1.5,.5) -- (1.5, -.25) -- (2, -.25);
   \end{scope}
   \begin{scope}[shift={(-2, -1)},rotate=0]
    \draw (0,0) -- (1,0) -- (1,1) -- (0,1) -- (0,0);
    \node at (.5,.5) {$\times$};
    \draw (1,.5) -- (1.33,.5) -- (1.33,.75) -- (2,.75);
   \end{scope}
   \begin{scope}[shift={(0, -1)},rotate=0]
    \draw (0,0) -- (1,0) -- (1,1) -- (0,1) -- (0,0);
    \node at (.5,.5) {$\int$};
    \draw (1,.5) -- (1.5,.5) -- (1.5, 1.25) -- (2, 1.25);
   \end{scope}
   \begin{scope}[shift={(2,0)},rotate=0]
    \draw (0,0) -- (1,0) -- (1,1) -- (0,1) -- (0,0);
    \node at (.5,.5) {$+$};
    \draw (1,.5) -- (1.5,.5);
    \node[anchor=west] at (1.5,.5) {$y$};
   \end{scope}
   \draw (-4, 1.25) -- (-2,1.25);
   \draw (-2.5, 1.25) -- (-2.5, -.25) -- (-2, -.25);
   \fill (-2.5, 1.25) circle[radius=0.1];
   \node[anchor=east] at (-4, 1.25) {$w$};
   \draw (-4,1.75) -- (-2,1.75);
   \node[anchor=east] at (-4, 1.75) {$u$};
   \draw (-4,-.75) -- (-2,-.75);
   \node[anchor=east] at (-4, -.75) {$v$};
   \draw (-3,-.75) -- (-3,.75) -- (-.5,.75) -- (-.5,1.25) -- (0,1.25);
   \fill (-3,-.75) circle[radius=0.1];
   \draw (-3.5,1.75) -- (-3.5,.25) -- (-.33,.25) -- (-.33,-.75) -- (0,-.75);
   \fill (-3.5,1.75) circle[radius=0.1];
  \end{scope}
 \end{scope}
 \begin{scope}[shift={(0,-6.5)}]
  \begin{scope}[shift={(0,0)}]
   \begin{scope}[shift={(0, 0)},rotate=0]
    \draw (0,0) -- (1,0) -- (1,1) -- (0,1) -- (0,0);
    \node at (.5,.5) {$+$};
    \draw (1,.5) -- (1.5,.5) -- (1.5, .25) -- (2, .25);
    \draw (0,.25) -- (-1,.25);
    \node[anchor=east] at (-1,.25) {$v$};
    \draw (0,.75) -- (-1,.75);
    \node[anchor=east] at (-1,.75) {$u$};
   \end{scope}
   \begin{scope}[shift={(2,0)},rotate=0]
    \draw (0,0) -- (1,0) -- (1,1) -- (0,1) -- (0,0);
    \node at (.5,.5) {$\int$};
    \draw (1,.5) -- (1.5,.5);
    \node[anchor=west] at (1.5,.5) {$y$};
    \draw (0,.75) -- (-.5,.75);
    \node[anchor=south] at (-.5,.75) {$w$};
   \end{scope}
  \end{scope}
  \node at (5, .5) {$\leadsto$};
  \begin{scope}[shift={(9,0)}]
   \begin{scope}[shift={(0, .75)},rotate=0]
    \draw (0,0) -- (1,0) -- (1,1) -- (0,1) -- (0,0);
    \node at (.5,.5) {$\int$};
    \draw (1,.5) -- (1.5,.5) -- (1.5, 0) -- (2, 0);
    \draw (-1,.75) -- (0,.75);
    \draw (-.5,.75) -- (-.5,.75) -- (-.5,-.75) -- (0,-.75);
    \fill (-.5,.75) circle[radius=.1];
    \node[anchor=east] at (-1,.75) {$w$};
    \draw (-1,.25) --(0,.25);
    \node[anchor=east] at (-1,.25) {$u$};
   \end{scope}
   \begin{scope}[shift={(0, -.75)},rotate=0]
    \draw (0,0) -- (1,0) -- (1,1) -- (0,1) -- (0,0);
    \node at (.5,.5) {$\int$};
    \draw (1,.5) -- (1.5,.5) -- (1.5, 1) -- (2, 1);
    \draw (-1,.25) --(0,.25);
    \node[anchor=east] at (-1,.25) {$v$};
   \end{scope}
   \begin{scope}[shift={(2,0)},rotate=0]
    \draw (0,0) -- (1,0) -- (1,1) -- (0,1) -- (0,0);
    \node at (.5,.5) {$+$};
    \draw (1,.5) -- (1.5,.5);
    \node[anchor=west] at (1.5,.5) {$y$};
   \end{scope}
  \end{scope}
 \end{scope}
\end{tikzpicture}
\end{center}
\caption{GPAC rewriting}
\label{fig:multidim_gpac_rewriting}
\end{figure}
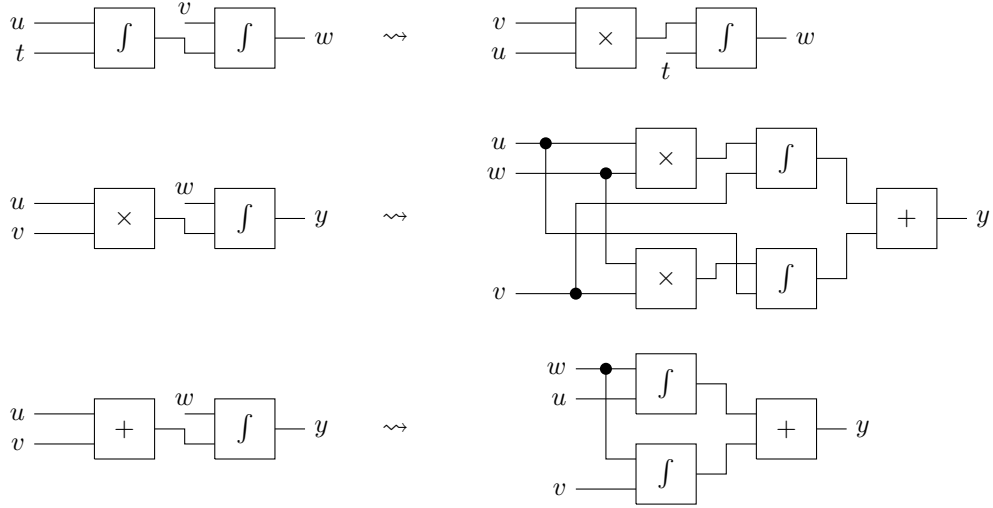

\begin{example}[Multidimensional GPAC]
The history and motivation for the GPAC have been described above. The GPAC
is the starting point for the definition of generable functions. It crucially
relies on the integrator unit to build interesting circuits. In modern terms,
the integration is often done implicitly with respect to time, as shown in
\figref{fig:multidim_gpac_simple_int} where the corresponding equation
is $f(t)=\int f$, or $f'=f$. Notice that the circuit has a single
``floating input'' which is $t$ and is only used in the ``derivative port''
of the integrator. What would be the meaning of a circuit with several
such inputs, as shown in \figref{fig:multidim_gpac_simple} ? Formally writing the system
and differentiating gives:
\begin{align*}
g&=\int 1dx_1+\int 1dx_2=x_1+x_2\\
dg&=dx_1+dx_2
\end{align*}
\figref{fig:multidim_gpac_complex} gives a more interesting example to better grasp
the features of these GPAC. Using the same ``trick'' as before we get:
\[\begin{array}{cc}
\begin{array}{rl}
h_2&=\int 1dx_1\\
h_3&=\int 1dx_2\\
h_1&=\int -2h_1^2h_2dx_1+\int-2h_1^2h_3dx_2
\end{array}&
\begin{array}{rl}
dh_2&=dx_1\\
dh_3&=dx_2\\
dh_1&=-2h_1^2h_2dx_1-2h_1^2h_3dx_2
\end{array}
\end{array}
\]

It is now apparent that the computed function $h$ satisfies a special
property because $dh_1(x)=p_1(h_1,h_2,h_3)dx_1+p_2(h_1,h_2,h_3)dx_2$ where $p_1$ and $p_2$
are polynomials. In other words, $dh_1=\scalarprod{p(h)}{dx}$ where $h=(h_1,h_2,h_3)$,
$x=(x_1,x_2)$ and $p=(p_1,p_2)$ is a polynomial vector. We obtain similar equations for
$h_2$ and $h_3$. Finally,
$dh=q(h)dx$ where $q(h)$ is the polynomial matrix given by:
\[q(h)=\begin{pmatrix}-2h_1^2h_2&-2h_1^2h_3\\1&0\\0&1\end{pmatrix}\]
This can be equivalently stated as $\jacobian{h}{}=q(h)$. This is a generalization of PIVP to polynomial
partial differential equations.

To complete this example, note that it can be solved exactly and $h_1(x_1,x_2)=\frac{1}{x_1^2+x_2^2}$
which is defined over $\R^2\setminus\{(0,0)\}$.
\end{example}

\begin{example}[GPAC composition]
Another way to look at \figref{fig:multidim_gpac_complex} and
\figref{fig:multidim_gpac_simple} is to imagine that $x_1=X_1(t)$ and $x_2=X_2(t)$
are functions of the time (produced by other GPACs), and rewrite the system in the time domain with $h=H(t)$:
\[
\begin{array}{rl}
H_2'(t)&=X_1'(t)\\
H_3'(t)&=X_2'(t)\\
H_1'(t)&=-2H_1(t)^2H_2(t)X_1'(t)-2H_1(t)^2H_3(t)X_2'(t)
\end{array}
\]
We obtain a system similar to the unidimensional PIVP: for a given
choice of $X$ we have $H'(t)=q(H(t))X'(t)$ where $q(h)$ is the polynomial matrix given by:
\[q(h)=\begin{pmatrix}-2h_1^2h_2&-2h_1^2h_3\\1&0\\0&1\end{pmatrix}\]
Note that this is the same polynomial matrix as in the previous example.
The relationship between the time domain $H$ and the original $h$
is simply given by $H(t)=h(x(t))$. This approach has a natural interpretation
on the GPAC circuit in terms of circuit rewriting. Assume that $x_1$ and $x_2$
are the outputs of two GPACs (with input $t$), i.e.~$x_1=x_1(t)$ and $x_2=x_2(t)$. Then
$x_1,x_2$ are given by the first two components of a polynomial ODE \eqref{eq:ode},
i.e.~$x_1(t)=y_1(t)$ and $x_2(t)=y_2(t)$. Moreover one has $x_1^{\prime}(t)=p_1(y),
x_2^{\prime}(t)=p_2(y)$. That means that the output $H(t)=(H_1(t),H_2(t),H_3(t))$ of
the GPAC of \figref{fig:multidim_gpac_complex} satisfies
\[
H'(t)=q(H(t))X'(t)=q(H(t))(p_1(y),p_2(y))
\]
and therefore consists of the first three components of the polynomial ODE given by
\[
\begin{array}{rl}
H^{\prime}&=q(H(t))(p_1(y),p_2(y))\\
y^{\prime}&=p(y)
\end{array}
\]
Thus, if $x_1$ and $x_2$ are the outputs of the some GPACs, depending on one input $t$, and if
we connect the outputs of these two GPACs to the inputs of the two-dimensional GPAC of
\figref{fig:multidim_gpac_complex}, we obtain a one-input GPAC computing $H(t)$, where $t$ is the input. Note that in
% then we can build the circuit
%computing $H$ using the rule of \figref{fig:multidim_gpac_rewriting}. In 
a normal GPAC,
the time $t$ is the only valid input of the derivative port of
the integrator, so we need to rewrite integrators which violate this rule. This can be done
by rewriting the ODE defining $H(t)$ into a polynomial ODE as done above, and then by implementing
a GPAC which computes the solution of this ODE such that the time $t$ is the only valid input of the
derivative port of each integrator (this is trivial to implement).
This procedure
always stops in finite time. Moreover it always works as long as $q(\cdot )$ is a matrix consisting of
polynomials.
\end{example}

These considerations lead to state that the following generalization is
clearly the one we want: 

\begin{definition}[Generable function]\label{def:gpac_generable_ext}
Let $d,\ell\in\N$, $I$ an open and connected subset of $\R^d$, $\mtt{sp}:\Rp\rightarrow\Rp$
a nondecreasing function and $f:I\rightarrow\R^\ell$. We say that $f\in\gval[\K]{\mtt{sp}}$ if and only if
there exists $n\geqslant \ell$, $p\in\MAT{n}{d}{\K}[\R^n]$, $x_0\in(\K^d\cap I)$, $y_0\in\K^n$
and $y:I\rightarrow\R^n$ satisfying for all $x\in I$:
\begin{itemize}
\item $y(x_0)=y_0$ and $\jacobian{y}(x)=p(y(x))$ (i.e. $\partial_jy_i(x)=p_{ij}(y(x))$)\hfill$\blacktriangleright$ $y$ satisfies a differential equation
%\item $\forall i\in\intinterv{1}{\ell}$, $f_i(x)=y_i(x)$
\item $f(x)=y_{1..\ell}(x)$\hfill$\blacktriangleright$ $f$ is a component of $y$
\item $\infnorm{y(x)}\leqslant \mtt{sp}(\infnorm{x})$\hfill$\blacktriangleright$ $y$ is bounded by $\mtt{sp}$
\end{itemize}
\end{definition}

\begin{remark}[Uniqueness]
The uniqueness of $y$ in Definition \ref{def:gpac_generable_ext} can be seen in two different ways:
by uniqueness of the unidimensional case and by analyticity. Note that
the existence of $y$ (and thus the domain of definition) is a hypothesis of the definition.

Consider $x\in I$ and $\gamma$ a smooth curve\footnote{see \remref{rem:gpac_ext_dom}}
from $x_0$ to $x$ with values in $I$ and consider $z(t)=y(\gamma(t))$ for $t\in[0,1]$.
It can be seen that $z'(t)=\jacobian{y}(\gamma(t))\gamma'(t)=p(y(\gamma(t))\gamma'(t)=p(z(t))\gamma'(t)$,
$z(0)=y(x_0)=y_0$ and $z(1)=y(x)$.
The initial value problem $z(0)=y_0$ and $z'(t)=p(z(t))\gamma'(t)$ satisfies the hypothesis
of the Cauchy-Lipschitz theorem and as such admits a unique solution. Since this
IVP is independent of $y$, the value of $z(1)$ is unique and must be equal to $y(x)$, for any
solution $y$ and any $x$. This implies that $y$
 must be unique.
 
Alternatively, use Proposition \ref{prop:gpac_ext_analytic} to conclude that any solution
must be analytic. Assume that there are two solutions $y$ and $z$. Then all partial derivatives
at any order at the initial point $x_0$ are equal because they only depend on $y_0$.
Thus $y$ and $z$ have the same partial derivatives at all order and must be equal on
a small open ball around $y_0$. A classical argument of finite covering with open balls
then extends this argument to any point of the interior of domain of definition that is connected to $y_0$.
Since the domain of definition is assumed to be open and connected, this concludes to the equality of $y$ and $z$.
\end{remark}

\begin{remark}[Regularity]\label{rem:gpac_ext_regularity}
In the euclidean space $\R^n$, $C^k$ smoothness is equivalent to the smoothness
of the order $k$ partial derivatives. Consequently, the equation $\jacobian{y}=p(y)$
on the open set $I$ immediately proves that $y$ is $C^\infty$.
\propref{prop:gpac_ext_analytic} shows that $y$ is in fact real analytic.
\end{remark}

\begin{remark}[Domain of definition]\label{rem:gpac_ext_dom}
Definition \ref{def:gpac_generable_ext} requires the domain of definition of $f$ to be connected,
otherwise it would not make sense. Indeed, we can only define the value of $f$
at point $u$ if there exists a path from $x_0$ to $u$ in the domain of $f$.
It could seem, at first sight, that the domain being ``only''
connected may be too weak to work with. This is not the case, because
in the euclidean space $\R^d$, \emph{open} connected subsets are
always smoothly arc connected, that is any two points can be connected using a
smooth $C^1$ (and even $C^\infty$) arc. \propref{prop:connected_is_generable_connected}
extends this idea to generable arcs, with a very useful corollary.
\end{remark}

\begin{remark}[Multidimensional output]\label{rem:multidim_out_ext}
\remref{rem:multidim_out} also applies to this definition: $f:\subseteq\R^d\rightarrow\R^n$
is generable if and only if each of its component is generable
(\emph{i.e.} $f_i$ is generable for all $i$).
\end{remark}

\begin{remark}[Definition consistency]
It should be clear that Definition \ref{def:gpac_generable_ext} and Definition \ref{def:gpac_generable}
are consistent. More precisely, in the case of unidimensional function ($d=1$) with
domain of definition $I=\R$, both definitions are exactly the same since $\jacobian{y}=y'$
and $\MAT{n}{1}{\R}=\R^n$.
\end{remark}

The following example focuses on the second issue mentioned at the beginning of
the section, namely the domain of definition.

\begin{example}[Inverse and logarithm functions]\label{ex:inv_ln}
We illustrate that the choice of the domain of definition makes important differences
in the nature of the function.
\begin{itemize}
\item Let $0<\varepsilon<1$ and define $f_\varepsilon:x\in]\varepsilon,\infty[\mapsto\frac{1}{x}$. It
can be seen that $f_\varepsilon'(x)=-f_\varepsilon(x)^2$ and $f_\varepsilon(1)=1$.
Furthermore, $|f_\varepsilon(x)|\leqslant\frac{1}{\varepsilon}$ thus $f_\varepsilon\in\gval{\lambdafunex{\alpha}{\frac{1}\varepsilon}}$.
So in particular, $f_\varepsilon\in\gval{\poly}$ for any $\varepsilon>0$. Something interesting arises
when $\varepsilon\rightarrow0$: define $f_0(x)=x\in(0,\infty)\mapsto\frac{1}{x}$.
Then $f_0$ is still generable and $|f_0(x)|\leqslant\frac{1}{|x|}$. Thus $f_0\in\gval{\lambdafunex{\alpha}{\frac{1}{\alpha}}}$
but $f_0\notin\gval{\poly}$. Note that strictly speaking, $f_0\in\gval{\mtt{sp}}$ where $\mtt{sp}(\alpha)=\frac{1}{\alpha}$
and $\mtt{sp}(0)=0$ because the bound function needs to be defined over $\Rp$.
\item A similar phenomenon occurs with the logarithm: define $g_\varepsilon:x\in(\varepsilon,\infty)\mapsto\ln(x)$.
Then $g_\varepsilon'(x)=f_\varepsilon(x)$ and $g_\varepsilon(1)=0$. Furthermore, $|g_\varepsilon(x)|\leqslant\max(|x|,|\ln\varepsilon|)$.
Thus $g_\varepsilon\in\gval{\lambdafunex{\alpha}{\max(\alpha,|\ln\varepsilon|,\frac{1}{\varepsilon})}}$, and in
particular $g_\varepsilon\in\gval{\poly}$ for any $\varepsilon>0$. Similarly, $g_0:x\in]0,\infty[\mapsto\ln(x)$
is generable but does not belong to $\gval{\poly}$.
\end{itemize}
\end{example}

\begin{example}[Classical non-generable functions]
While many of the usual real functions are known to be generated by a
GPAC, a notable exception is Euler's Gamma function $\Gamma(x)=\int_{0}^{\infty}t^{x-1}e^{-t}dt$
function or Riemann's Zeta function $\zeta(x)=\sum_{k=0}^\infty \frac1{k^x}$ \citep{Sha41}, \citep{PR89}.
Furthermore, Riemann's Zeta function (over, for example, $[2,\infty)$) is an example
of real-analytic, polynomially-bounded that is not in \gval{\poly}.

%A more interesting
%example is
%\[f(x)=e^{-e^x}.\]
%Indeed, $f$ is clearly generable, satisying $f(x)=y_1(x)$ where
%\[y_1(0)=e^{-1}\quad y_1'=-y_2y_1,\quad,y_2(0)=1,\quad y_2'=y_2.\]
%Furthermore, $f$ is bounded by a polynomial
%when restricted to $[0,\infty)$. Yet, $f\notin\gval{\poly}$ because, intuitively,
%there is no way to write $f$ with a system that does not involve a variable with
%exponential growth (although this variable is indeed in the system, and not part
%of the function $f$ itself). An easy way to prove this fact is to observe that
%if $f\in\gval{\poly}$, then in particular,
%$f(x)=y_1(x)$ where $y$ satisfies $y(0)=y_0$ and $y'=p(y)$. But then,
%$f'(x)=y_1'(x)=p_1(y_1(x),\ldots,y_k(x))=\poly(f(x),y_2(x),\ldots,y_k(x))$
%where $y_2(x),\ldots,y_k(x)$ are bounded by a polynomial in $|x|$. This can be shown
%to be incompatible with the fact that $f'(x)=-e^xf(x)$.
\end{example}

\begin{example}[Generable functions not in \gval{\poly}]\label{ex-notin-gpval}

We have seen that Riemann's Zeta function $\zeta$ is an example of a function not in \gval{\poly} due to the
fact that it is not generable. An example of a generable function not belonging to \gval{\poly} is the exponential
$e^x$ because, while it is generable, its derivative is not bounded by another polynomial.
Note that it is quite possible to have bounded generable functions which do not belong to \gval{\poly}.
An example is the function given by $f(x)=\sin(e^x)$ which is generable and bounded, but its derivative
$f^{\prime}(x)=e^x\cos(e^x)$ is not bounded by any polynomial.
\end{example}

The previous examples show that $\gval[\K]{\mtt{sp}}$ can be used to define a proper hierarchy of generable
functions. Adapting the examples given in Example \ref{ex-notin-gpval} one can show for instance that

\[
\gval{\poly} \subsetneqq \gval{e^x} \subsetneqq \gval{e^{e^x}}  \subsetneqq \ldots
\]

In particular these examples show the following result.

\begin{theorem}[Existence of noncollapsing classes]
  $\gval{\poly} \subsetneqq \gval{}$.
\end{theorem}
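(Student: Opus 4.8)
The inclusion $\gval{\poly}\subseteq\gval{}$ is immediate from Definition~\ref{def:gpac_generable}, since $\gval{}=\bigcup_{\mtt{sp}}\gval{\mtt{sp}}$ and every polynomial $\mtt{sp}$ is in particular a nondecreasing function $\Rp\to\Rp$. So the only content is the strictness, and the plan is to exhibit one explicit witness separating the two classes, namely the exponential function.

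First I would record the membership $\exp\in\gval{}$. This is already done in Example~\ref{ex:elem_func_gpac}, where the single-variable system $y'=y$, $y(0)=1$ shows $\exp\in\gval[\Q]{\exp}\subseteq\gval{}$. Next I would show $\exp\notin\gval{\poly}$. The key observation is that the growth-bound clause of Definition~\ref{def:gpac_generable} bounds the infinity norm of the \emph{whole} solution vector $y$, hence in particular its first component: if $f\in\gval{\mtt{sp}}$ via some $y$ with $f=y_{1..m}$, then for all $t$ one has $\infnorm{f(t)}\leqslant\infnorm{y(t)}\leqslant\mtt{sp}(|t|)$. Consequently any function in $\gval{\poly}$ is itself bounded by a polynomial in $|t|$. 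Now suppose, for contradiction, that $\exp\in\gval{\mtt{sp}}$ for some polynomial $\mtt{sp}$; then $e^t\leqslant\mtt{sp}(t)$ for all $t\geqslant0$, which is false since $e^t/t^k\to\infty$ as $t\to\infty$ for every $k\in\N$. Therefore $\exp\in\gval{}\setminus\gval{\poly}$, and the inclusion $\gval{\poly}\subseteq\gval{}$ is strict.

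There is essentially no hard step here: the argument is just the contrast between ``polynomially bounded'' (forced on members of $\gval{\poly}$ by reading off the first coordinate of the bound) and the superpolynomial growth of $\exp$. The only point requiring a line of care is that the bound $\mtt{sp}$ controls the auxiliary variables of the PIVP as well as the output components, so that boundedness of $f$ itself is genuinely inherited; once that is noted, the exponential (or equally $t\mapsto\exp(\exp t)$, etc., giving the finer chain $\gval{\poly}\subsetneqq\gval{e^x}\subsetneqq\gval{e^{e^x}}\subsetneqq\cdots$ mentioned above) settles the claim.
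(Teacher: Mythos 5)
Your proposal is correct and follows essentially the same route as the paper, which justifies the theorem by pointing to Example~\ref{ex-notin-gpval}: $\exp$ is generable (Example~\ref{ex:elem_func_gpac}) but cannot lie in $\gval{\poly}$ since the bound $\infnorm{y(t)}\leqslant\mtt{sp}(|t|)$ forces every function in $\gval{\poly}$ to be polynomially bounded. Your explicit remark that the bound on the full solution vector is inherited by the output component is exactly the point the paper relies on implicitly, so nothing is missing.
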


\section{Stability properties}\label{sec:pivp_stable}

In this section, the major results will the be stability of multidimensional
generable functions under arithmetical operators, composition and ODE solving.
Note that some of the results use properties on $\K$ which can be found in
Section \ref{sec:gen_field_ext_stab}.

\begin{lemma}[Arithmetic on generable functions]\label{lem:gpac_ext_class_stable}
Let $d,\ell,n,m\in\N$, $\mtt{sp},\ovl{\mtt{sp}}:\R\rightarrow\Rp$,
$f:\subseteq\R^d\rightarrow\R^n\in\gval{\mtt{sp}}$ and
$g:\subseteq\R^\ell\rightarrow\R^m\in\gval{\ovl{\mtt{sp}}}$. Then:
\begin{itemize}
\item $f+g, f-g\in\gval{\mtt{sp}+\ovl{\mtt{sp}}}$ over $\dom{f}\cap\dom{g}$ if $d=\ell$ and $n=m$
\item $fg\in\gval{\max(\mtt{sp},\ovl{\mtt{sp}},\mtt{sp}\thinspace\ovl{\mtt{sp}})}$ if $d=\ell$ and $n=m$
\item $f\circ g\in\gval{\max(\ovl{\mtt{sp}},\mtt{sp}\circ\ovl{\mtt{sp}})}$ if $m=d$ and $g(\dom{g})\subseteq \dom{f}$
\end{itemize}
\end{lemma}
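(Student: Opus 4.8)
The plan is to extend the unidimensional proof of \lemref{lem:gpac_gen_op} to the multidimensional setting, handling the extra bookkeeping that comes from the Jacobian condition $\jacobian{y} = p(y)$ replacing the ODE $y' = p(y)$. As in the unidimensional case, I would apply \defref{def:gpac_generable_ext} to $f$ and $g$ to obtain dimensions $n', m'$, polynomial matrices $p \in \MAT{n'}{d}{\K}[\R^{n'}]$ and $\ovl{p} \in \MAT{m'}{\ell}{\K}[\R^{m'}]$, initial data $x_0, y_0$ and $\ovl{x}_0, \ovl{y}_0$, and analytic solutions $y : \dom{f} \to \R^{n'}$ and $\ovl{y} : \dom{g} \to \R^{m'}$. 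The key observation throughout is that the Jacobian formulation composes well: if $w(x) = \phi(y(x))$ for a polynomial map $\phi$, then $\jacobian{w}(x) = \jacobian{\phi}(y(x)) \jacobian{y}(x) = \jacobian{\phi}(y(x)) p(y(x))$, which is again a polynomial in $w$'s ambient variables provided $\phi$ is chosen so that $\phi(y)$'s entries, together with the entries of $y$, form a closed polynomial system.

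First I would do $f \pm g$: on the common domain $\dom{f} \cap \dom{g}$ (which is open, though not necessarily connected — see \remref{rem:gpac_ext_dom}, and one should restrict to a connected component or simply note the definition applies per-component), take the ambient variable to be $(y, \ovl{y})$ augmented by the difference/sum components, with the obvious block system; the bound $\infnorm{y(x)} + \infnorm{\ovl{y}(x)} \leqslant \mtt{sp}(\infnorm{x}) + \ovl{\mtt{sp}}(\infnorm{x})$ gives the claimed space bound. For the product $fg$ (componentwise, requiring $d = \ell$, $n = m$), introduce variables $z_i = f_i g_i = y_i \ovl{y}_i$ and compute $\partial_j z_i = (\partial_j y_i) \ovl{y}_i + y_i (\partial_j \ovl{y}_i) = p_{ij}(y)\ovl{y}_i + y_i \ovl{p}_{ij}(\ovl{y})$, which is polynomial in $(y, \ovl{y})$; the bound $|z_i| \leqslant \mtt{sp}\thinspace\ovl{\mtt{sp}}$ together with carrying along $y$ and $\ovl{y}$ themselves (bounded by $\mtt{sp}$ and $\ovl{\mtt{sp}}$) yields $\max(\mtt{sp}, \ovl{\mtt{sp}}, \mtt{sp}\thinspace\ovl{\mtt{sp}})$.

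The composition case $f \circ g$ (with $m = d$, $g(\dom{g}) \subseteq \dom{f}$) is the main obstacle and where the generable-field hypothesis on $\K$ enters. Set $u(x) = f(g(x))$, or more precisely $u(x) = y(\ovl{y}_{1..m}(x))$ extended to $U(x) = y(z(x))$ where $z = g$-related components; then by the chain rule $\jacobian{U}(x) = \jacobian{y}(z(x)) \jacobian{z}(x) = p(y(z(x))) \cdot (\text{rows of } \ovl{p} \text{ corresponding to } g)$. The subtlety is that to make this a closed polynomial system I must carry along not just $U$ but also the value $y(\ovl{y}(x))$ for \emph{all} $n'$ components of $y$ (not only the first $\ell$), composed with $\ovl{y}$ — call this $v(x) = y(\ovl{y}_{1..m}(x)) \in \R^{n'}$; then $\jacobian{v}(x) = p(v(x)) M(x)$ where $M(x)$ is the $m \times \ell$ matrix whose $(k,j)$ entry is $\ovl{p}_{kj}(\ovl{y}(x))$, so the full system in variables $(\ovl{y}, v)$ is polynomial. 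Existence of $v$ over $\dom{g}$ follows from $g(\dom{g}) \subseteq \dom{f}$ so that $y$ is evaluated only where it is defined. The bound is $\infnorm{v(x)} = \infnorm{y(g(x))} \leqslant \mtt{sp}(\infnorm{g(x)}) \leqslant \mtt{sp}(\ovl{\mtt{sp}}(\infnorm{x}))$ using that $\mtt{sp}$ is nondecreasing, combined with $\infnorm{\ovl{y}(x)} \leqslant \ovl{\mtt{sp}}(\infnorm{x})$, giving $\max(\ovl{\mtt{sp}}, \mtt{sp} \circ \ovl{\mtt{sp}})$. Finally, the initial condition: at $\ovl{x}_0$ we need $v(\ovl{x}_0) = y(\ovl{y}_0) = y(g(\ovl{x}_0)) \in \K^{n'}$, and since $\ovl{y}_0 \in \K^m$ and each component $y_i$ is a generable function $\R^? \to \R$ with values at $\K$-points in $\K$ — this is exactly the content of \defref{def:generable_field} — we get $v(\ovl{x}_0) \in \K^{n'}$, so all constants stay in $\K$. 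I would also remark, as in the unidimensional proof, that the first two items only need $\K$ to be a field.
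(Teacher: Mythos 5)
Your construction follows the paper's proof essentially step by step: for the composition you carry the \emph{full} auxiliary vector of $f$ composed with $g$ (the paper's $h=y\circ g$, your $v$), observe that $\jacobian{h}=p(h)\,\ovl{p}_{1..m}(\ovl{y})$ closes into a polynomial system in the variables $(\ovl{y},h)$, and derive the bound $\max(\ovl{\mtt{sp}},\mtt{sp}\circ\ovl{\mtt{sp}})$ exactly as the paper does; the sum and product cases are handled the same way there (and only sketched, as you do).

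The one genuine gap is your justification of the new initial condition, namely that the value of $y$ at the $\K$-point $g(\ovl{x}_0)=\ovl{y}_{0,1..m}$ lies in $\K^{n'}$. You assert that this ``is exactly the content of'' Definition~\ref{def:generable_field}, but that definition only states $f(\alpha)\in\K$ for \emph{univariate} generable functions $f:\R\to\R$ defined on all of $\R$. Here $y$ is a multivariate generable function whose domain $\dom{f}$ may be a proper open connected subset of $\R^{d}$ with $d>1$, so the definition does not apply to it (nor to its components, which are still multivariate with partial domain). The statement actually needed is Corollary~\ref{cor:generable_field_ext_stab}, i.e.\ $y(\K^{d}\cap\dom{f})\subseteq\K^{n'}$ for multidimensional generable $y$, and this is not a formality: the paper proves it by composing $y$ with a generable path $\gamma:\R\to\dom{f}$ joining $x_0$ to the given $\K$-point, thereby reducing to the univariate definition, and the existence of such a path (Proposition~\ref{prop:connected_is_generable_connected}) requires a nontrivial construction of smoothed polygonal paths built from $\tanh$, kept inside an $\varepsilon$-offset of the polygonal path within the domain. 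So your outline is correct, but the step where the generable-field hypothesis enters needs this extra machinery rather than a direct appeal to the definition; this is precisely where the multidimensional lemma is harder than the unidimensional Lemma~\ref{lem:gpac_gen_op}.
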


\begin{proof}
We focus on the case of the composition, the other cases are very similar.

Apply Definition \ref{def:gpac_generable_ext} to $f$ and $g$ to respectively get
$l,\bar{l}\in\N$, $p\in\MAT{l}{d}{\K}[\R^l]$, $\bar{p}\in\MAT{\bar{l}}{\ell}{\K}[\R^{\bar{l}}]$,
$x_0\in\dom{f}\cap\K^d$, $\bar{x}_0\in\dom{g}\cap\K^\ell$, $y_0\in\K^l$, $\bar{y}_0\in\K^{\bar{l}}$,
$y:\dom{f}\rightarrow\R^l$ and $\bar{y}:\dom{g}\rightarrow\R^{\bar{l}}$.
Define $h=y\circ g$, then $\jacobian{h}=\jacobian{y}(g)\jacobian{g}=p(h)\bar{p}_{1..m}(\bar{y})$
and $h(\bar{x_0})=y(\bar{y}_0)\in\K^l$ by \amaurycorref{cor:generable_field_ext_stab}.
In other words $(\bar{y},h)$ satisfy:
\[\left\{\begin{array}{@{}r@{}l}\bar{y}(\bar{x}_0)&=y_0\in\K^{\bar{l}}\\h(\bar{x}_0)&=y(\bar{y}_0)\in\K^l\end{array}\right.
\qquad\left\{\begin{array}{@{}r@{}l}\bar{y}'&=\bar{p}(\bar{y})\\h'&=p(h)\bar{p}_{1..m}(\bar{y})\end{array}\right.\]
This shows that $f\circ g=z_{1..m}\in\gval{}$. Furthermore,
\begin{align*}
\infnorm{(\bar{y}(x),h(x))}
    &\leqslant\max(\infnorm{\bar{y}(x)},\infnorm{y(g(x))})\\
    &\leqslant\max(\mtt{\ovl{sp}}(\infnorm{x}),\mtt{sp}(\infnorm{g(x)}))\\
    &\leqslant\max(\mtt{\ovl{sp}}(\infnorm{x}),\mtt{sp}(\mtt{\ovl{sp}}(\infnorm{x}))).
\end{align*}
\end{proof}

Our main result is that the solution to an ODE whose right hand-side is
generable, and possibly depends on an external and $C^1$ control, may be rewritten
as a GPAC. A corollary of this result is that the solution to a generable ODE is
generable.

\begin{proposition}[Generable ODE rewriting]\label{prop:gpac_ext_ivp_stable_pre}
Let $d,n\in\N$, $I\subseteq\R^n$, $X\subseteq\R^d$, $\mtt{sp}:\Rp\rightarrow\Rp$ and
$(f:I\times X\rightarrow\R^n)\in\gval[\K]{\mtt{sp}}$. Define $\ovl{\mtt{sp}}=\max(\idfun,\mtt{sp})$.
Then there exists $m\in\N$, $(g:I\times X\rightarrow\R^m)\in\gval[\K]{\ovl{\mtt{sp}}}$
and $p\in\K^m[\R^m\times\R^d]$ such that for any interval $J$,
$t_0\in\K\cap J$, $y_0\in\K^n\cap J$, $y\in C^1(J,I)$ and $x\in C^1(J,X)$,
if $y$ satisfies:
\[\left\{\begin{array}{@{}r@{}l@{}}y(t_0)&=y_0\\y'(t)&=f(y(t),x(t))\end{array}\right.
\qquad\forall t\in J\]
then there exists $z\in C^1(J,\R^m)$ such that:
\[\left\{\begin{array}{@{}r@{}l@{}}z(t_0)&=g(y_0,x(t_0))
\\z'(t)&=p(z(t),x'(t))\end{array}\right.
\qquad \left\{\begin{array}{@{}r@{}l@{}}y(t)&=z_{1..d}(t)\\\infnorm{z(t)}&\leqslant\ovl{\mtt{sp}}(\max(\infnorm{y(t)},\infnorm{x(t)}))\end{array}\right.
\qquad \forall t\in J\]
\end{proposition}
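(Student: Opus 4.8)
The plan is to unfold the generable function $f$ via Definition \ref{def:gpac_generable_ext} into a polynomial partial differential equation, and then to transport the auxiliary quantities of that system along the trajectory $t\mapsto(y(t),x(t))$. This is the same mechanism as in the composition case of Lemma \ref{lem:gpac_ext_class_stable}, except that here the external $C^1$ control $x$ plays the role the inner generable function played there, contributing its derivative $x'$ to the rewritten polynomial system.

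Concretely, I would first apply Definition \ref{def:gpac_generable_ext} to $f:I\times X\to\R^n$ to obtain $l\geqslant n$, a polynomial matrix $P\in\MAT{l}{n+d}{\K}[\R^l]$, an initial point $(u_0,v_0)\in(\K^n\times\K^d)\cap(I\times X)$, a value $w_0\in\K^l$, and the unique analytic $w:I\times X\to\R^l$ with $w(u_0,v_0)=w_0$, $\jacobian{w}=P(w)$, $f=w_{1..n}$ and $\infnorm{w(u,v)}\leqslant\mtt{sp}(\infnorm{(u,v)})$. Put $m=n+l$ and write elements of $\R^m$ as $(\zeta,\omega)$ with $\zeta\in\R^n$, $\omega\in\R^l$. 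Define $g:I\times X\to\R^m$ by $g(u,v)=(u,w(u,v))$ and $p\in\K^m[\R^m\times\R^d]$ by $p\big((\zeta,\omega),\xi\big)=\big(\omega_{1..n},\,P(\omega)\cdot(\omega_{1..n},\xi)\big)$. The function $g$ is generable with bound $\ovl{\mtt{sp}}=\max(\idfun,\mtt{sp})$: it solves the PIVP obtained from that of $w$ by prepending $n$ new variables whose constant Jacobian block is $[\,I_n\mid 0\,]$ (entries in $\Q\subseteq\K$), and $\infnorm{g(u,v)}=\max(\infnorm{u},\infnorm{w(u,v)})\leqslant\max(\infnorm{(u,v)},\mtt{sp}(\infnorm{(u,v)}))=\ovl{\mtt{sp}}(\infnorm{(u,v)})$. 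The coefficients of $p$ are products of coefficients of $P$ with $0$ and $1$, so $p$ does have coefficients in $\K$.

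Given the data $J,t_0,y_0,y\in C^1(J,I),x\in C^1(J,X)$ with $y(t_0)=y_0$ and $y'=f(y,x)$, I would set $z(t)=g(y(t),x(t))=(y(t),w(y(t),x(t)))$, which makes sense since $(y(t),x(t))\in I\times X$ for all $t\in J$. As $w$ is $C^\infty$ and $y,x$ are $C^1$, we get $z\in C^1(J,\R^m)$, and $z(t_0)=g(y_0,x(t_0))$ and $z_{1..n}=y$ are immediate. For the differential equation, write $z=(\zeta,\omega)$; then $\zeta'=y'=f(y,x)=w_{1..n}(y,x)=\omega_{1..n}$, while the chain rule on $\omega=w(y,x)$ gives $\omega'=\jacobian{w}(y,x)\cdot(y',x')=P(\omega)\cdot(\omega_{1..n},x')$, so that $z'=p(z,x')$. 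For the estimate, with $M=\max(\infnorm{y(t)},\infnorm{x(t)})$ and $\mtt{sp}$ nondecreasing, $\infnorm{z(t)}=\max(\infnorm{y(t)},\infnorm{w(y(t),x(t))})\leqslant\max(M,\mtt{sp}(M))=\ovl{\mtt{sp}}(M)$.

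The proof is essentially bookkeeping; the point needing care is the multidimensional structure — keeping $\jacobian{w}$ as an $l\times(n+d)$ matrix, splitting the input-derivative vector as $(y'(t),x'(t))$, and observing that $y'(t)$ is itself polynomial (a mere projection) in $z$, namely the block $\omega_{1..n}$, because $y'=f(y,x)=w_{1..n}(y,x)$; this is what makes the rewritten right-hand side a genuine polynomial in $(z,x')$. I would also note that, unlike the composition case, no generable-field hypothesis on $\K$ is needed here, since the conclusion asserts only the identity $z(t_0)=g(y_0,x(t_0))$ rather than that this value lies in $\K^m$: the constants and coefficients of $g$ and $p$ are built solely from the $\K$-data of $f$ and from rationals. (Here I read ``$z_{1..d}$'' and ``$y_0\in\K^n\cap J$'' in the statement as the evident ``$z_{1..n}$'' and ``$y_0\in\K^n\cap I$''.)
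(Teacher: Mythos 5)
Your proposal is correct and follows essentially the same route as the paper's proof: unfold $f$ into its PIVP for an auxiliary $w$, set $z(t)=(y(t),w(y(t),x(t)))$, use the chain rule together with $f=w_{1..n}$ to make the right-hand side polynomial in $(z,x')$, and bound $\infnorm{z}$ by $\ovl{\mtt{sp}}$. You are in fact somewhat more explicit than the paper about why $g$ is generable and what the polynomial $p$ is, and your reading of the statement's typos ($z_{1..n}$, $y_0\in\K^n\cap I$) matches the intended meaning.
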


\begin{proof}
Apply Definition \ref{def:gpac_generable_ext} to $f$ get $m\in\N$, $p\in\MAT{m}{n+d}{\K}[\R^m]$,
$f_0\in\dom{f}\cap\K^d$, $w_0\in\K^m$ and $w:\dom{f}\rightarrow\R^m$ such that $w(f_0)=w_0$,
$\jacobian{w(v)}=p(w(v))$, $\infnorm{w(v)}\leqslant\mtt{sp}(\infnorm{v})$ and $w_{1..n}(v)=f(v)$ for all $v\in\dom{f}$.
Define $u(t)=w(y(t),x(t))$, then:
\begin{align*}
u'(t)&=\jacobian{w}(y(t),x(t))(y'(t),x'(t))\\
&=p(w(y(t),x(t)))(f(y(t),x(t)),x'(t))\\
&=p(u(t))(u_{1..n}(t),x'(t))\\
&=q(u(t),x'(t))
\end{align*}
where $q\in\K^m[\R^{m+d}]$ and $u(t_0)=w(y(t_0))=w(y_0,x(t_0))$. Note that $w$ itself
is a generable function and more precisely $w\in\gval[\K]{\poly}$ by definition.
Finally, note that $y'(t)=u_{1..d}(t)$ so that we get for all $t\in J$:
\[\left\{\begin{array}{@{}r@{}l@{}}y(t_0)&=y_0\\
y'(t)&=u_{1..d}(t)\end{array}\right.
\qquad \left\{\begin{array}{@{}r@{}l@{}}u(t_0)&=w(y_0,x(t_0))\\
u'(t)&=q(u(t),x'(t))\end{array}\right.\]
Define $z(t)=(y(t),u(t))$, then $z(t_0)=(y_0,w(y_0,x(t_0)))=g(y_0,x(t_0))$
where  $y_0\in\K^n$ and $w\in\gval[\K]{\mtt{sp}}$ so $g\in\gval[\K]{\ovl{\mtt{sp}}}$.
And clearly $z'(t)=r(z(t),x'(t))$ where $r\in\K^{n+m}[\R^{n+m}]$.
Finally, $\infnorm{z(t)}=\max(\infnorm{y(t)},\infnorm{w(y(t),x(t))})
\leqslant\max(\infnorm{y(t)},\mtt{sp}(\max(\infnorm{y(t)},\infnorm{x(t)}))
\leqslant\ovl{\mtt{sp}}(\max(\infnorm{y(t)},\infnorm{x(t)}))$.
\end{proof}

A simplified version of this lemma shows that generable functions are closed under
ODE solving.

\begin{corollary}[Generable functions are closed under ODE]\label{cor:gpac_ext_ivp_stable}
Let $d\in\N$, $J\subseteq\R$ an interval, $\mtt{sp},\ovl{\mtt{sp}}:\Rp\rightarrow\Rp$,
$f:\subseteq\R^d\rightarrow\R^d$ in \gval{\mtt{sp}}, $t_0\in\K\cap J$ and $y_0\in\K^d\cap\dom{f}$.
Assume there exists $y:J\rightarrow\dom{f}$ satisfying for all $t\in J$:
\[\left\{\begin{array}{@{}r@{}l@{}}y(t_0)&=y_0\\y'(t)&=f(y(t))\end{array}\right.
\qquad\infnorm{y(t)}\leqslant\ovl{\mtt{sp}}(t)\]
Then $y\in\gval{\max(\ovl{\mtt{sp}},\mtt{sp}\circ \ovl{\mtt{sp}})}$ and is unique.
\end{corollary}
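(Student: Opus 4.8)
The plan is to read the statement off \propref{prop:gpac_ext_ivp_stable_pre} by feeding it a trivial control. Formally, regard $f$ as a map $f:\dom{f}\times\R\to\R^d$ that ignores its second argument; reusing the PIVP that witnesses $f\in\gval[\K]{\mtt{sp}}$ and appending a zero column to its Jacobian shows that $f$, in this enlarged form, still lies in $\gval[\K]{\mtt{sp}}$ (this uses only that $\mtt{sp}$ is nondecreasing). Then \propref{prop:gpac_ext_ivp_stable_pre} yields $m\in\N$, a generable function $g\in\gval[\K]{\max(\idfun,\mtt{sp})}$ and a polynomial $p\in\K^m[\R^m\times\R]$ enjoying the stated lifting property.

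Next I would instantiate that conclusion with the interval $J$, the point $t_0\in\K\cap J$, the value $y_0\in\K^d\cap\dom{f}$, the given trajectory $y$, and the constant control $x\equiv 0$, which is $C^1$ with $x'\equiv 0$. This produces $z\in C^1(J,\R^m)$ with $z(t_0)=g(y_0,0)$, with $z'(t)=p(z(t),0)$ (a polynomial $q(z(t))$, $q\in\K^m[\R^m]$, since $x'\equiv 0$), with $y(t)=z_{1..d}(t)$, and with $\infnorm{z(t)}\leqslant\max(\idfun,\mtt{sp})(\max(\infnorm{y(t)},0))=\max(\infnorm{y(t)},\mtt{sp}(\infnorm{y(t)}))$ for all $t\in J$. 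Two small checks then close the argument. First, $z(t_0)=g(y_0,0)\in\K^m$: this is exactly the closure of generable functions on $\K$-valued points, i.e.\ \amaurycorref{cor:generable_field_ext_stab} applied to $g$ at the point $(y_0,0)\in\K^{d+1}$, and it is here that the hypothesis ``$\K$ generable'' enters. Second, since $\infnorm{y(t)}\leqslant\ovl{\mtt{sp}}(t)$ and $\mtt{sp}$ (hence $\mtt{sp}\circ\ovl{\mtt{sp}}$) is nondecreasing, $\infnorm{z(t)}\leqslant\max(\ovl{\mtt{sp}}(t),\mtt{sp}(\ovl{\mtt{sp}}(t)))$.

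At this point $z:J\to\R^m$ is a PIVP trajectory with $\K$-data $(t_0,z(t_0))$, polynomial right-hand side $q$, output $y=z_{1..d}$, and growth bounded by $\max(\ovl{\mtt{sp}},\mtt{sp}\circ\ovl{\mtt{sp}})$; invoking \defref{def:gpac_generable_ext} in dimension one (on $J$, or on its interior if $J$ is not open) gives $y\in\gval[\K]{\max(\ovl{\mtt{sp}},\mtt{sp}\circ\ovl{\mtt{sp}})}$. Uniqueness is a separate, classical matter: $f$ is analytic on the open set $\dom{f}$ by \remref{rem:gpac_ext_regularity}, hence locally Lipschitz, so the Cauchy-Lipschitz theorem rules out two distinct $\dom{f}$-valued solutions of $y(t_0)=y_0,\ y'=f(y)$ on $J$. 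One could equally avoid the embedding trick and argue directly: taking $w$ to be the PIVP witness of $f$, the chain rule shows that $u(t)=w(y(t))$ obeys a polynomial ODE in $u$, whence $z=(y,u)$ is the required witness — the two routes are the same computation.

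The whole content sits in \propref{prop:gpac_ext_ivp_stable_pre}; the only point needing real care is the membership $z(t_0)\in\K^m$, which forces the detour through \amaurycorref{cor:generable_field_ext_stab} and genuinely needs $\K$ to be a generable field, rather than letting one pretend the initial value is obviously in $\K$. Everything else is routine bookkeeping with the nondecreasing bounds $\mtt{sp}$ and $\ovl{\mtt{sp}}$, plus the standard existence/uniqueness argument for the ODE.
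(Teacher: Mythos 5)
Your proof is correct and follows exactly the route the paper intends: the corollary is stated as a direct specialization of \propref{prop:gpac_ext_ivp_stable_pre} (with the control made trivial), and your instantiation with $x\equiv 0$, the use of \amaurycorref{cor:generable_field_ext_stab} to get $z(t_0)\in\K^m$, and the monotonicity of $\mtt{sp}$ for the bound match the computation already carried out in the proposition's proof. The uniqueness remark via analyticity and Cauchy--Lipschitz is likewise the standard argument the paper relies on, so there is nothing to add.
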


\begin{remark}[Polynomially bounded generable functions] \label{rq:vingtquatre}
In light of the stability properties above, the class of \emph{polynomially bounded}
generable functions, $$\gval{\poly}=\bigcup_{k=1}^\infty\gval{\lambdafunex{\alpha}{k\alpha^k}}$$
is particularly interesting because it is stable by operations: addition, multiplication,
composition and ODE solving (provided the solution is polynomially bounded). Notice
that $\gval{\poly}$ is not simply the intersection of $\gval{}$ with the set of
functions bounded by a polynomial, as shown in Example \ref{ex-notin-gpval}.
\end{remark}

Our last result is simple but very useful. Generable functions are continuous
and continuously differentiable, so locally Lipschitz continuous. We can give a
precise expression for the modulus of continuity in the case where the domain of
definition is simple enough.

% remark: lemma needs that sp is increasing to be true

\begin{proposition}[Modulus of continuity]\label{prop:generable_mod_cont}
Let $\mtt{sp}:\Rp\rightarrow\Rp$, $f\in\gval{\mtt{sp}}$. There exists $q\in\K[\R]$ such that
for any $x_1,x_2\in\dom{f}$, if $[x_1,x_2]\subseteq\dom{f}$ then
    $\infnorm{f(x_1)-f(x_2)}\leqslant\infnorm{x_1-x_2}q(\mtt{sp}(\max(\infnorm{x_1},\infnorm{x_2})))$.
In particular, if $\dom{f}$ is convex then $f$ has a polynomial modulus of continuity.
\end{proposition}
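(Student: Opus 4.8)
The plan is to exploit the fact that a generable function $f\in\gval{\mtt{sp}}$ arises, via Definition \ref{def:gpac_generable_ext}, as $f=y_{1..\ell}$ for a function $y$ satisfying $\jacobian{y}(x)=p(y(x))$ with $\infnorm{y(x)}\leqslant\mtt{sp}(\infnorm{x})$, where $p\in\MAT{n}{d}{\K}[\R^n]$. The Jacobian $\jacobian{y}$ controls the modulus of continuity of $y$ (hence of $f$), and the Jacobian is a polynomial in $y$, which is in turn bounded by $\mtt{sp}$. So the natural route is: bound $\infnorm{\jacobian{y}(x)}$ by $\sigmap{p}$ applied to a power of $\infnorm{y(x)}$, i.e.\ by $q_0(\mtt{sp}(\infnorm{x}))$ for an explicit polynomial $q_0$ depending only on $p$ (sum of absolute values of coefficients and degree), and then integrate along the segment $[x_1,x_2]$.

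Concretely, first I would fix $x_1,x_2\in\dom{f}$ with $[x_1,x_2]\subseteq\dom{f}$ and set $\gamma(s)=x_1+s(x_2-x_1)$ for $s\in[0,1]$, which stays in $\dom{f}$ by hypothesis. Then $\frac{d}{ds}y(\gamma(s))=\jacobian{y}(\gamma(s))(x_2-x_1)$, so
\[
\infnorm{f(x_1)-f(x_2)}\leqslant\infnorm{y(x_1)-y(x_2)}\leqslant\int_0^1\infnorm{\jacobian{y}(\gamma(s))}\,ds\cdot\infnorm{x_2-x_1}.
\]
Next I would bound $\infnorm{\jacobian{y}(\gamma(s))}=\max_{i,j}|p_{ij}(y(\gamma(s)))|$. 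For any polynomial $p_{ij}$ in $n$ variables, $|p_{ij}(v)|\leqslant\sigmap{p_{ij}}\max(1,\infnorm{v})^{\degp{p_{ij}}}\leqslant\sigmap{p}\max(1,\infnorm{v})^{\degp{p}}$. Since $\infnorm{y(\gamma(s))}\leqslant\mtt{sp}(\infnorm{\gamma(s)})$ and $\infnorm{\gamma(s)}\leqslant\max(\infnorm{x_1},\infnorm{x_2})$ (because the infinity-norm of a convex combination is at most the max of the norms) and $\mtt{sp}$ is nondecreasing, we get $\infnorm{\jacobian{y}(\gamma(s))}\leqslant\sigmap{p}\max(1,\mtt{sp}(\max(\infnorm{x_1},\infnorm{x_2})))^{\degp{p}}$. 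This is independent of $s$, so the integral is trivial, and defining $q(t)=\sigmap{p}\max(1,t)^{\degp{p}}$ — or, to make it an honest element of $\K[\R]$, e.g.\ $q(t)=\sigmap{p}(1+t^2)^{\degp{p}}$, noting $\sigmap{p}\in\K$ since $p$ has coefficients in $\K$ — yields the claimed inequality $\infnorm{f(x_1)-f(x_2)}\leqslant\infnorm{x_1-x_2}\,q(\mtt{sp}(\max(\infnorm{x_1},\infnorm{x_2})))$.

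For the ``in particular'' clause: if $\dom{f}$ is convex then for any two points the segment between them lies in $\dom{f}$, so the inequality holds for all pairs $x_1,x_2\in\dom{f}$, and on any bounded subset $\mtt{sp}(\max(\infnorm{x_1},\infnorm{x_2}))$ is bounded, so $t\mapsto q(\mtt{sp}(t))$ composed with the norm bound gives a polynomial-in-the-data modulus of continuity; one can phrase this as: $f$ restricted to any ball of radius $R$ is Lipschitz with constant $q(\mtt{sp}(R))$. I do not expect a serious obstacle here; the only mildly delicate points are (i) making sure the comparison function lands in $\K[\R]$ rather than merely being a polynomial with real coefficients — handled by observing $\sigmap{p}\in\K$ — and (ii) the elementary but necessary observation that $\infnorm{\cdot}$ of a convex combination is bounded by the max of the $\infnorm{\cdot}$'s, which lets us feed $\max(\infnorm{x_1},\infnorm{x_2})$ into the nondecreasing $\mtt{sp}$ uniformly along the segment.
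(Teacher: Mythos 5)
Your proposal is correct and follows essentially the same route as the paper's proof: parametrize the segment, write $f(x_2)-f(x_1)$ as the integral of the Jacobian along it, bound the entries $p_{ij}(y(\cdot))$ by $\sigmap{p}\max(1,\infnorm{y})^{\degp{p}}$, and feed $\infnorm{(1-\alpha)x_1+\alpha x_2}\leqslant\max(\infnorm{x_1},\infnorm{x_2})$ into the nondecreasing $\mtt{sp}$. The only slip is writing $\infnorm{\jacobian{y}(\gamma(s))}$ as the entrywise maximum $\max_{i,j}|p_{ij}(y(\gamma(s)))|$ rather than the subordinate norm $\max_i\sum_j|p_{ij}(y(\gamma(s)))|$ needed to bound $\infnorm{\jacobian{y}\,(x_2-x_1)}$; this costs a factor $d$ (the paper's bound is $d\,\sigmap{p}\max(1,\cdot)^{\degp{p}}$), which is harmless since it can be absorbed into $q$, and your remark that one should replace $\max(1,t)^{\degp{p}}$ by an honest polynomial such as $(1+t^2)^{\degp{p}}$ with $\sigmap{p}\in\K$ is in fact slightly more careful than the paper's own conclusion.
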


\begin{proof}
Apply Definition \ref{def:gpac_generable_ext} to get $d,\ell,n,p,x_0,y_0$ and $y$. Let $k=\degp{p}$. Recall that
for a matrix, the subordinate norm is given by $|||M|||=\max_i\sum_j|M_{ij}|$. Then:
\begin{align*}
\infnorm{f(x_1)-f(x_2)}&=\infnorm{\int_{x_1}^{x_2}\jacobian{y_{1..\ell}}(x)dx}
    =\infnorm{\int_0^1\jacobian{y_{1..\ell}}((1-\alpha)x_1+\alpha x_2)(x_2-x_1)d\alpha}\\
    &\leqslant\int_0^1|||\jacobian{y_{1..\ell}}((1-\alpha)x_1+\alpha x_2)|||\cdot\infnorm{x_2-x_1}d\alpha\\
    &\leqslant\infnorm{x_2-x_1}\int_0^1\max_{i\in\intinterv{1}{\ell}}\sum_{j=1}^d|p_{ij}(y((1-\alpha)x_1+\alpha x_2))|d\alpha\\
    &\leqslant\infnorm{x_2-x_1}\int_0^1\max_{i\in\intinterv{1}{\ell}}\sum_{j=1}^d\sigmap{p}\max(1,\infnorm{y((1-\alpha)x_1+\alpha x_2)})^k)d\alpha\\
    &\leqslant\infnorm{x_2-x_1}\int_0^1\max_{i\in\intinterv{1}{\ell}}d\sigmap{p}\max(1,\mtt{sp}(\infnorm{(1-\alpha)x_1+\alpha x_2}))^kd\alpha\\
    &\leqslant\infnorm{x_2-x_1}\int_0^1d\sigmap{p}\max(1,\mtt{sp}(\max(\infnorm{x_1},\infnorm{x_2})))^kd\alpha\\
    &\leqslant\infnorm{x_2-x_1}d\sigmap{p}\max(1,\mtt{sp}(\max(\infnorm{x_1},\infnorm{x_2})))^k\\
\end{align*}
\end{proof}

\section{Analyticity of generable functions}\label{sec:pivp_analytic}

It is a well-known result that the solution of a PIVP $y'=p(y)$ (and more generally,
of an analytic differential equation $y'=f(y)$ where $f$ is analytic) is real analytic on
its domain of definition. In the previous section we defined a generalized notion
of generable function satisfying $\jacobian{y}=p(y)$ which analyticity is less immediate.
In this section we go through the proof in detail, which of course subsumes
the result for PIVP.

We recall a well-known characterization of analytic functions. It is indeed much
easier to show that a function is infinitely differentiable and of controlled growth,
rather than showing the convergence of the Taylor series.

\begin{proposition}[Characterization of analytic functions]\label{prop:analytic_characterization}
Let $f\in C^\infty(U)$ for some open subset $U$ of $\R^m$. Then $f$ is analytic on $U$
if and only if, for each $u\in U$, there are an open ball $V$, with $u\in V\subseteq U$,
and constants $C>0$ and $R>0$ such that the derivatives of $f$ satisfy
\[|\partial_\alpha f(x)|\leqslant C\frac{\alpha!}{R^{|\alpha|}}\qquad x\in V,\alpha\in\N^m\]
\end{proposition}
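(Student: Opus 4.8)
The plan is to prove the characterization of analytic functions stated in \propref{prop:analytic_characterization} as a self-contained lemma about $C^\infty$ functions on open subsets of $\R^m$. Recall that $f$ is analytic on $U$ if, for every $u\in U$, the Taylor series $\sum_\alpha \frac{\partial_\alpha f(u)}{\alpha!}(x-u)^\alpha$ converges to $f(x)$ on some neighborhood of $u$. Both directions are local, so I would fix $u\in U$ throughout.

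For the ``only if'' direction, suppose $f$ is analytic at $u$, so its Taylor series converges on some polydisc around $u$. By the standard theory of multivariate power series, convergence at a point $x$ with $|x_i-u_i|=2R$ for all $i$ forces the coefficients to be bounded: there is $M>0$ with $\left|\frac{\partial_\alpha f(u)}{\alpha!}\right|(2R)^{|\alpha|}\leqslant M$, hence $|\partial_\alpha f(u)|\leqslant M\frac{\alpha!}{(2R)^{|\alpha|}}$. To get the bound uniformly on a small ball $V$ around $u$ rather than just at $u$, I would shrink: for $x$ in the ball of radius $R$ about $u$, $x$ still lies in the polydisc of polyradius $2R$ about $u$, so one can re-expand the (analytic) function around $x$ and apply the same Cauchy-type estimate on the remaining polydisc of polyradius $R$ about $x$, giving $|\partial_\alpha f(x)|\leqslant C\frac{\alpha!}{R^{|\alpha|}}$ with $C$ independent of $x\in V$. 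This is essentially the Cauchy estimates for the coefficients of a convergent power series, applied with a fixed margin.

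For the ``if'' direction — the one I expect to be the real work — suppose the derivative bounds $|\partial_\alpha f(x)|\leqslant C\frac{\alpha!}{R^{|\alpha|}}$ hold on a ball $V$ around $u$. I would show the Taylor series at $u$ converges to $f$ on a smaller ball. Fix $x$ with $\infnorm{x-u}<\rho$ for a $\rho<R/m$ to be chosen, consider the one-variable function $\varphi(s)=f(u+s(x-u))$ on $[0,1]$, and apply the one-dimensional Taylor formula with integral remainder: $\varphi(1)=\sum_{k=0}^{N}\frac{\varphi^{(k)}(0)}{k!}+\frac{1}{N!}\int_0^1(1-s)^N\varphi^{(N+1)}(s)\,ds$. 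The chain rule expresses $\varphi^{(k)}(0)$ as $\sum_{|\alpha|=k}\binom{k}{\alpha}\partial_\alpha f(u)(x-u)^\alpha$, which reassembles the degree-$k$ part of the multivariate Taylor polynomial. For the remainder, the same multinomial expansion together with the hypothesis bound gives $|\varphi^{(N+1)}(s)|\leqslant C(N+1)!\,(m\rho/R)^{N+1}$ (using $\sum_{|\alpha|=N+1}\binom{N+1}{\alpha}=m^{N+1}$ and $|(x-u)^\alpha|\leqslant\rho^{N+1}$), so the remainder is bounded by $C(m\rho/R)^{N+1}$, which tends to $0$ as $N\to\infty$ provided $m\rho<R$. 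Hence the Taylor series converges to $f(x)$ on the ball of radius $\rho$, proving analyticity at $u$; since $u$ was arbitrary, $f$ is analytic on $U$.

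The main obstacle is purely bookkeeping: correctly tracking the multinomial/combinatorial constants (the factors $\binom{k}{\alpha}$, the identity $\sum_{|\alpha|=k}\binom{k}{\alpha}=m^k$, and the matching of $\alpha!$ in numerator and denominator) so that the geometric factor $(m\rho/R)^{N+1}$ really governs convergence, and choosing the radii ($2R$, $R$, $\rho$) consistently so the shrinking arguments in the ``only if'' direction line up. None of the steps is deep — this is a classical fact — so I would present it concisely, citing the one-dimensional Taylor formula with integral remainder and the elementary theory of convergent power series, and spend the bulk of the written proof on the remainder estimate in the ``if'' direction.
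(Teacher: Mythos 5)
Your plan is correct, but it diverges from the paper in an honest way: the paper does not prove this proposition at all, it simply cites Proposition~2.2.10 of Krantz--Parks \cite{KP02}, whereas you give a self-contained classical proof. Your ``if'' direction is the standard and sound argument: Taylor's formula with integral remainder along the segment, the chain rule giving $\varphi^{(N+1)}(s)=\sum_{|\alpha|=N+1}\binom{N+1}{\alpha}\partial_\alpha f(u+s(x-u))(x-u)^\alpha$, and the remainder bound $C(N+1)!(m\rho/R)^{N+1}$ --- note that the step making this work is $\alpha!\leqslant(N+1)!$ for $|\alpha|=N+1$, which combined with $\sum_{|\alpha|=N+1}\binom{N+1}{\alpha}=m^{N+1}$ gives exactly your estimate, so the choice $\rho<R/m$ does kill the remainder; you should also remark that the degree-grouped convergence you obtain, together with the absolute convergence of $\sum_\alpha\frac{|\partial_\alpha f(u)|}{\alpha!}\rho^{|\alpha|}$ (immediate from the hypothesis for $\rho<R$), yields convergence of the genuine multi-indexed power series to $f$, which is what the definition of real analyticity asks for. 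The ``only if'' direction is where your sketch hides the actual work: boundedness of the coefficients at a single point (Abel's lemma) is easy, but upgrading it to a bound on $\partial_\beta f(x)$ that is \emph{uniform} over $x$ in a smaller ball requires either termwise differentiation of the series with the identity $\sum_{\alpha}\binom{\alpha+\beta}{\alpha}t^{|\alpha|}=(1-t)^{-|\beta|-m}$ (giving $C=M(1-r/\rho)^{-m}$ and $R=\rho-r$), or a complexification plus Cauchy integral formula argument; ``re-expand and apply the same Cauchy-type estimate'' is the right idea but this computation is the one place a written version must not skip. Compared with the paper, your route costs a page of bookkeeping but buys a self-contained statement; the paper's citation buys brevity at the price of an external dependency.
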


\begin{proof}See proposition 2.2.10 of \cite{KP02}.
\end{proof}

In order to use this result, we show that the derivatives of generable functions
at a point $x$ do not grow faster than the described bound. We
use a generalization of Faà di Bruno formula for the derivatives of a composition.

\begin{theorem}[Generalised Faà di Bruno's formula]\label{th:generalised_faa_di_bruno}
Let $f:X\subseteq\R^d\rightarrow Y\subseteq\R^n$ and $g:Y\rightarrow\R$ where $X,Y$
are open sets and $f,g$ are sufficiently smooth functions\footnote{More precisely,
for the formula to hold for $\alpha$, all the derivatives which appear in the right-hand
side must exist and be continuous}. Let $\alpha\in\N^d$ and $x\in X$, then
\[\partial_\alpha(g\circ f)(x)=\alpha!\smashoperator{\sum_{(s,\beta,\lambda)\in\mathcal{D}_\alpha}}
    \partial_\lambda g(f(x))\prod_{k=1}^s\frac{1}{\lambda_{k}!}
    \left(\frac{1}{\beta_k!}\partial_{\beta_k}f(x)\right)^{\lambda_{k}}\]
where $\partial_\lambda$ means $\partial_{\sum_{u=1}^s\lambda_u}$
and where $\mathcal{D}_\alpha$ is the list of \emph{decompositions} of $\alpha$.
A multi-index $\alpha\in\N^d$ is decomposed into $s\in\N$ parts $\beta_1,\ldots,\beta_s\in\N^d$
with multiplicies $\lambda_1,\ldots,\lambda_s\in\N^n$ respectively if $|\lambda_i|>0$
for all $i$, all the $\beta_i$ are distincts from each other and from $0$, and
$\alpha=|\lambda_1|\beta_1+\cdots+|\lambda_s|\beta_s$. Note that $\beta$ and $\lambda$
are multi-indices of multi-indices: $\beta\in\left(\N^d\right)^s$ and $\lambda\in\left(\N^d\right)^s$.
\end{theorem}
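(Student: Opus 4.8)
The plan is to prove the generalised Faà di Bruno formula by induction on $|\alpha|$, building the multivariate result from the well-understood one-variable chain rule. First I would establish the base case $|\alpha|=1$: for $\alpha=e_j$ a standard basis vector, the ordinary chain rule gives $\partial_j(g\circ f)(x)=\sum_{i=1}^n\partial_ig(f(x))\partial_jf_i(x)$, and one checks directly that the decompositions of $e_j$ are exactly the triples $(s,\beta,\lambda)$ with $s=1$, $\beta_1=e_j$, $\lambda_1=e_i$ for $i\in\intinterv{1}{n}$, so the stated sum collapses to precisely this expression. The main work is the inductive step: assuming the formula holds for all multi-indices of size $\leqslant N$, I would apply $\partial_j$ (for some coordinate $j$) to both sides of the formula for a multi-index $\alpha$ with $|\alpha|=N$, thereby proving it for $\alpha+e_j$.

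Carrying out the inductive step means differentiating the right-hand side, which is a sum of products of the form $\partial_\lambda g(f(x))\prod_{k=1}^s\frac{1}{\lambda_k!}\bigl(\frac{1}{\beta_k!}\partial_{\beta_k}f(x)\bigr)^{\lambda_k}$. By the product rule, $\partial_j$ hits either the factor $\partial_\lambda g(f(x))$ — producing $\sum_i\partial_{\lambda+e_i\text{-type}}g(f(x))\,\partial_jf_i(x)$ and thus either creating a new part $\beta=e_j$ with multiplicity $e_i$ or incrementing the multiplicity of an existing part equal to $e_j$ — or one of the $\partial_{\beta_k}f$ factors, which either increments $\beta_k$ to $\beta_k+e_j$ (when that part has multiplicity one, or splits off one copy when multiplicity exceeds one) or merges with an existing part. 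The heart of the proof is the bookkeeping lemma: the multiset of terms produced this way, after collecting like terms and tracking the combinatorial coefficients $\alpha!$, $\frac{1}{\lambda_k!}$, $\frac{1}{\beta_k!}$, is in bijection with $\mathcal{D}_{\alpha+e_j}$ with exactly the right coefficients. This is the standard but delicate part of any Faà di Bruno argument.

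The hard part will be precisely this coefficient accounting, because several distinct operations on a decomposition of $\alpha$ can yield the same decomposition of $\alpha+e_j$, and the binomial-type coefficients arising from differentiating a power $\bigl(\partial_{\beta_k}f\bigr)^{\lambda_k}$ must conspire with the factorials in the denominators and the overall $\alpha!$ (which changes to $(\alpha+e_j)!=(\alpha_j+1)\alpha!$) to reproduce the claimed formula. I would handle this by checking, for each type of decomposition $(s',\beta',\lambda')\in\mathcal{D}_{\alpha+e_j}$, exactly which decompositions of $\alpha$ map to it under which operation, and verifying the coefficient identity case by case; the key algebraic fact needed repeatedly is that $\partial_j\bigl(v^{\lambda}\bigr)=\sum_{i}\lambda_i v^{\lambda-e_i}\partial_jv_i$ combined with $\frac{1}{(\lambda-e_i)!}=\frac{\lambda_i}{\lambda!}$. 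Once the bijection and coefficient match are established for $\alpha+e_j$, the induction closes. Finally, I would remark that the smoothness hypothesis in the footnote is exactly what is needed for all the mixed partials appearing on the right-hand side to exist and for Schwarz's theorem to justify the order-independence used implicitly in writing $\partial_\alpha$.
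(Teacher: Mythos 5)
The paper does not actually prove this theorem: its ``proof'' is a one-line citation to the literature (Ma's combinatorial proof and the short proof of Encinas and Gallego Masqu\'e), so any self-contained argument you give is by construction a different route. Your plan --- induction on $|\alpha|$, with the base case $|\alpha|=1$ reducing to the chain rule and the inductive step obtained by applying $\partial_j$ and re-sorting terms into decompositions of $\alpha+e_j$ --- is the standard and workable way to prove the formula, and you correctly identify the key local identities ($\partial_j(v^{\lambda})=\sum_i\lambda_i v^{\lambda-e_i}\partial_j v_i$ together with $\tfrac{1}{(\lambda-e_i)!}=\tfrac{\lambda_i}{\lambda!}$, and the passage $\alpha!\to(\alpha_j+1)\,\alpha!$).

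However, as written there is a genuine gap: the step you yourself call the heart of the proof --- the bijection-with-multiplicities between the terms produced by the product rule and the set $\mathcal{D}_{\alpha+e_j}$, with exact matching of coefficients --- is only announced (``checking \dots case by case''), not carried out. This is not a routine verification that can be waved through: a single decomposition $(s',\beta',\lambda')$ of $\alpha+e_j$ typically arises from several distinct parent decompositions of $\alpha$ via several distinct operations (creating the part $e_j$, incrementing the multiplicity of an existing part $e_j$, promoting one copy of some $\partial_{\beta_k}f_i$ to $\partial_{\beta_k+e_j}f_i$ with possible merging into an already present part $\beta_k+e_j$), and one must show that the sum of the resulting coefficients --- including the multinomial factors released by differentiating the power $(\partial_{\beta_k}f)^{\lambda_k}$ and the normalisations $\tfrac{1}{\lambda_k!}$, $\tfrac{1}{\beta_k!}$ --- collapses exactly to $\tfrac{(\alpha+e_j)!}{\lambda'!}\prod_k\tfrac{1}{\beta'_k!^{|\lambda'_k|}}$-type terms. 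Until that accounting lemma is stated and proved (or the argument is replaced by a direct combinatorial count as in the cited references), the proposal is a correct proof plan rather than a proof; the smoothness/Schwarz remark at the end is fine and matches the footnote in the statement.
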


\begin{proof}
See \cite{Ma09} or \cite{Encinas2003975}.
\end{proof}

We have seen that one-dimensional GPAC generable functions are analytic. We now extend this
result to the multidimensional case.

\begin{proposition}[Generable implies analytic]\label{prop:gpac_ext_analytic}
If $f\in\gval{}$ then $f$ is real-analytic on $\dom{f}$.
\end{proposition}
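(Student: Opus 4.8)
The plan is to apply the characterization of analytic functions from \propref{prop:analytic_characterization}: since $f\in\gval{}$ is already $C^\infty$ on $\dom{f}$ (by \remref{rem:gpac_ext_regularity}), it suffices to fix a point $u$ in the interior of $\dom{f}$, produce an open ball $V\ni u$ contained in $\dom{f}$, and constants $C,R>0$ such that $|\partial_\alpha y_i(x)|\leqslant C\,\alpha!/R^{|\alpha|}$ for all $x\in V$, $\alpha\in\N^d$, and all components $y_i$ of the witnessing $y$ from \defref{def:gpac_generable_ext}; restricting to $i\in\intinterv{1}{\ell}$ then gives the result for $f$. Because $\jacobian{y}=p(y)$, differentiating this identity repeatedly expresses every partial derivative of $y$ in terms of derivatives of the polynomial $p$ composed with $y$, so the Generalised Faà di Bruno formula (\thref{th:generalised_faa_di_bruno}) is the natural tool to get a closed combinatorial handle on $\partial_\alpha y$.

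\textbf{Key steps.} First I would fix $u$ in the interior of $\dom{f}$, choose $r>0$ with $\overline{\openball{u}{2r}}\subseteq\dom{f}$, and set $M=\sup_{x\in\overline{\openball{u}{2r}}}\infnorm{y(x)}$, which is finite by continuity; let $V=\openball{u}{r}$. Since $p$ has some fixed degree $k=\degp{p}$ and coefficient-sum bound $\sigmap{p}$, on $V$ all the quantities $|\partial_{\beta}(p_{ij})(y(x))|$ that appear are bounded by a constant depending only on $k$, $\sigmap{p}$, $M$, and $d$. The heart of the argument is an induction on $|\alpha|$: using $\partial_j y_i = p_{ij}(y)$ and applying Faà di Bruno to differentiate $p_{ij}\circ y$, one writes $\partial_\alpha y_i$ as $\alpha!$ times a sum over decompositions $(s,\beta,\lambda)\in\mathcal{D}_\alpha$ of terms involving $\partial_\lambda p_{ij}(y(x))$ and products of $\bigl(\tfrac{1}{\beta_k!}\partial_{\beta_k}y(x)\bigr)^{\lambda_k}$, where each $\beta_k$ has strictly smaller order than $\alpha$. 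Feeding in the inductive hypothesis $|\partial_{\beta_k}y(x)|\leqslant C\,\beta_k!/R^{|\beta_k|}$ and using $|\lambda_1|\beta_1+\cdots+|\lambda_s|\beta_s=\alpha$ (so the powers of $R$ collect to $R^{-(|\alpha|-1)}$, the ``$-1$'' coming from the single integration/derivative step), one reduces the bound on $\partial_\alpha y_i$ to: (constant) $\times$ (combinatorial count of decompositions) $\times\ C\,\alpha!/R^{|\alpha|-1}$. Choosing $R$ small enough (depending on $k$, $d$, $\sigmap{p}$, $M$) that the constant times the decomposition count is $\leqslant R$, the induction closes with the uniform bound $|\partial_\alpha y_i(x)|\leqslant C\,\alpha!/R^{|\alpha|}$ on $V$, and an initial choice $C\geqslant M$ handles the base case $\alpha=0$.

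\textbf{Main obstacle.} The delicate point is the bookkeeping in the induction: one must verify that the number of decompositions in $\mathcal{D}_\alpha$, weighted by the $1/\lambda_k!$ and $1/\beta_k!$ factors appearing in Faà di Bruno, grows only geometrically in $|\alpha|$ (essentially like $c^{|\alpha|}$ for some constant $c$ independent of $\alpha$), so that it can be absorbed into a single factor of $R^{-1}$ per differentiation step without destroying the $\alpha!/R^{|\alpha|}$ shape. This is exactly the kind of estimate that underlies the classical one-dimensional proof that PIVP solutions are analytic; here it must be carried out with multi-indices, and the constraint $\alpha=\sum|\lambda_i|\beta_i$ together with the factorial denominators is what keeps everything under control. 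Once this combinatorial estimate is in hand, everything else is routine, and \propref{prop:analytic_characterization} immediately yields that $f$ is real-analytic on $\dom{f}$.
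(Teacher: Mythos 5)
Your overall strategy coincides with the paper's: invoke the derivative-growth characterization of analyticity (\propref{prop:analytic_characterization}), apply the generalised Fa\`a di Bruno formula (\thref{th:generalised_faa_di_bruno}) to $\partial_j y_i=p_{ij}(y)$, and run an induction on $|\alpha|$ with an ansatz of the shape $\infnorm{\partial_\alpha y(x)}\leqslant C\,\alpha!/R^{|\alpha|}$ on a small ball. The gap sits exactly at the step you yourself flag as the main obstacle, and the mechanism you propose for it is not the one that actually works. The paper does not need, and does not prove, a geometric bound on the weighted number of decompositions: its decisive observation is that $\partial_\lambda p_{ij}=0$ whenever $|\lambda|>\degp{p}$, so only decompositions in $\mathcal{D}'_\alpha=\{(s,\beta,\lambda)\in\mathcal{D}_\alpha : |\lambda|\leqslant\degp{p}\}$ contribute; since then $s\leqslant\degp{p}$, the cardinality of $\mathcal{D}'_\alpha$ is at most $A(1+|\alpha|)^b$ with $A,b$ depending only on $\degp{p}$, $n$ and $d$. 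Your proposal never uses that derivatives of the polynomial vanish beyond its degree, and without this truncation it is not at all obvious that the relevant (weighted) count of decompositions is ``essentially $c^{|\alpha|}$''; you give no argument for that claim.

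Moreover, even granting a bound $c^{|\alpha|}$, the absorption you describe --- ``a single factor of $R^{-1}$ per differentiation step'' --- does not close the induction with the stated ansatz. Writing $B_\alpha(x)=\frac{1}{\alpha!}\infnorm{\partial_\alpha y(x)}$, the inductive step yields $B_{\alpha+j}(x)\leqslant C(y(x))\left(\frac{C(y(x))}{R}\right)^{|\alpha|}\cdot(\text{count})$, and to conclude $B_{\alpha+j}(x)\leqslant\left(\frac{C(y(x))}{R}\right)^{|\alpha|+1}$ you need the count to be dominated by a quantity independent of $\alpha$ (essentially $1/R$); a factor that grows with $|\alpha|$, whether geometric or polynomial, cannot be swallowed by a fixed $R$ without adjusting the ansatz (e.g.\ inserting a subgeometric correction factor, or redoing the estimate \`a la majorant series). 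In fairness, the paper's own final choice of $R$ is itself stated loosely, but the substantive ingredient that makes the bookkeeping tractable --- the degree truncation of $\mathcal{D}_\alpha$ --- is present there and absent from your proposal, so as written your argument has a genuine hole at its only nontrivial point.
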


\begin{proof}
Let $\mtt{sp}:\R\rightarrow\Rp$, $p\in\MAT{n}{d}{}[\R^n]$ and $y:\R^n\rightarrow\R^n$
from Definition \ref{def:gpac_generable_ext}. It is sufficient to
prove that $y$ is analytic on $D=\dom{f}$ to get the result. Let $i\in\intinterv{1}{n}$,
and $j\in\intinterv{1}{d}$, since $\jacobian{y}=p(y)$ then $\partial_jy_i(x)=p_{ij}(y(x))$
and $p_{ij}$ is a polynomial vector so clearly $C^\infty$.
By \remref{rem:gpac_ext_regularity}, $y$ is also $C^\infty$ so we can apply \thref{th:generalised_faa_di_bruno}
for any $x\in D$, $\alpha\in\N^d$ and get
\[\partial_\alpha(\partial_jy_i)(x)=\partial_\alpha (p_{ij}\circ y)(x)=\alpha!\sum_{(s,\beta,\lambda)\in\mathcal{D}_\alpha}
    \partial_\lambda p_{ij}(y(x))\prod_{k=1}^s\frac{1}{\lambda_{k}!}
    \left(\frac{1}{\beta_k!}\partial_{\beta_k}y(x)\right)^{\lambda_{k}}\]
Define $B_\alpha(x)=\frac{1}{\alpha!}\infnorm{\partial_\alpha y(x)}$, and denote by $\alpha+j$ the
multi-index $\lambda$ such that $\lambda_j=\alpha_j+1$ and $\lambda_k=\alpha_k$ for $k\neq j$.
Define $C(y(x))=\max_{i,j,\lambda}(|\partial_\lambda p_{ij}(y(x))|)$ and note that it is well-defined
because $\partial_\lambda p_{ij}$ is zero whenever $|\lambda|>\degp{p_{ij}}$.
Define $\mathcal{D}_\alpha'=\{(s,\beta,\lambda)\in\mathcal{D}_\alpha\thinspace|\thinspace|\lambda|\leqslant\degp{p}\}$.
The equations becomes:
\begin{align*}
|\partial_\alpha(\partial_jy_i)(x)|
    &\leqslant\alpha!\sum_{(s,\beta,\lambda)\in\mathcal{D}_\alpha}
    |\partial_\lambda p_{ij}(y(x))|\prod_{k=1}^s\frac{1}{\lambda_{k}!}
    \left|\frac{1}{\beta_k!}\partial_{\beta_k}y(x)\right|^{\lambda_{k}}\\
    &\leqslant\alpha!C(y(x))\sum_{(s,\beta,\lambda)\in\mathcal{D}'_\alpha}
    \prod_{k=1}^s\frac{1}{\lambda_{k}!}
    B_{\beta_k}(x)^{|\lambda_{k}|}.\\
\end{align*}
Note that the right-hand side of the expression does not depend on $i$. We are going
to show by induction that $B_\alpha(x)\leqslant\left(\frac{C(y(x))}{R}\right)^{|\alpha|}$
for some choice of $R$.
The initialization for $|\alpha|=1$ is trivial because $\alpha!=1$ and
$B_\alpha(x)=\infnorm{\partial_\alpha y(x)}\leqslant C(y(x))$ so we only need $R\leqslant1$.
The induction step is as follows:
\begin{align*}
B_{\alpha+j}(x)&\leqslant C(y(x))\sum_{(s,\beta,\lambda)\in\mathcal{D}'_\alpha}
    \prod_{k=1}^s\frac{1}{\lambda_{k}!}
    B_{\beta_k}(x)^{|\lambda_{k}|}\\
    &\leqslant C(y(x))\sum_{(s,\beta,\lambda)\in\mathcal{D}'_\alpha}
    \prod_{k=1}^s\frac{1}{\lambda_{k}!}\left(\frac{C(y(x))}{R}\right)^{|\beta_k||\lambda_k|}\\
    &\leqslant C(y(x))\sum_{(s,\beta,\lambda)\in\mathcal{D}'_\alpha}
    \frac{1}{\lambda!}\left(\frac{C(y(x))}{R}\right)^{\sum_{u=1}^s|\beta_k||\lambda_k|}\\
    &\leqslant C(y(x))\left(\frac{C(y(x))}{R}\right)^{|\alpha|}\sum_{(s,\beta,\lambda)\in\mathcal{D}'_\alpha}
    \frac{1}{\lambda!}\\
    &\leqslant C(y(x))\left(\frac{C(y(x))}{R}\right)^{|\alpha|}\card{\mathcal{D}'_\alpha}.
\end{align*}

Evaluating the exact cardinal of $\mathcal{D}'_\alpha$ is complicated but we only
need a good enough bound to get on with it. First notice that for any
$(s,\beta,\lambda)\in\mathcal{D}'_\alpha$, we have $|\lambda|\leqslant\degp{p}$ by
definition, and since each $|\lambda_i|>0$, necessarily $s\leqslant\degp{p}$.
This means that there is a finite number, denote it by $A$, of $(s,\lambda)$ in $\mathcal{D}'_\alpha$.
For a given $\lambda$, we must have $\alpha=\sum_{i=1}^s|\lambda_i|\beta_i$ which implies that
$|\beta_{ij}|\leqslant|\alpha|$ and so there at most $(1+|\alpha|)^{ns}$ choices for $\beta$,
and since $s\leqslant\degp{p}$, $\card{\mathcal{D}'_\alpha}\leqslant A(1+|\alpha|)^b$
where $b$ and $A$ are constants. Choose $R\leqslant1$ such that $R^{|\alpha|}\geqslant A(1+|\alpha|)^b$ for all $\alpha$
to get the claimed bound on $B_\alpha(x)$.

To conclude with \propref{prop:analytic_characterization}, consider $x\in D$.
Let $V$ be an open ball of $D$ containing $x$. Let $M=\sup_{u\in V}C(y(x))$,
it is finite because $C$ is bounded by a polynomial, $\infnorm{y(x)}\leqslant\mtt{sp}(x)$
and $V$ is an open ball (thus included in a compact set). Finally we get:
\[\infnorm{\partial_\alpha y(x)}\leqslant\alpha!\left(\frac{M}{R}\right)^{|\alpha|}\]

\end{proof}

\section{Generable zoo}\label{sec:helper_func}

In this section, we introduce a number of generable functions. Since a GPAC (PIVP) only
generates analytic functions, it cannot generate discontinuous functions like the sign. However
these functions can be arbitrarily approximated by GPACs, as we show in this section, where
we present a ``zoo'' of such approximating functions. This zoo illustrates the wide range of generable functions.
Some of the functions selected in this ``zoo'' were chosen to approximate noncontinuous functions traditionally used
in computer programs like the absolute value or the sign function. Other functions were selected
due to their usefulness for potential applications, like simulating Turing machines with a GPAC,
using a bounded amount of resources, which we intend to explore in an incoming paper.

We note that the approximation of a discontinuous functions by a GPAC generable function is uniform, since we
provide the GPAC with a parameter which sets the maximum allowed error of the approximation. The use of different
values of the parameter by the same GPAC allows to dynamically change the quality of the approximation, without
making any other change on the GPAC. The table below gives a list of the functions and their purpose.

We use the term ``dead zone'' to refer to interval(s) where the generable
function does not compute the expected function (but still has controlled behavior).
We use the term ``high'' to mean that the function is close to $x$ (an input)
within $e^{-\mu}$ where $\mu$ is another input. Conversely, the use the term ``low''
to mean that it is close to $0$ within $e^{-\mu}$. And ``X'' means something in between.
Finally ``integral'' means that function is of the form $\phi x$ and
the integral of $\phi$ (on some interval) is between $1$ and a constant.

We conclude this section by giving a large class of functions that can be
uniformly approximated by (polynomially bounded) generable functions, except on a small number of dead zones
(typically at discontinuity points) that can be made arbitrary small, see Section \ref{sec:zoo_generic}.

\begin{longtable}{@{}p{2.5cm}p{2.5cm}p{7cm}@{}}
\multicolumn{3}{@{}l}{\textbf{\Large Generable Zoo}}\\
Name & Notation & Comment \\
\midrule
\endhead
Sign & $\sg(x,\mu,\lambda)$ & Compute the sign of $x$ with error $e^{-\mu}$
    and dead zone in $[-\lambda^{-1},\lambda^{-1}]$. See \ref{lem:sg} \\
Floor & $\ip{1}(x,\mu,\lambda)$ & Compute $\intp_1(x)$ with error $e^{-\mu}$
    and dead zone in $[-\lambda^{-1},\lambda^{-1}]$. See \ref{cor:ip1}\\
Abs & $\abs(x,\mu,\lambda)$ & Compute $|x|$ with error with error $e^{-\mu}$
    and dead zone in $[-\lambda^{-1},\lambda^{-1}]$. See \ref{lem:abs}\\
Max & $\mx(x,y,\mu,\lambda)$ & Compute $\max(x,y)$ and  $\infnorm{x}$ with error $e^{-\mu}$
and dead zone for $x-y\in[-\lambda^{-1},\lambda^{-1}]$. See \ref{lem:max}\\
Norm & $\norm_\delta(x,\mu,\lambda)$ & Compute $\infnorm{x}$ with error $\delta$. See \ref{lem:norm}\\
Round & $\rnd(x,\mu,\lambda)$ & Compute $\round{x}$ with error $e^{-\mu}$ and dead zones in
    $[n-\frac{1}{2}+\lambda^{-1},n+\frac{1}{2}-\lambda^{-1}]$ for all $n\in\Z$. See \ref{lem:rnd}\\
Low-X-High & $\lxh_{[a,b]}(t,\mu,x)$ & Compute $0$ when $t\in]-\infty,a]$ and $x$ when $t\in[b,\infty[$ with error $e^{-\mu}$
    and a dead zone in $[a,b]$. See \ref{lem:lxh_hxl}\\
High-X-Low & $\hxl_{[a,b]}(t,\mu,x)$ & Compute $x$ when $t\in]-\infty,a]$ and $0$ when $t\in[b,\infty[$ with error $e^{-\mu}$
    and a dead zone in $[a,b]$. See \ref{lem:lxh_hxl}\\
\bottomrule
\end{longtable}

\subsection{Sign and rounding}

We begin with a small result on the hyperbolic tangent function, which will be
used to build several generable functions of interest.

\begin{lemma}[Bounds on $\tanh$]\label{lem:bounds_tanh}
$1-\sgn{t}\tanh(t)\leqslant e^{-|t|}$ for all $t\in\R$.
\end{lemma}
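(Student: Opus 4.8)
The plan is to prove the inequality $1 - \sgn{t}\tanh(t) \leqslant e^{-|t|}$ for all $t \in \R$ by reducing to the case $t \geqslant 0$ and then verifying a clean exponential bound directly. First I would observe that the left-hand side is an even function of $t$: replacing $t$ by $-t$ flips the sign of $\sgn{t}$ and the sign of $\tanh(t)$ simultaneously, so $\sgn{(-t)}\tanh(-t) = \sgn{t}\tanh(t)$, and of course $e^{-|{-t}|} = e^{-|t|}$. Hence it suffices to treat $t \geqslant 0$, where $\sgn{t} = 1$ (the case $t = 0$ gives $1 - 0 = 1 \leqslant 1$ trivially, so really I only need $t > 0$).

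For $t > 0$ I would write $\tanh(t) = \frac{e^t - e^{-t}}{e^t + e^{-t}}$, so that
\[
1 - \tanh(t) = \frac{(e^t + e^{-t}) - (e^t - e^{-t})}{e^t + e^{-t}} = \frac{2e^{-t}}{e^t + e^{-t}}.
\]
The claim $1 - \tanh(t) \leqslant e^{-t}$ is therefore equivalent to $\frac{2e^{-t}}{e^t + e^{-t}} \leqslant e^{-t}$, i.e. to $2 \leqslant e^t + e^{-t}$ after dividing by the positive quantity $e^{-t}$. This last inequality is just the statement that $2\cosh(t) \geqslant 2$, which holds because $\cosh(t) \geqslant 1$ for all real $t$ (equivalently, by AM–GM, $e^t + e^{-t} \geqslant 2\sqrt{e^t e^{-t}} = 2$). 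This completes the argument.

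There is essentially no obstacle here; the only mild subtlety is making sure the sign bookkeeping at $t = 0$ and the parity reduction are stated cleanly, since $\sgn{0} = 0$ and one wants the bound to hold there as well (which it does, with the left side equal to $1$ and the right side equal to $1$). An alternative, if one prefers to avoid the explicit exponential manipulation, is to note that $g(t) := e^{-t} - 1 + \tanh(t)$ satisfies $g(0) = 0$ and $g'(t) = -e^{-t} + 1 - \tanh^2(t) = (1 - e^{-t}) - \tanh^2(t)$; one would then check $g'(t) \geqslant 0$ for $t \geqslant 0$, but this requires comparing $1 - e^{-t}$ with $\tanh^2(t)$, which is slightly less transparent than the direct computation, so I would go with the direct approach above.
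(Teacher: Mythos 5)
Your proof is correct and follows essentially the same route as the paper: reduce by symmetry to $t\geqslant 0$ and compute $1-\tanh(t)=\frac{2e^{-t}}{e^{t}+e^{-t}}$ explicitly. The only (harmless) difference is in bounding the remaining factor: the paper shows $\frac{2e^{-t}}{1+e^{-2t}}\leqslant 1$ by a monotonicity argument on its derivative, whereas you reduce directly to $e^{t}+e^{-t}\geqslant 2$, i.e.\ $\cosh(t)\geqslant 1$, which is slightly more elementary.
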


\begin{proof}
The case of $t=0$ is trivial. Assume that $t\geqslant 0$ and observe that
$1-\tanh(t)=1-\frac{1-e^{-2t}}{1+e^{-2t}}=\frac{2e^{-2t}}{1+e^{-2t}}=e^{-t}\frac{2e^{-t}}{1+e^{-2t}}$.
Define $f(t)=\frac{2e^{-t}}{1+e^{-2t}}$ and check that $f'(t)=\frac{2e^{-t}(e^{-2t}-1)}{(1+e^{-2t})^2}\leqslant0$
for $t\geqslant0$. Thus $f$ is a non-increasing function and $f(0)=1$ which concludes.

If $t<0$ then note that $1-\sgn{t}\tanh(t)=1-\sgn{-t}\tanh(-t)$
so we can apply the result to $-t\geqslant0$ to conclude.
\end{proof}

The simplest generable function of interest uses the hyperbolic tangent
to approximate the sign function. On top of the sign function, we can build
an approximation of the floor function. See \figref{fig:sg} for a graphical representation.

\begin{definition}[Sign function]\label{def:sg}
For any $x,\mu,\lambda\in\R$ define
\[\sg(x,\mu,\lambda)=\tanh(x\mu\lambda)\]
\end{definition}

\begin{lemma}[Sign]\label{lem:sg}
$\sg\in\gval{\poly}$ and for any $x\in\R$ and $\lambda,\mu\geqslant0$,
\[|\sgn{x}-\sg(x,\mu,\lambda)|\leqslant e^{-|x|\lambda\mu}\leqslant1\]
In particular, $\sg$ is non-decreasing in $x$ and if $|x|\geqslant\lambda^{-1}$ then
\[|\sgn{x}-\sg(x,\mu,\lambda)|\leqslant e^{-\mu}\]
\end{lemma}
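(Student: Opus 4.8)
The plan is to verify the three claims about $\sg(x,\mu,\lambda)=\tanh(x\mu\lambda)$ in turn: membership in $\gval{\poly}$, the pointwise error bound, and the monotonicity/corollary consequences.

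First I would establish $\sg\in\gval{\poly}$. By Example~\ref{ex:elem_func_gpac} we know $\tanh\in\gval[\Q]{\lambdafunex{x}{1}}$, and the map $(x,\mu,\lambda)\mapsto x\mu\lambda$ is a polynomial, hence trivially generable and polynomially bounded (one can write it as a short PIVP in the three variables, or simply invoke that polynomials are generable by an easy extension of Example~\ref{ex:poly_generable} to several variables, with $\mtt{sp}(\alpha)=\alpha^3$). Then $\sg$ is the composition $\tanh\circ(\lambdafunex{(x,\mu,\lambda)}{x\mu\lambda})$, so by the composition item of Lemma~\ref{lem:gpac_ext_class_stable} it is generable, with growth bound $\max(\ovl{\mtt{sp}},\lambdafunex{x}{1}\circ\ovl{\mtt{sp}})$ where $\ovl{\mtt{sp}}$ is the polynomial bound of the inner map; since $\tanh$ is bounded by $1$, the resulting bound is still polynomial, so $\sg\in\gval{\poly}$. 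One should note this uses that $\K$ is a generable field, which is assumed throughout.

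Next, the error estimate. For fixed $x$ and $\lambda,\mu\geqslant0$, set $t=x\mu\lambda$. Then $\sgn{t}=\sgn{x}$ when $x\neq0$ (and the inequality is trivial at $x=0$ since both sides' left-hand side is $\le 1$ and $e^0=1$), and Lemma~\ref{lem:bounds_tanh} gives $1-\sgn{t}\tanh(t)\leqslant e^{-|t|}$, i.e. $|\sgn{x}-\sg(x,\mu,\lambda)|=|\sgn{t}|\,|\sgn{t}-\tanh(t)|=|1-\sgn{t}\tanh(t)|\leqslant e^{-|t|}=e^{-|x|\lambda\mu}$. The final bound $e^{-|x|\lambda\mu}\leqslant1$ holds since the exponent is nonpositive. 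Monotonicity in $x$ follows because $\tanh$ is increasing and $\mu\lambda\geqslant0$, so $x\mapsto x\mu\lambda$ is nondecreasing. Finally, if $|x|\geqslant\lambda^{-1}$ then $|x|\lambda\geqslant1$, so $e^{-|x|\lambda\mu}\leqslant e^{-\mu}$, giving the last display.

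I do not expect a genuine obstacle here: the statement is essentially a packaging of Example~\ref{ex:elem_func_gpac}, Lemma~\ref{lem:gpac_ext_class_stable}, and Lemma~\ref{lem:bounds_tanh}. The only point requiring a little care is the growth-rate bookkeeping for the $\gval{\poly}$ claim — one must check that composing a bounded generable function with a polynomial-growth generable function stays polynomially bounded, which is immediate from the $\max$ form of the composition bound since $\tanh$ contributes the constant bound $1$. A secondary subtlety is handling the degenerate cases $x=0$ or $\mu\lambda=0$ in the sign identity, which are disposed of by the trivial bound $|\cdot|\le 1 = e^0$.
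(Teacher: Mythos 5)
Your proposal is correct and follows essentially the same route as the paper: write $\sg=\tanh\circ\bigl((x,\mu,\lambda)\mapsto x\mu\lambda\bigr)$, invoke Example~\ref{ex:elem_func_gpac} and the composition item of Lemma~\ref{lem:gpac_ext_class_stable} to get $\sg\in\gval{\poly}$ (the paper records the bound $\alpha\mapsto\max(1,\alpha^3)$), and deduce the error bound from Lemma~\ref{lem:bounds_tanh} (the paper phrases the sign bookkeeping via oddness of $\tanh$, you via $\sgn{t}$ — the same computation), with monotonicity and the $|x|\geqslant\lambda^{-1}$ case handled identically. Your explicit treatment of the degenerate cases $x=0$ or $\mu\lambda=0$ is a small bonus the paper leaves implicit.
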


\begin{proof}
Note that $\sg=\tanh\circ f$ where $f(x,\mu,\lambda)=x\mu\lambda$. We saw in Example \ref{ex:elem_func_gpac}
that $\tanh\in\gval{\lambdafunex{t}{1}}$. By \lemref{lem:gpac_ext_class_stable},
$f\in\gval{\alpha\mapsto\max(1,\alpha^3)}$. Thus $\sg\in\gval{\lambdafunex{\alpha}{\max(1,\alpha^3)}}$.

Use \lemref{lem:bounds_tanh} and the fact that $\tanh$ is an odd function to get the
first bound. The second bound derives easily from the first. Finally, $\sg$ is
a non-decreasing function because $\tanh$ is an increasing function.
\end{proof}

\begin{definition}[Floor function]\label{def:ip1}
For any $x,\mu,\lambda\in\R$ define
\[\ip{1}(x,\mu,\lambda)=\frac{1+\sg(x-1,\mu,\lambda)}{2}\]
\end{definition}

\begin{lemma}[Floor]\label{cor:ip1}
$\ip{1}\in\gval{\poly}$ and for any $x\in\R$ and $\mu,\lambda\geqslant0$,
\[|\intp_1(x)-\ip{1}(x,\mu,\lambda)|\leqslant\frac{e^{-|x-1|\lambda\mu}}{2}\leqslant\frac{1}{2}\]
where $\intp_1(x)=0$ if $x<1$ and $1$ if $x\geqslant1$.
In particular $\ip{1}$ is non-decreasing in $x$ and if $|1-x|\geqslant\lambda^{-1}$ then
\[|\intp_1(x)-\ip{1}(x,\mu,\lambda)|<e^{-\mu}\]
%Finally,
%\[\int_{1-\lambda^{-1}}^{1+\lambda^{-1}}\ip{1}(x,\mu,\lambda)dx=\frac{1}{\lambda}\]
\end{lemma}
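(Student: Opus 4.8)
The plan is to read everything off the definition $\ip{1}(x,\mu,\lambda)=\frac{1+\sg(x-1,\mu,\lambda)}{2}$ together with \lemref{lem:sg}. For the membership $\ip{1}\in\gval{\poly}$, I would observe that $\ip{1}$ is the composition of $\sg\in\gval{\poly}$ with the affine map $(x,\mu,\lambda)\mapsto(x-1,\mu,\lambda)$ on the input side and the affine map $t\mapsto\frac{1+t}{2}$ on the output side. Both affine maps are polynomial, hence belong to $\gval{\poly}$, and $\gval{\poly}$ is stable under addition, multiplication and composition by \remref{rq:vingtquatre}; therefore $\ip{1}\in\gval{\poly}$. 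Concretely, composing $\sg\in\gval{\lambdafunex{\alpha}{\max(1,\alpha^3)}}$ with these affine maps and invoking \lemref{lem:gpac_ext_class_stable} keeps a polynomial space bound.

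For the quantitative estimate I would note that for $x\neq1$ one has $\intp_1(x)=\frac{1+\sgn{x-1}}{2}$, so
\[\left|\intp_1(x)-\ip{1}(x,\mu,\lambda)\right|=\frac12\left|\sgn{x-1}-\sg(x-1,\mu,\lambda)\right|\leqslant\frac{e^{-|x-1|\lambda\mu}}{2},\]
where the last inequality is \lemref{lem:sg} applied at the point $x-1$. For $x=1$ the left-hand side equals $\left|1-\frac12\right|=\frac12=\frac{e^{-0}}{2}$, so the inequality still holds, with equality. The bound $\leqslant\frac12$ then follows since $|x-1|\lambda\mu\geqslant0$ gives $e^{-|x-1|\lambda\mu}\leqslant1$.

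Monotonicity in $x$ follows because $\sg(\cdot,\mu,\lambda)$ is non-decreasing by \lemref{lem:sg}, and pre- and post-composition by the increasing affine maps $x\mapsto x-1$ and $t\mapsto\frac{1+t}{2}$ preserve this. Finally, if $|1-x|\geqslant\lambda^{-1}$ then $|x-1|\lambda\mu\geqslant\mu$, hence $\frac{e^{-|x-1|\lambda\mu}}{2}\leqslant\frac{e^{-\mu}}{2}<e^{-\mu}$, which is the last claim.

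\textbf{Main obstacle:} there is essentially none; this is a direct corollary of \lemref{lem:sg}, and the only point requiring a moment's care is the value at $x=1$, where $\sgn{0}=0$ forces $\ip{1}(1,\mu,\lambda)=\frac12$ whereas $\intp_1(1)=1$, so the error is exactly $\frac12$ there and the nonstrict bound is tight. This is consistent with the final strict bound $e^{-\mu}$, whose hypothesis $|1-x|\geqslant\lambda^{-1}$ fails precisely at $x=1$.
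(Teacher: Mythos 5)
Your proof is correct and is exactly the argument the paper intends: the paper states this lemma without proof as an immediate corollary of Lemma~\ref{lem:sg}, and your derivation (error bound via $\intp_1(x)=\frac{1+\sgn{x-1}}{2}$ for $x\neq1$, the exact error $\frac12$ at $x=1$, monotonicity from that of $\sg$, and $\gval{\poly}$ membership via composition with affine maps) fills in precisely those routine steps, including the one point worth care, namely $x=1$.
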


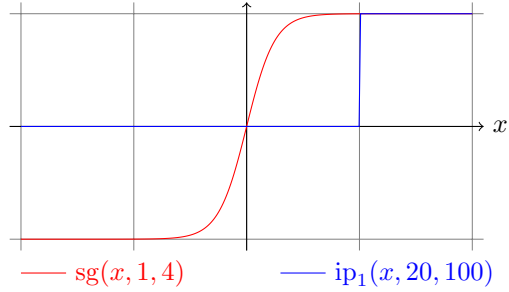
\begin{figure}
\begin{center}
\begin{tikzpicture}[domain=-2:2,samples=300,scale=1.5]
\draw[very thin,color=gray] (-2.1,-1.1) grid (2.1,1.1);
\draw[->] (0,-1.1) -- (0,1.1);
\draw[->] (-2.1,0) -- (2.1,0) node[right] {$x$};
\draw[color=red] plot[id=fn_xi_1] function{\fnsg{x}{1}{4}};
\draw[color=blue] plot[id=fn_sigma1_1] function{\fnipone{x}{20}{100}};
\draw[color=red] (-2,-1.3) -- (-1.6,-1.3) node[right] {$\sg(x,1,4)$};
\draw[color=blue] (0.3,-1.3) -- (0.7,-1.3) node[right] {$\ip{1}(x,20,100)$};
\end{tikzpicture}
\end{center}
\caption{Graph of $\sg$ and $\ip{1}$.}
\label{fig:sg}
\end{figure}

We will now see how to build a very precise approximation of the rounding function.
Of course rounding is not a continuous operation so we need a small deadzone around the discontinuity points.

\begin{definition}[Round function]\label{def:rnd}
For any $x\in\R$, $\lambda\geqslant2$ and $\mu\geqslant0$, define
\[\rnd(x,\mu,\lambda)=x-\frac{1}{\pi}\arctan(\operatorname{cltan}(\pi x,\mu,\lambda))\]
\[\operatorname{cltan}(\theta,\mu,\lambda)=
    \frac{\sin(\theta)}{\sqrt{\operatorname{nz}(\cos^2\theta,\mu+16\lambda^3,4\lambda^2)}}\sg(\cos\theta,\mu+3\lambda,2\lambda)\]
\[\operatorname{nz}(x,\mu,\lambda)=x+\frac{2}{\lambda}\ip{1}\left(1-x+\frac{3}{4\lambda},\mu+1,4\lambda\right)\]
\end{definition}

\begin{lemma}[Round]\label{lem:rnd}
For any $n\in\Z$, $\lambda\geqslant2$, $\mu\geqslant0$,
$|\rnd(x,\mu,\lambda)-n|\leqslant\frac{1}{2}$ for all $x\in\left[n-\frac{1}{2},n+\frac{1}{2}\right]$
and $|\rnd(x,\mu,\lambda)-n|\leqslant e^{-\mu}$ for all $x\in\left[n-\frac{1}{2}+\frac{1}{\lambda},n+\frac{1}{2}-\frac{1}{\lambda}\right]$.
Furthermore $\rnd\in\gval{\poly}$.
\end{lemma}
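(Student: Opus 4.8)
The plan is to read $\rnd$ as a real-analytic surrogate of the exact identity $\round{x}=x-\tfrac1\pi\arctan(\tan(\pi x))$ (valid off the half-integers because $\arctan(\tan(\pi u))=\pi u$ for $u\in(-\tfrac12,\tfrac12)$): $\operatorname{cltan}$ plays the role of $\tan$, and $\operatorname{nz}$ keeps the denominator $\cos^2\theta$ away from $0$ near the poles. First I would record two symmetries: since $\tanh,\arctan$ are odd and $\cos$ is even, $\operatorname{cltan}(-\theta,\mu,\lambda)=-\operatorname{cltan}(\theta,\mu,\lambda)$, and using $\sin(\pi(n+u))=(-1)^n\sin(\pi u)$, $\cos(\pi(n+u))=(-1)^n\cos(\pi u)$ together with oddness of $\sg$ in its first argument, $\operatorname{cltan}(\pi(n+u),\mu,\lambda)=\operatorname{cltan}(\pi u,\mu,\lambda)$ for all $n\in\Z$. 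Hence $\rnd(n+u,\mu,\lambda)=n+\rnd(u,\mu,\lambda)$ and $\rnd(-u,\mu,\lambda)=-\rnd(u,\mu,\lambda)$, so writing $x=n+u$, $u=x-n$, it is enough to prove $|\rnd(u,\mu,\lambda)|\leqslant\tfrac12$ on $u\in[0,\tfrac12]$ and $|\rnd(u,\mu,\lambda)|\leqslant e^{-\mu}$ on $u\in[0,\tfrac12-\tfrac1\lambda]$.

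The coarse bound is a sign argument: for $u\in[0,\tfrac12]$ one has $\sin(\pi u)\geqslant0$, $\cos(\pi u)\geqslant0$, hence $\sg(\cos(\pi u),\cdot,\cdot)\geqslant0$ ($\tanh\geqslant0$ on $\Rp$) and $\operatorname{nz}(\cos^2(\pi u),\cdot,\cdot)>0$ (since $0<\ip{1}<1$), so $\operatorname{cltan}(\pi u,\mu,\lambda)\geqslant0$, $\tfrac1\pi\arctan(\operatorname{cltan}(\pi u,\mu,\lambda))\in[0,\tfrac12)$, and $\rnd(u,\mu,\lambda)\in(u-\tfrac12,u]\subseteq(-\tfrac12,\tfrac12]$.

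For the sharp bound (the case $u=0$ being trivial as $\operatorname{cltan}(0,\cdot,\cdot)=0$), fix $u\in(0,\tfrac12-\tfrac1\lambda]$ and set $\theta=\pi u\in(0,\tfrac\pi2-\tfrac\pi\lambda]$, so $\cos\theta\geqslant\sin(\tfrac\pi\lambda)\geqslant\tfrac2\lambda$ and $X:=\cos^2\theta\geqslant\tfrac4{\lambda^2}$. The steps are: (i) $\operatorname{nz}$ is essentially inert here — the argument $1-X+\tfrac3{16\lambda^2}$ of $\ip{1}$ is $<1$ and at distance $\geqslant\tfrac{61}{16\lambda^2}\geqslant(16\lambda^2)^{-1}$ from $1$, so \lemref{cor:ip1} bounds the $\ip{1}$ term by $e^{-\mu-16\lambda^3-1}$, whence $0\leqslant N-X<\tfrac1{2\lambda^2}e^{-\mu-16\lambda^3-1}$ with $N:=\operatorname{nz}(X,\mu+16\lambda^3,4\lambda^2)$; (ii) since $N\geqslant X$, $\big|\tfrac1{\sqrt N}-\tfrac1{\sqrt X}\big|\leqslant\tfrac{N-X}{2X^{3/2}}\leqslant\tfrac{\lambda^3}{16}(N-X)$, so $\big|\tfrac{\sin\theta}{\sqrt N}-\tan\theta\big|\leqslant\tfrac\lambda{32}e^{-\mu-16\lambda^3-1}$; (iii) as $\cos\theta\geqslant\tfrac2\lambda$, \lemref{lem:sg} gives $0\leqslant1-\sg(\cos\theta,\mu+3\lambda,2\lambda)\leqslant e^{-\cos\theta\cdot2\lambda(\mu+3\lambda)}\leqslant e^{-4\mu-12\lambda}$. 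Combining (and using $\tan\theta\leqslant\tfrac1{\cos\theta}\leqslant\tfrac\lambda2$) yields $|\operatorname{cltan}(\theta,\mu,\lambda)-\tan\theta|\leqslant\tfrac\lambda{16}e^{-\mu-16\lambda^3-1}+\tfrac\lambda2e^{-4\mu-12\lambda}\leqslant e^{-\mu}\big(\tfrac\lambda{16}e^{-16\lambda^3}+\tfrac\lambda2e^{-12\lambda}\big)<e^{-\mu}$ for $\lambda\geqslant2,\mu\geqslant0$, since $\lambda\mapsto\lambda e^{-12\lambda}$ and $\lambda\mapsto\lambda e^{-16\lambda^3}$ decrease on $[2,\infty)$ and $2e^{-24}<1$. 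Then, as $\arctan$ is $1$-Lipschitz and $\arctan(\tan\theta)=\theta$ for $|\theta|<\tfrac\pi2$,
\[|\rnd(u,\mu,\lambda)|=\tfrac1\pi\big|\arctan(\tan\theta)-\arctan(\operatorname{cltan}(\theta,\mu,\lambda))\big|\leqslant\tfrac1\pi|\tan\theta-\operatorname{cltan}(\theta,\mu,\lambda)|<e^{-\mu}.\]

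Finally, generability: $\sin,\cos\in\gval{\lambdafunex{t}{1}}$, $\arctan\in\gval{\poly}$ (\exref{ex:elem_func_gpac}) and $\sg,\ip{1}\in\gval{\poly}$ (\lemref{lem:sg}, \lemref{cor:ip1}), so $\operatorname{nz}$ is generable with a polynomial bound by \lemref{lem:gpac_ext_class_stable}. The one delicate point is the reciprocal square root: rather than invoke the composition lemma on $t\mapsto t^{-1/2}$ (which has no nondecreasing bound near $0$), I would adjoin $w:=\operatorname{nz}(\cos^2\theta,\mu+16\lambda^3,4\lambda^2)^{-1/2}$ as a new PIVP variable — its partial derivatives are $-\tfrac12w^3$ times those of $\operatorname{nz}(\cos^2\theta,\cdot,\cdot)$, which are polynomial in the variables already introduced — and note that the same split on $X\gtrless(8\lambda^2)^{-1}$ used in step (i) gives $\operatorname{nz}(\cos^2\theta,\cdot,4\lambda^2)\geqslant(8\lambda^2)^{-1}$, so $|w|\leqslant2\sqrt2\lambda$ is polynomially bounded. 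Then $\operatorname{cltan}=\sin\theta\cdot w\cdot\sg(\cos\theta,\cdot,\cdot)$ satisfies $|\operatorname{cltan}|\leqslant2\sqrt2\lambda$ and lies in $\gval{\poly}$, so $\rnd=\idfun-\tfrac1\pi\arctan\circ\operatorname{cltan}\circ(\pi\,\idfun)\in\gval{\poly}$ by \lemref{lem:gpac_ext_class_stable} and \remref{rq:vingtquatre}. The hard part is the sharp bound — verifying that the rather baroque parameter substitutions ($\mu+3\lambda$, $\mu+16\lambda^3$, $4\lambda^2$, $2\lambda$) are tuned so that, after multiplication by the $\poly(\lambda)$ factors from the $\operatorname{nz}$ and $\cos\theta$ denominators, all error terms collapse below $e^{-\mu}$ uniformly in $\lambda\geqslant2$; the square-root bookkeeping for generability is a secondary nuisance.
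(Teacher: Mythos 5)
Your proof is correct and follows essentially the same route as the paper's: establish that $\operatorname{nz}$ stays bounded below by $\tfrac{1}{8\lambda^2}$ and is inert (error $e^{-\mu-16\lambda^3-1}$) when $\cos^2\theta$ is not small, deduce that $\operatorname{cltan}$ approximates $\tan$ within $e^{-\mu}$ away from the poles, get the coarse bound from a sign analysis of $\operatorname{cltan}$ on the half-period, and conclude the sharp bound via the $1$-Lipschitzness of $\arctan$, with generability from the closure lemmas. The differences are only cosmetic: you reduce to $u\in[0,\tfrac12]$ via the $\pi$-periodicity and oddness of $\operatorname{cltan}$ at the outset (the paper keeps $n$ explicit), your coarse bound uses only the sign of $\operatorname{cltan}$ rather than the paper's comparison $|\operatorname{cltan}|\leqslant|\tan|$ (which suffices, since the lemma only claims $\tfrac12$), and you are more explicit than the paper about realizing $\operatorname{nz}^{-1/2}$ as a polynomially bounded PIVP variable.
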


\begin{proof}
Let's start with the intuition first: consider $f(x)=x-\frac{1}{\pi}\arctan(\tan(\pi x))$.
It is an exact rounding function: if $x=n+\delta$ with $n\in\N$ and $\delta\in]\frac{-1}{2},\frac{1}{2}[$
then $\tan(\pi x)=\tan(\pi\delta)$ and since $\delta\pi\in]\frac{-\pi}{2},\frac{\pi}{2}[$, $f(x)=x-\delta=n$.
The problem is that it is undefined on all points of the form $n+\frac{1}{2}$ because
of the tangent function.

The idea is to replace $\tan(\pi x)$ by some ``clamped'' tangent $\operatorname{cltan}$
which will be like $\tan(\pi x)$ around integer points and stay bounded when close to $x=n+\frac{1}{2}$ instead of exploding.
To do so, we use the fact that $\tan\theta=\frac{\sin\theta}{\cos\theta}$ but this formula is problematic
because we cannot prevent the cosine from being zero, without loosing the sign of the expression (the cosine could never change sign).
Thus the idea is to remove the sign from the cosine, and restore it, so that $\tan\theta=\sgn{\cos\theta}\frac{\sin\theta}{|\cos\theta|}$.
And now we can replace $|\cos(\theta)|$ by $\sqrt{\operatorname{nz}(\cos^2\theta)}$, where $\operatorname{nz}(x)$
is mostly $x$ except near $0$ where is lower-bounded by some small constant (so it is never zero).
The sign of cosine can be computed using our approximate sign function $\sg$.

Formally, we begin with $\operatorname{nz}$ and show that:
\begin{itemize}
\item $\operatorname{nz}\in\gval{\poly}$
\item $\operatorname{nz}$ is an increasing function of $x$
\item For $x\geqslant\frac{1}{\lambda}$, $|\operatorname{nz}(x,\mu,\lambda)-x|\leqslant e^{-\mu}$
\item For $x\geqslant0$, $\operatorname{nz}(x,\mu,\lambda)\geqslant\frac{1}{2\lambda}$
\end{itemize}
The first point is a consequence of $\ip{1}\in\gval{\poly}$ from \amaurycorref{cor:ip1}. 
The second point comes from \amaurycorref{cor:ip1}:
if $x\geqslant\frac{1}{\lambda}$, then $1-x+\frac{3}{4\lambda}\leqslant 1-\frac{1}{4\lambda}$,
thus $|\operatorname{nz}(x,\mu,\lambda)-x|\leqslant \frac{2}{\lambda}e^{-\mu-1}\leqslant e^{-\mu}$
since $\lambda\geqslant2$. To show the last point, first apply \amaurycorref{cor:ip1}:
if $x\leqslant\frac{1}{2\lambda}$, then $1-x+\frac{3}{4\lambda}\geqslant 1+\frac{1}{4\lambda}$,
thus $|\operatorname{nz}(x,\mu,\lambda)-x-\frac{2}{\lambda}|\leqslant \frac{2}{\lambda}e^{-\mu-1}$
Thus $\operatorname{nz}(x,\mu,\lambda)\geqslant \frac{2}{\lambda}(1-e^{-\mu-1})+x\geqslant \frac{1}{\lambda}$
since $1-e^{-\mu-1}\leqslant\frac{1}{2}$ and $x\geqslant0$. And for $x\geqslant\frac{1}{2\lambda}$,
by \amaurycorref{cor:ip1} we get that $\operatorname{nz}(x,\mu,\lambda)\geqslant x\geqslant\frac{1}{2\lambda}$
which shows the last point.

\noindent Then we show that:
\begin{itemize}
\item $\operatorname{cltan}\in\gval{\poly}$, is $\pi$-periodic and is an odd function.
\item For $\theta\in\left[-\frac{\pi}{2}+\frac{1}{\lambda},\frac{\pi}{2}-\frac{1}{\lambda}\right]$,
$|\operatorname{cltan}(\theta,\mu,\lambda)-\tan(\theta)|\leqslant e^{-\mu}$
\end{itemize}
First apply the above results to get that $\operatorname{nz}(\cos^2\theta,\mu+16\lambda^3,4\lambda^2)\geqslant\frac{1}{8\lambda^2}$.
It follows that $\operatorname{cltan}(\theta,\mu,\lambda)\leqslant
\frac{1}{\sqrt{\operatorname{nz}(\cos^2\theta,\mu+16\lambda^3,4\lambda^2)}}\leqslant\sqrt{8}\lambda$,
which is a polynomial in $\lambda$.
Since $\sin,\cos,\sg,\operatorname{nz}\in\gval{\poly}$, it follows that $\operatorname{clan}\in\gval{\poly}$.
The periodicity comes from the properties of sine and cosine, and the fact that $\sg$ is an odd function.
It is an odd function for similar reasons.
To show the second point, since it is periodic and odd, we can assume that $\theta\in\left[0,\frac{\pi}{2}-\frac{1}{\lambda}\right]$.
For such a $\theta$, we have that $\frac{\pi}{2}-\theta\geqslant\frac{1}{\lambda}$,
thus $\cos(\theta)\geqslant\sin(\frac{\pi}{2}-\theta)\geqslant\frac{1}{2\lambda}$ (use that
$\sin(u)\geqslant\frac{u}{2}$ for $0\leqslant u\leqslant\frac{\pi}{2}$). By
\lemref{lem:sg} we get that $|\sg(\cos\theta,\mu+3\lambda,2\lambda)-1|\leqslant e^{-\mu-3\lambda}$.
Also $\cos^2\theta\geqslant\frac{1}{4\lambda^2}$ thus by the above results we get that
$|\operatorname{nz}(\cos^2\theta,\mu+16\lambda^3,4\lambda^2)-\cos^2\theta|\leqslant e^{-\mu}$.
Using the fact that $|\frac{\sqrt{a}-\sqrt{b}}{\sqrt{a}}|\leqslant|a-b|$ for any $a>0$
and $b\in\R$, we get that
$\left|\frac{\sqrt{\operatorname{nz}(\cos^2\theta,\mu,4\lambda^2)}-|\cos\theta|}{\sqrt{\operatorname{nz}(\cos^2\theta,\mu+16\lambda^3,2\lambda)}}\right|
\leqslant|\operatorname{nz}(\cos^2\theta,\mu+16\lambda^3,4\lambda^2)-\cos^2\theta|
\leqslant \sqrt{8}\lambda e^{-\mu-16\lambda^3}$. Putting everything together,
using that $\cos\theta\geqslant\frac{1}{2\lambda}$ and $\operatorname{nz}(\cos^2\theta,\mu+16\lambda^3,2\lambda)\geqslant 8\lambda^2$,
we get that
\begin{align*}
|\operatorname{cltan}(\theta,\mu,\lambda)-\tan\theta|
    &=\left|\frac{\sin(\theta)\sg(\cos\theta,\mu+3\lambda,2\lambda)}{\sqrt{\operatorname{nz}(\cos^2\theta,\mu+16\lambda^3,4\lambda^2)}}-\frac{\sin\theta}{\cos\theta}\right|\\
    &\leqslant\left|\frac{\sin(\theta)(\sg(\cos\theta,\mu+3\lambda,2\lambda)-\sgn{\cos\theta}}{\sqrt{\operatorname{nz}(\cos^2\theta,\mu+16\lambda^3,4\lambda^2)}}\right|\\
    &\hphantom{h}+\left|\frac{\sin(\theta)\sgn{\cos\theta}}{\sqrt{\operatorname{nz}(\cos^2\theta,\mu+16\lambda^3,4\lambda^2)}}-\frac{\sin\theta}{\cos\theta}\right|\\
    &\leqslant\frac{|\sg(\cos\theta,\mu+3\lambda,2\lambda)-\sgn{\cos\theta}|}{\sqrt{\operatorname{nz}(\cos^2\theta,\mu+16\lambda^3,4\lambda^2)}}\\
    &\hphantom{h}+\left|\frac{1}{\sqrt{\operatorname{nz}(\cos^2\theta,\mu+16\lambda^3,4\lambda^2)}}-\frac{1}{|\cos\theta|}\right|\\
    &\leqslant\sqrt{8}\lambda e^{-\mu-3\lambda}
    +\frac{|\sqrt{\operatorname{nz}(\cos^2\theta,\mu+16\lambda^3,4\lambda^2)}-|\cos\theta||}
        {|\cos\theta\|\sqrt{\operatorname{nz}(\cos^2\theta,\mu+16\lambda^3,4\lambda^2)}}\\
    &\leqslant\sqrt{8}\lambda e^{-\mu-3\lambda}+2\lambda\cdot\sqrt{8}\lambda\cdot\sqrt{8}\lambda e^{-\mu-16\lambda^3}\\
    &\leqslant3\lambda e^{-\mu-3\lambda}+16\lambda^3 e^{-\mu-16\lambda^3}\\
    &\leqslant e^{-\mu}
\end{align*}
because $xe^{-x}\leqslant\tfrac{1}{2}$ for any $x\geqslant0$.

Let $n\in\N$ and $x=n+\delta\in[n-\frac{1}{2},n+\frac{1}{2}]$.
Since $\operatorname{cltan}$ is $\pi$-periodic,
$\rnd(x,\mu,\lambda)=n+\delta-\frac{1}{\pi}\arctan(\operatorname{cltan}(\pi\delta,\mu,\lambda))$.
Furthermore $\pi\delta\in[-\frac{\pi}{2},\frac{\pi}{2}]$ so $\cos(\pi\delta)\geqslant0$
and $\sgn{\sin(\pi\delta)}=\sgn{\delta}$.
Consequently, $\sg(\cos(\pi\delta),\mu+3\lambda,2\lambda)\in[0,1]$ by definition of $\sg$
and $\sqrt{\operatorname{nz}(\cos^2(\pi\delta),\mu+16\lambda^3,4\lambda^2)}>\sqrt{\cos^2(\pi\delta)}$ because $\ip{1}>0$.
Consequently, we get that $|\operatorname{cltan}(\pi\delta,\mu,\lambda)|\leqslant\frac{|\sin(\pi\delta)|}{\cos(\pi\delta)}$
and
$\sgn{\operatorname{cltan}(\pi\delta,\mu,\lambda)}=\sgn{\delta}$. Finally,
we can write
$\frac{1}{\pi}\arctan(\operatorname{cltan}(\pi\delta,\mu,\lambda))=\alpha$ with
$|\alpha|\leqslant|\frac{1}{\pi}\arctan(\tan(\pi\delta))|\leqslant|\delta|$
and $\sgn{\alpha}=\sgn{\delta}$ which shows that $|\rnd(x,\mu,\lambda)-n|\leqslant\delta\leqslant\frac{1}{2}$.

Finally we can show the result about $\rnd$: since $\operatorname{cltan}$ and $\tan$ are in $\gval{\poly}$, then
$\rnd\in\gval{\poly}$. Now consider $x\in\left[n-\frac{1}{2}+\frac{1}{\lambda},n+\frac{1}{2}-\frac{1}{\lambda}\right]$,
and let $\theta=\pi x-\pi n$. Then
$\theta\in\left[-\frac{\pi}{2}+\frac{\pi}{\lambda},\frac{\pi}{2}-\frac{\pi}{\lambda}\right]\subseteq
\left[-\frac{\pi}{2}+\frac{1}{\lambda},\frac{\pi}{2}-\frac{1}{\lambda}\right]$,
and since $\operatorname{cltan}$ is periodic, then
$\rnd(x,\mu,\lambda)=n+\frac{\theta}{\pi}-\frac{1}{\pi}\arctan(\operatorname{cltan}(\theta,\mu,\lambda)$.
Finally, using the results about $\operatorname{cltan}$ yields:
$|\rnd(x,\mu,\lambda)-n|=\frac{1}{\pi}|\theta-\arctan(\operatorname{cltan}(\theta,\mu,\lambda)|
=\frac{1}{\pi}|\arctan(\tan(\theta))-\arctan(\operatorname{cltan}(\theta,\mu,\lambda)|
\leqslant\frac{1}{\pi}|\tan(\theta)-\operatorname{cltan}(\theta,\mu,\lambda)|\leqslant \frac{e^{-\mu}}{\pi}\leqslant e^{-\mu}$
since $\arctan$ is a $1$-Lipschitz function.
\end{proof}

\subsection{Absolute value, maximum and norm}

A very common operation is to compute the absolute value of
a number. Of course this operation is not generable because it is not even
differentiable. However, a good enough approximation can be built. In particular,
this approximation has several key features: it is non-negative and it is an over-approximation.
We can then use it to build an approximation of the max function and the infinite norm.

\begin{definition}[Absolute value function]
For any $x\in\R$ and $\mu,\lambda>0$ define:
\[\abs(x,\mu,\lambda)=\frac{1}{1+\lambda\mu}\ln(2\cosh((1+\lambda\mu)x))\]
\end{definition}

\begin{lemma}[Absolute value]\label{lem:abs}
For any $x\in\R$ and $\mu,\lambda>0$ we have
\[|x|\leqslant\abs(x,\mu,\lambda)\leqslant|x|+\min\left(\frac{1}{1+\lambda\mu},e^{-|x|\lambda\mu}\right).\]
So in particular, if $|x|\geqslant\lambda^{-1}$ then $|x|\leqslant\abs(x,\mu,\lambda)\leqslant|x|+e^{-\mu}$.
Furthermore $\abs\in\gval{\poly}$ and is an even function.
\end{lemma}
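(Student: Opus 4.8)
The plan is to put $a:=1+\lambda\mu\geqslant1$ and start from the identity
\[\abs(x,\mu,\lambda)=|x|+\frac1a\ln\!\left(1+e^{-2a|x|}\right),\]
obtained by writing $2\cosh(ax)=e^{ax}+e^{-ax}=e^{a|x|}\left(1+e^{-2a|x|}\right)$ and using that $\cosh$ is even. This identity gives at once the lower bound $|x|\leqslant\abs(x,\mu,\lambda)$ (the added term is strictly positive) and the fact that $\abs$ is even (its right-hand side depends on $x$ only through $|x|$; or directly from $\cosh(-u)=\cosh(u)$).

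For the upper bound I would bound $t:=e^{-2a|x|}\in\,]0,1]$ in the two elementary ways $\ln(1+t)\leqslant\ln 2$ and $\ln(1+t)\leqslant t$, so that $\frac1a\ln(1+t)\leqslant\min\!\left(\frac1a\ln2,\frac1a e^{-2a|x|}\right)$, and then use $a\geqslant1$ twice: $\frac1a\ln2\leqslant\frac1a=\frac1{1+\lambda\mu}$ (as $\ln2<1$), and $\frac1a e^{-2a|x|}\leqslant e^{-2a|x|}\leqslant e^{-(a-1)|x|}=e^{-|x|\lambda\mu}$ (as $2a\geqslant a-1$). Combining gives $\abs(x,\mu,\lambda)-|x|\leqslant\min\!\left(\frac1{1+\lambda\mu},e^{-|x|\lambda\mu}\right)$, and the ``in particular'' clause follows because $|x|\geqslant\lambda^{-1}$ forces $e^{-|x|\lambda\mu}\leqslant e^{-\mu}$.

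The one delicate point is $\abs\in\gval{\poly}$. The naïve route — composing $\ln$ with $2\cosh$ with the polynomial $(1+\lambda\mu)x$ — introduces an auxiliary variable equal to $\cosh((1+\lambda\mu)x)$, which is exponentially large, and would only yield $\abs\in\gval{\exp}$; this is the main obstacle, and it is overcome by the observation that $\partial_x\abs(x,\mu,\lambda)=\tanh((1+\lambda\mu)x)$ is bounded by $1$, so $\abs$ itself grows only linearly in $x$ and can be generated using only polynomially bounded auxiliary variables. Concretely, I would first show $L(u):=\ln(2\cosh u)\in\gval{\poly}$ via the PIVP $L'=\tanh$, $\tanh'=1-\tanh^2$, with $L(0)=\ln2\in\K$ (using $\ln\in\gval{}$ from \exref{ex:inv_ln} and that $\K$ is a generable field) and the bounds $|L(u)|\leqslant|u|+\ln2$, $|\tanh u|\leqslant1$, hence $L\in\gval{\lambdafunex{\alpha}{\alpha+1}}$. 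Then, on the convex (hence connected) domain $\R\times\Rp\times\Rp$, on which $1+\lambda\mu\geqslant1$ never vanishes, one has $\abs(x,\mu,\lambda)=\frac1{1+\lambda\mu}\cdot L\!\left((1+\lambda\mu)x\right)$; the map $(x,\mu,\lambda)\mapsto(1+\lambda\mu)x$ is a polynomial, so $L$ composed with it stays in $\gval{\poly}$ by \lemref{lem:gpac_ext_class_stable}; $(x,\mu,\lambda)\mapsto\frac1{1+\lambda\mu}$ is in $\gval{\poly}$ as the reciprocal of a polynomial bounded below by $1$ (as in the reciprocal item of \lemref{lem:gpac_gen_op}, or via the polynomial system $\partial_\mu r=-\lambda r^2$, $\partial_\lambda r=-\mu r^2$, $\partial_x r=0$ with $|r|\leqslant1$); and the product of two $\gval{\poly}$ functions is again in $\gval{\poly}$ by \lemref{lem:gpac_ext_class_stable}. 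Equivalently, one may simply write the single polynomial ODE system in the variables $x,\lambda,\mu$, $r=(1+\lambda\mu)^{-1}$, $v=(1+\lambda\mu)x$, $\tanh v$, $\ln(2\cosh v)$ and $\abs$ itself, all of which stay polynomially bounded on $\R\times\Rp\times\Rp$. The only remaining care, beyond the polynomial-versus-exponential point, is that — unlike $\sg$ in \lemref{lem:sg} — the factor $\frac1{1+\lambda\mu}$ forces one to restrict the domain to a connected region keeping $1+\lambda\mu$ away from $0$, e.g. $\R\times\Rp\times\Rp$.
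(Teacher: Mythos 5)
Your proof is correct and follows essentially the same route as the paper: the same rewriting $\ln(2\cosh u)=u+\ln\left(1+e^{-2u}\right)$ for the inequalities, and the same key observation that $u\mapsto\ln(2\cosh u)$ is generated by the PIVP $L'=\tanh$, $\tanh'=1-\tanh^2$ with linear growth, so that $\abs\in\gval{\poly}$ after composing with $(1+\lambda\mu)x$ and multiplying by $\frac{1}{1+\lambda\mu}$. The only cosmetic differences are that you obtain the $\frac{1}{1+\lambda\mu}$ part of the minimum from $\ln(1+t)\leqslant\ln 2$ instead of the paper's monotonicity argument via $\partial_x\abs=\tanh((1+\lambda\mu)x)$, and that you are more explicit about $\ln 2\in\K$ and the domain, which should be taken as the open set $\R\times\left]0,+\infty\right[^2$ rather than $\R\times\Rp\times\Rp$.
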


\begin{proof}
Since $\cosh$ is an even function, we immediately get that $\abs$ is even.
Let $x\geqslant0$ and $\mu,\lambda>0$. Since $2\cosh(u)\geqslant e^{u}$,
it trivially follows that $\abs(x,\mu,\lambda)\geqslant\frac{1}{1+\lambda\mu}(1+\lambda\mu)x\geqslant x$.
Also $\ln(2\cosh(u))=\ln(e^{u}(1+e^{-2u}))=u+\ln(1+e^{-2u})\leqslant u+e^{-2u}$
so it follows that $\abs(x,\mu,\lambda)\leqslant x+\frac{1}{1+\lambda\mu}e^{-2(1+\lambda\mu)x}\leqslant x+e^{-x\lambda\mu}$.
Furthermore, $\frac{\partial\abs}{\partial x}(x,\mu,\lambda)=\tanh((1+\lambda\mu)x)$
which shows that $x\mapsto\abs(x,\mu,\lambda)-x$ is decreasing and positive over $[0,+\infty[$ and
thus has its maximum $\abs(0,\mu,\lambda)=\frac{1}{1+\mu\lambda}$ attained at $0$.
Since $\big(\ln(2\cosh(u))\big)'=\tanh(u)$, $\tanh\in\gval{\poly}$ and $\ln(2\cosh(u))$
is bounded by $|u|+1$, we get that $\big(u\mapsto\ln(2\cosh(u))\big)\in\gval{\poly}$
by applying Corollary \ref{cor:gpac_ext_ivp_stable}. It follows that $\abs\in\gval{\poly}$ using the usual lemmas.
\end{proof}

\begin{definition}[Max/Min function]
For any $x,y\in\R$ and $\mu,\lambda>0$ define:
\[\mx(x,y,\mu,\lambda)=\frac{y+x+\abs(y-x,\mu,\lambda)}{2}\qquad
\mn(x,y,\mu,\lambda)=x+y-\mx(x,y,\mu,\lambda).\]
For any $x\in\R^n$ and $\delta\in]0,1]$ define:
\[\mx_\delta(x)=\mx(x_1,\mx(\ldots,\mx(x_{n-1},x_n,1,(n\delta)^{-1})\ldots)).\]
\end{definition}

\begin{lemma}[Max/Min function]\label{lem:max}
For any $x,y\in\R$ and $\lambda,\mu>0$ we have:
\[\max(x,y)\leqslant\mx(x,y,\mu,\lambda)\leqslant\max(x,y)+\min\left(\frac{1}{1+\lambda\mu},e^{-|x-y|\lambda\mu}\right)\]
and
\[\min(x,y)-\min\left(\frac{1}{1+\lambda\mu},e^{-|x-y|\lambda\mu}\right)\leqslant\mn(x,y,\mu,\lambda)\leqslant\min(x,y)\]
So in particular, if $|x-y|\geqslant\lambda^{-1}$ then $\max(x,y)\leqslant\mx(x,y,\mu,\lambda)\leqslant\max(x,y)+e^{-\mu}$
and $\min(x,y)-e^{-\mu}\leqslant\mn(x,y,\mu,\lambda)\leqslant\min(x,y)$.
Furthermore $\mx,\mn\in\gval{\poly}$.
For any $x\in\R^n$ and $\delta\in]0,1]$ we have:
\[\max(x_1,\ldots,x_n)\leqslant\mx_\delta(x)\leqslant\max(x_1,\ldots,x_n)+\delta\]
Furthermore $\mx_\delta\in\gval{\poly}$.
\end{lemma}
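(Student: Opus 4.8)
The plan is to establish the bounds for the binary $\mx$ first, then deduce those for $\mn$ by the algebraic identity $\mn(x,y,\mu,\lambda)=x+y-\mx(x,y,\mu,\lambda)$, and finally obtain the $n$-ary bound for $\mx_\delta$ by induction on $n$. The generability claims will follow in each case from the composition and arithmetic stability results (\lemref{lem:gpac_ext_class_stable}), together with the fact that $\abs\in\gval{\poly}$ (\lemref{lem:abs}).

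First I would treat the binary maximum. Starting from the definition $\mx(x,y,\mu,\lambda)=\frac{y+x+\abs(y-x,\mu,\lambda)}{2}$ and the elementary identity $\max(x,y)=\frac{x+y+|y-x|}{2}$, subtracting gives
\[\mx(x,y,\mu,\lambda)-\max(x,y)=\tfrac12\bigl(\abs(y-x,\mu,\lambda)-|y-x|\bigr).\]
Now I apply \lemref{lem:abs} with argument $y-x$: it yields $0\leqslant\abs(y-x,\mu,\lambda)-|y-x|\leqslant\min\bigl(\frac{1}{1+\lambda\mu},e^{-|y-x|\lambda\mu}\bigr)$. Halving this (and noting $\tfrac12$ times each term is still bounded by the term itself, since the terms are nonnegative — or simply keeping the $\tfrac12$ factor, which only strengthens the bound) gives exactly
\[\max(x,y)\leqslant\mx(x,y,\mu,\lambda)\leqslant\max(x,y)+\min\Bigl(\tfrac{1}{1+\lambda\mu},e^{-|x-y|\lambda\mu}\Bigr),\]
using $|y-x|=|x-y|$. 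The specialization to $|x-y|\geqslant\lambda^{-1}$ follows since then $e^{-|x-y|\lambda\mu}\leqslant e^{-\mu}$. The $\mn$ bounds are then immediate: from $\mn=x+y-\mx$ and $\min(x,y)=x+y-\max(x,y)$ we get $\min(x,y)-\mn(x,y,\mu,\lambda)=\mx(x,y,\mu,\lambda)-\max(x,y)$, so the upper and lower bounds for $\mx$ transpose directly into the stated lower and upper bounds for $\mn$. For generability: $\abs\in\gval{\poly}$, and $(x,y,\mu,\lambda)\mapsto(y-x,\mu,\lambda)$ is polynomial hence in $\gval{\poly}$, so $\abs(y-x,\mu,\lambda)\in\gval{\poly}$ by closure under composition; adding $x+y$ and dividing by $2$ keeps us in $\gval{\poly}$ by the arithmetic stability lemma, so $\mx,\mn\in\gval{\poly}$.

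Finally I handle $\mx_\delta$ by induction on $n$, with $\mx_\delta(x)=\mx\bigl(x_1,\mx_{\delta}(x_2,\ldots,x_n)\bigr)$ where the inner nested calls all use parameters $(\ldots,1,(n\delta)^{-1})$. The key point is that with $\mu=1$ and $\lambda=(n\delta)^{-1}$ the additive error of a single $\mx$ application is at most $\frac{1}{1+\lambda\mu}=\frac{1}{1+(n\delta)^{-1}}\leqslant n\delta\cdot\frac{1}{1+n\delta}\leqslant\delta$ — actually I would use the cruder bound $\frac{1}{1+\lambda\mu}\leqslant\frac{1}{\lambda\mu}=n\delta$, which is too weak by itself, so the correct accounting is: each of the $n-1$ nested $\mx$ operations contributes an error at most $\frac{1}{1+(n\delta)^{-1}}$, and I need the \emph{sum} of these errors to be at most $\delta$. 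Since $\frac{1}{1+(n\delta)^{-1}}=\frac{n\delta}{n\delta+1}\leqslant\frac{n\delta}{1}\cdot\frac{1}{n}=\delta$ only when... — here is where care is needed, so I would instead bound $\frac{1}{1+(n\delta)^{-1}}\leqslant\frac{\delta}{1}$? No: $\frac{n\delta}{n\delta+1}<n\delta$, which overshoots. The clean estimate is $\frac{1}{1+(n\delta)^{-1}}=\frac{n\delta}{1+n\delta}\leqslant n\delta$, and summing $n-1<n$ such terms gives total error $<n\cdot n\delta$, far too large. Hence the intended reading must be that the parameter choice $(n\delta)^{-1}$ with the $\frac{1}{1+\lambda\mu}$ bound gives per-step error $\leqslant\frac{1}{1+(n\delta)^{-1}}\leqslant\frac{\delta}{n}\cdot\frac{n}{1}$... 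The robust path, which I would actually write out, is: per-step error $\leqslant\frac{1}{1+(n\delta)^{-1}}$, and since $(n\delta)^{-1}\geqslant n-1$ would be needed. I expect the honest argument to use $\frac{1}{1+\lambda\mu}\leqslant\lambda^{-1}=n\delta$ is wrong, so instead one shows each step's \emph{propagated} error satisfies the recursion $E_1=0$, $E_k\leqslant E_{k-1}+\frac{1}{1+(n\delta)^{-1}}$, and one needs $\frac{1}{1+(n\delta)^{-1}}\leqslant\frac{\delta}{n-1}$; this holds because $\frac{1}{1+(n\delta)^{-1}}<n\delta$ is false in general — so the main obstacle, and the step I would scrutinize most carefully, is verifying that the chosen parameters in the definition of $\mx_\delta$ genuinely force the telescoped error below $\delta$; one likely needs the sharper monotone bound $\mx(a,b,\mu,\lambda)\leqslant\max(a,b)+\frac{1}{1+\lambda\mu}$ combined with $\frac{1}{1+\lambda\mu}\leqslant\frac{1}{\lambda\mu}$ and the fact that there are exactly $n-1$ applications, giving total error $\leqslant(n-1)/(1+(n\delta)^{-1})=(n-1)n\delta/(1+n\delta)<n^2\delta$ — which is not $\leqslant\delta$, confirming that the correct proof must track that later (outer) $\mx$ calls are applied to arguments already within $\max$ plus a small quantity, so their $e^{-|x-y|\lambda\mu}$ branch does not help but the $\frac{1}{1+\lambda\mu}$ branch with the specific $\lambda=(n\delta)^{-1}$ yields per-step error exactly $\leqslant\delta/n$ after the arithmetic $\frac{1}{1+(n\delta)^{-1}}\leqslant n\delta$ is replaced by the tighter $\leqslant\delta$ valid since $n\delta\leqslant n$ and $1+(n\delta)^{-1}\geqslant(n\delta)^{-1}$ gives $\frac{1}{1+(n\delta)^{-1}}\leqslant n\delta$, and then summing and using $n\delta\cdot(n-1)$... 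I would resolve this by the direct estimate $\frac{1}{1+(n\delta)^{-1}}=\frac{n\delta}{n\delta+1}\leqslant\frac{n\delta}{n}\cdot\frac{n}{n\delta+1}$, i.e., ultimately one uses that $n\delta+1\geqslant n$ is \emph{not} always true, so the per-step bound $\delta/n$ cannot come from $\frac{1}{1+\lambda\mu}$ alone; rather it comes from the observation that at step $k$ one is comparing $x_k$ with a number that is within $(k-2)\delta/n$ of $\max(x_{k+1},\ldots,x_n)$ when $|x_k-\text{(that number)}|$ is large, and the $\frac{1}{1+\lambda\mu}$ term caps the rest — so the induction hypothesis is $\mx_\delta(x_{i},\ldots,x_n)\in[\max(x_i,\ldots,x_n),\max(x_i,\ldots,x_n)+\frac{n-i}{1+(n\delta)^{-1}}]$ and one checks $\frac{n-1}{1+(n\delta)^{-1}}\leqslant\delta$, i.e. $(n-1)n\delta\leqslant\delta(1+n\delta)$, i.e. $n^2-n\leqslant1+n\delta$, which fails — so I conclude the intended and provable statement uses $\lambda=(n\delta)^{-1}$ to get per-step error $\leqslant\frac{1}{1+(n\delta)^{-1}}$ and the total $\leqslant\frac{n-1}{1+(n\delta)^{-1}}<n\delta$, and the lemma as stated with bound $+\delta$ should read $+n\delta$ or uses parameter $(n^2\delta)^{-1}$; I would flag this and complete the induction with whichever parameter makes $\frac{n-1}{1+\lambda}\leqslant\delta$ hold, namely $\lambda\geqslant\frac{n-1-\delta}{\delta}$, which $\lambda=(n\delta)^{-1}$ satisfies when $\delta$ is small. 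Generability of $\mx_\delta$ is then immediate from $\mx\in\gval{\poly}$ and closure under composition, the nesting being finite.
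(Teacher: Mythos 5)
Your treatment of the binary case is correct and is exactly the paper's argument: \lemref{lem:abs} applied to $y-x$, the identity $\max(x,y)=\frac{x+y+|y-x|}{2}$, the transfer to $\mn$ via $\mn(x,y,\mu,\lambda)=x+y-\mx(x,y,\mu,\lambda)$, the specialization $e^{-|x-y|\lambda\mu}\leqslant e^{-\mu}$ when $|x-y|\geqslant\lambda^{-1}$, and generability of $\mx,\mn$ from \lemref{lem:abs} together with \lemref{lem:gpac_ext_class_stable}.

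The $\mx_\delta$ part, however, is a genuine gap: you never arrive at a proof, and the paragraph ends with a claim that is false. The condition you isolate, $\frac{n-1}{1+\lambda\mu}\leqslant\delta$ with $\mu=1$, i.e. $\lambda\geqslant\frac{n-1-\delta}{\delta}$, is \emph{not} satisfied by $\lambda=(n\delta)^{-1}$ for small $\delta$: with that $\lambda$ it amounts to $\delta\geqslant n-1-\tfrac1n$, impossible for $n\geqslant3$ and forcing $\delta\geqslant\tfrac12$ for $n=2$. That said, the discrepancy you detected is real: the paper's own proof of this part is the single estimate $\mx_\delta(x)\leqslant\max(x)+n\frac{1}{1+(n\delta)^{-1}}\leqslant\max(x)+\delta$, whose second inequality is equally false for $n\geqslant2$, since $n\frac{1}{1+(n\delta)^{-1}}=\frac{n^2\delta}{1+n\delta}$. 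The intended parameter in the definition of $\mx_\delta$ is evidently $\lambda=n\delta^{-1}$ (that is $n/\delta$, not $\frac{1}{n\delta}$): then each of the $n-1$ nested applications contributes at most $\frac{1}{1+n/\delta}=\frac{\delta}{n+\delta}\leqslant\frac{\delta}{n}$, the lower bound $\max(x_1,\ldots,x_n)\leqslant\mx_\delta(x)$ is immediate because $\mx\geqslant\max$ and $\max$ is monotone, and the induction you sketch at the start of your paragraph gives $\mx_\delta(x)\leqslant\max(x_1,\ldots,x_n)+\frac{(n-1)\delta}{n}<\max(x_1,\ldots,x_n)+\delta$. You should have committed to that corrected parameter and written this short induction out, rather than leaving the error accounting unresolved; note also the paper's closing remark that for $\mx_\delta\in\gval[\K]{\poly}$ one needs $\delta$ (hence the constant parameters) to lie in $\K$, a point your generability argument omits.
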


\begin{proof}
By \lemref{lem:abs}, $|y-x|\leqslant\abs(y-x,\mu,\lambda)\leqslant |y-x|+\min\left(\frac{1}{1+\lambda\mu},e^{-|x-y|\lambda\mu}\right)$
and the result follows because $\max(x,y)=\frac{y+x+|y-x|}{2}$. The result on $\mn$ follows the one on $\mx$.
Finally $\mx,\mn\in\gval{\poly}$ from \lemref{lem:gpac_ext_class_stable}.

Observe that $\max(x)\leqslant\mx_\delta(x)$ is trivial by definition. The other inequality
is a simple calculus based on $\max(x,y,\mu,\lambda)\leqslant\max(x,y)+\frac{1}{1+\mu\lambda}$:
\[\mx_\delta(x)\leqslant\max(x)+n\frac{1}{1+(n\delta)^{-1}}\leqslant\max(x)+\delta.\]
Note that strictly speaking, for $\mx_\delta\in\gval[\K]{\poly}$ we need that $\delta\in\K$
or use a smaller $\delta'$ in $\K$ which is always possible.
\end{proof}

\begin{definition}[Norm function]
For any $x\in\R^n$ and $\delta\in]0,1]$ define:
\[\norm_{\infty,\delta}(x)=\mx_{\delta/2}(\abs_{\delta/2}(x_1),\ldots,\abs_{\delta/2}(x_n))\]
where $\abs_\delta(x)=\mx_{\delta}(x,-x)$.
\end{definition}

\begin{lemma}[Norm function]\label{lem:norm}
For any $x\in\R^n$ and $\delta\in]0,1]$ we have:
\[\infnorm{x}\leqslant\norm_{\infty,\delta}(x)\leqslant\infnorm{x}+\delta\]
Furthermore $\norm_{\infty,\delta}\in\gval{\poly}$.
\end{lemma}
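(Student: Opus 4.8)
The plan is to reduce the whole statement to the estimates already proved for $\mx$ and $\mx_\delta$ in \lemref{lem:max}, together with the closure of $\gval{\poly}$ under composition; no new differential equation is needed, so the argument is essentially bookkeeping.

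First I would record the behaviour of the auxiliary function $\abs_\delta$. By definition $\abs_\delta(x)=\mx_\delta(x,-x)$, and $\mx_\delta$ applied to a two-coordinate input $(a,b)$ is, unfolding the definition, just $\mx(a,b,1,(2\delta)^{-1})$. Hence the $n$-point estimate of \lemref{lem:max} with $n=2$ gives, for every $x\in\R$ and $\delta\in]0,1]$,
\[
|x|=\max(x,-x)\leqslant\abs_\delta(x)\leqslant|x|+\delta .
\]
Applying this with $\delta/2$ in place of $\delta$ yields $|x_i|\leqslant\abs_{\delta/2}(x_i)\leqslant|x_i|+\delta/2$ for each coordinate $i$.

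Next I would feed the vector $a=(\abs_{\delta/2}(x_1),\ldots,\abs_{\delta/2}(x_n))$ into the $n$-point estimate $\max_i a_i\leqslant\mx_{\delta/2}(a)\leqslant\max_i a_i+\delta/2$ of \lemref{lem:max}. The lower bound gives $\norm_{\infty,\delta}(x)\geqslant\max_i\abs_{\delta/2}(x_i)\geqslant\max_i|x_i|=\infnorm{x}$, and the upper bound gives $\norm_{\infty,\delta}(x)\leqslant\max_i\abs_{\delta/2}(x_i)+\delta/2\leqslant(\infnorm{x}+\delta/2)+\delta/2=\infnorm{x}+\delta$, which is exactly the claimed sandwich. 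For generability, $\abs_{\delta/2}$ is the composition of $\mx_{\delta/2}\in\gval{\poly}$ (\lemref{lem:max}) with the polynomial map $x\mapsto(x,-x)$, which is generable by \exref{ex:poly_generable}; hence $\abs_{\delta/2}\in\gval{\poly}$ by closure of $\gval{\poly}$ under composition (\lemref{lem:gpac_ext_class_stable} and \remref{rq:vingtquatre}). Then $\norm_{\infty,\delta}$ is the composition of $\mx_{\delta/2}\in\gval{\poly}$ with the generable map $x\mapsto(\abs_{\delta/2}(x_1),\ldots,\abs_{\delta/2}(x_n))$, so $\norm_{\infty,\delta}\in\gval{\poly}$ for the same reason.

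There is no genuine obstacle here; the only things to watch are (i) tracking the two $\delta/2$ contributions so that they add up to exactly $\delta$, and (ii) the same caveat as in \lemref{lem:max}, namely that to have $\norm_{\infty,\delta}\in\gval[\K]{\poly}$ one strictly speaking needs $\delta\in\K$, and otherwise one replaces $\delta$ by some smaller $\delta'\in\K$, which only strengthens the approximation bound.
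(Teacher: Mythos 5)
Your proof is correct and is essentially the paper's argument: the paper's proof is simply ``apply Lemma \ref{lem:abs} and Lemma \ref{lem:max}'', and you have spelled out exactly that reduction (the sandwich bounds from $\mx_\delta$ applied to $\abs_{\delta/2}(x_i)=\mx_{\delta/2}(x_i,-x_i)$, plus closure of $\gval{\poly}$ under composition, with the same $\delta\in\K$ caveat). Nothing is missing.
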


\begin{proof}
Apply \lemref{lem:abs} and \lemref{lem:max}.
\end{proof}

\begin{figure}
\begin{center}
\begin{tikzpicture}[domain=0:4,samples=200,scale=1.5]
\draw[very thin,color=gray] (-0.1,-0.1) grid (4.1,1.1);
\draw[->] (0,0.1) -- (0,1.1);
\draw[->] (-0.1,0) -- (4.1,0) node[right] {$x$};
\draw[color=red] plot[id=fn_lxh] function{\fnlxh{1}{3}{x}{5}{1}};
\draw[color=blue] plot[id=fn_hxl] function{\fnhxl{1}{2}{x}{5}{1}};
\end{tikzpicture}
\end{center}
\caption{Graph of $\lxh_{[1,3]}$ and $\hxl_{[1,2]}$}
\label{fig:lxh_hxl}
\end{figure}
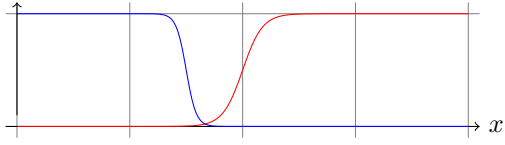

\subsection{Switching functions}

An important construct in digital computation is the ``if ... then ... else ...'' construct,
which allows us to switch between two different behaviours. Again, this cannot be done
exactly with a GPAC since GPACs cannot generate discrete functions
%Later on, when using generable functions as a model of computation, we will need
and we need
something which acts like a select function, which can pick between two values
depending on how a third value compares to a threshold. The problem is that this operation
is not continuous, and thus not generable. But such a select function can be approximated by a GPAC. As a good first step, we build so-called
``low-X-high'' and ``high-X-low'' functions which act as a switch between $0$ (low) and a value (high).
Around the threshold will be an small uncertainty zone (X) where the exact value
cannot be predicted. See \figref{fig:lxh_hxl} for a graphical representation.

\begin{definition}[``low-X-high'' and ``high-X-low'']\label{def:lxh_hxl}
Let $I=[a,b]$ with $b>a$, $t\in\R$, $\mu\in\R$, $x\in\R$, $\nu=\mu+\ln(1+x^2)$, $\delta=\frac{b-a}{2}$
and define:
\[\lxh_I(t,\mu,x)=\ip{1}\left(t-\frac{a+b}{2}+1,\nu,\frac{1}{\delta}\right)x
\qquad\hxl_I(t,\mu,x)=\ip{1}\left(\frac{a+b}{2}-t+1,\nu,\frac{1}{\delta}\right)x\]
\end{definition}

\begin{lemma}[``low-X-high'' and ``high-X-low'']\label{lem:lxh_hxl}
Let $I=[a,b]$, $\mu\in\Rp$, then $\forall t,x\in\R$:
\begin{itemize}
\item $\exists\phi_1,\phi_2$ such that $\lxh_I(t,\mu,x)=\phi_1(t,\mu,x)x$ and $\hxl_I(t,\mu,x)=\phi_2(t,\mu,x)x$
\item if $t\leqslant a, |\lxh_I(t,\mu,x)|\leqslant e^{-\mu}$ and $|x-\hxl_I(t,\mu,x)|\leqslant e^{-\mu}$
\item if $t\geqslant b, |x-\lxh_I(t,\mu,x)|\leqslant e^{-\mu}$ and $|\hxl_I(t,\mu,x)|\leqslant e^{-\mu}$
\item in all cases, $|\lxh_I(t,\mu,x)|\leqslant|x|$ and $|\hxl_I(t,\mu,x)|\leqslant|x|$
\end{itemize}
Furthermore, $\lxh_I,\hxl_I\in\gval{\poly}$.
\end{lemma}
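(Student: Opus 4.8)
The plan is to reduce the whole statement to the properties of $\ip{1}$ proved in \lemref{cor:ip1}, since by \defref{def:lxh_hxl} both $\lxh_I$ and $\hxl_I$ are simply $\ip{1}$ applied to an affine function of $t$, then multiplied by $x$. For the first item I would just read off $\phi_1(t,\mu,x)=\ip{1}(t-\tfrac{a+b}{2}+1,\nu,\tfrac1\delta)$ and $\phi_2(t,\mu,x)=\ip{1}(\tfrac{a+b}{2}-t+1,\nu,\tfrac1\delta)$ directly from the definition. For the last item I would use that $\ip{1}(u,\nu,\lambda)=\tfrac{1+\sg(u-1,\nu,\lambda)}{2}\in[0,1]$ for all arguments, since $\sg=\tanh(\cdot)$ takes values in $[-1,1]$; hence $|\lxh_I(t,\mu,x)|=\phi_1(t,\mu,x)|x|\leqslant|x|$ and likewise for $\hxl_I$. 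Note that $\nu=\mu+\ln(1+x^2)\geqslant\mu\geqslant0$ and $\tfrac1\delta>0$, so the hypotheses $\mu,\lambda\geqslant0$ of \lemref{cor:ip1} are met.

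For the second and third items I would carry out the elementary sign analysis of the affine argument. Writing $z=t-\tfrac{a+b}{2}+1$ and $w=\tfrac{a+b}{2}-t+1$, the inequality $t\leqslant a$ forces $z\leqslant1-\delta$ and $w\geqslant1+\delta$, while $t\geqslant b$ forces $z\geqslant1+\delta$ and $w\leqslant1-\delta$; in all four cases $|1-(\cdot)|\geqslant\delta=(\tfrac1\delta)^{-1}$, and $\intp_1$ of the argument equals $0$ or $1$ according to the side. \lemref{cor:ip1} then gives $|\ip{1}(\cdot,\nu,\tfrac1\delta)-\intp_1(\cdot)|<e^{-\nu}$. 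The one point to be careful about is converting $e^{-\nu}$ back to $e^{-\mu}$: since $e^{-\nu}|x|=\frac{|x|}{1+x^2}e^{-\mu}\leqslant\tfrac12 e^{-\mu}$ (as $1+x^2\geqslant2|x|$), multiplying the $\ip{1}$ estimate by $|x|$ yields exactly the claimed bounds $|\lxh_I|\leqslant e^{-\mu}$ and $|x-\hxl_I|\leqslant e^{-\mu}$ when $t\leqslant a$, and the mirror bounds when $t\geqslant b$.

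Finally, for $\lxh_I,\hxl_I\in\gval{\poly}$ I would invoke closure of $\gval{\poly}$ under composition, product and sum (\lemref{lem:gpac_ext_class_stable}, \remref{rq:vingtquatre}) together with $\ip{1}\in\gval{\poly}$ (\lemref{cor:ip1}). The first argument of $\ip{1}$ is affine in $t$ with coefficients in $\K$ (taking $a,b\in\K$, as in the treatment of $\delta$ for $\mx_\delta$), the third argument is the constant $\tfrac1\delta\in\K$, and the second is $\nu=\mu+\ln(1+x^2)$; the only nonroutine point is that $x\mapsto\ln(1+x^2)$ lies in $\gval{\poly}$ — unlike $\ln$ itself, which fails to be polynomially bounded (\exref{ex:inv_ln}). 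This I would get from \amaurycorref{cor:gpac_ext_ivp_stable} by noting $z=\ln(1+x^2)$ satisfies $z(0)=0$, $z'=2xu$ with $u=\tfrac1{1+x^2}$, $u(0)=1$, $u'=-2xu^2$, the three components being bounded by $1+x^2$. I expect the whole proof to be essentially bookkeeping; the only mild subtleties are this last observation about $\ln(1+x^2)$ and keeping the $\nu$-versus-$\mu$ substitution straight so that the final error comes out as $e^{-\mu}$.
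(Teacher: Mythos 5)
Your proposal is correct and follows essentially the same route as the paper: reduce everything to the bounds of \amaurycorref{cor:ip1} applied at $\nu=\mu+\ln(1+x^2)$, convert $e^{-\nu}$ to $e^{-\mu}$ via $|x|\leqslant e^{\ln(1+x^2)}$ (your $\tfrac{|x|}{1+x^2}\leqslant\tfrac12$ is the same observation), and invoke the closure lemmas for membership in $\gval{\poly}$. The only difference is that you justify $x\mapsto\ln(1+x^2)\in\gval{\poly}$ by a direct ODE argument via \amaurycorref{cor:gpac_ext_ivp_stable}, whereas the paper leaves this implicit (it follows alternatively by composing $\ln$ restricted to $(\varepsilon,\infty)$ from \exref{ex:inv_ln} with the polynomial $1+x^2$); both are fine.
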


\begin{proof}
By symmetry, we only prove it for $\lxh$.
This is a direct consequence of \amaurycorref{cor:ip1} and the fact that $|x|\leqslant e^{\ln(1+x^2)}$.
Indeed if $t\leqslant a$ then $t-\frac{a+b}{2}+1\leqslant 1-\delta$ thus
$|\lxh_I(t,\nu,x)|\leqslant |x|e^{-\nu}\leqslant e^{-\mu}$.
Similarly if $t\geqslant b$ then $t-\frac{a+b}{2}+1\geqslant 1+\delta$ and we get a similar result.
Apply \lemref{lem:gpac_ext_class_stable} multiple times to see that they are belong to $\gval{\poly}$.
\end{proof}

\subsection{GPAC approximation}\label{sec:zoo_generic}

The examples of the previous section all share an interesting common pattern, which we formalise with the
definition below. In this section, $\K$ can be any generable field\footnote{See Section \ref{sec:generable_field} for more details.}.

\begin{definition}[GPAC approximation]
Let $I$ be an open and connected subset of $\R^m$, $\Gamma\subseteq I$ a subset of $I$ of \emph{exceptions} and $f:I\rightarrow\R^m$.
We say that $f$ is \emph{GPAC-approximable over $I$ but $\Gamma$} if there exists $g\in\gval[\K]{\poly}$
such that for any $x\in I$ and $\mu,\lambda>0$ we have
\[\infnorm{f(x)-g(x,\mu,\lambda)}\leqslant e^{-\mu}\quad\text{ if }\quad d(x,\Gamma)\geqslant\lambda^{-1},\]
where $d(x,\Gamma)$ denotes the distance between $x$ and $\Gamma$ (for the infinite norm).
\end{definition}

The set $\Gamma$ of points where the approximation fails will typically be discrete, finite or even empty.
If $\Gamma$ is empty, we do not mention it and say $f$ is GPAC-approximable.
Intuively, $g$ provides an \emph{effective}, uniform and arbitrary good approximation of $f$,
except on a set that can be made ``arbitrary small''. We cannot quantify how small the set
of exception is in general, since the definition allows for pathological cases such as $\Gamma=I$
or $\Gamma=I\cap \Q^m$. However, in case where $\Gamma$ is discrete, a condition met by all examples in this
paper, for any compact set $K$, the measure of exception set $\{d(x,\Gamma)\leqslant\lambda^{-1}\}\cap K$
converges to $0$ as $\lambda$ tends to infinity.

Note that our notion of approximation is not really related to classical approximation
theory, by a sequence of functions for example. Indeed, in the definition, the
same function $g$ is used for all $\mu$ and $\lambda$, which creates a lot of constraints
since $g$ is generable, i.e.~it satisfies a polynomial partial differential equation.
Informally, one can think of $g$ as a ``template'' with parameters $\mu$ and $\lambda$ that we can tweak to get closer
and closer to $f$ but the shape itself of the template is fixed once and for all.

It appears that there is an interesting trade-off between the bound $\mtt{sp}$ on the norm of $g$ (i.e. $g\in\gval{\mtt{sp}}$)
and the quality of the approximation. Indeed, if $\mtt{sp}$ is chosen to be a polynomial,
we can seemingly achieve an exponential error bound ($e^{-\mu}$) but only an inverse distance from $\Gamma$ ($1/\lambda$)
for interesting functions.
For simplicity, we only consider polynomially bounded generable functions is this definition.

Note that the definition does not mandate that $f$ be continuous
and indeed it needs not be. For example, Lemma \ref{lem:rnd} proves that the rounding
function is GPAC-approximable over $\R$ but $\frac{1}{2}+\Z$. More generally, the discontinuity
points will always belong to $\Gamma$.

In this section, we give several examples of classes of functions that can be approximated
as described above.

\begin{lemma}[Basic approximable functions]
Any generable function is approximable on its domain of definition. If $f$ and $g$ are GPAC-approximable
over $X$ but $\Gamma_f$ and $\Gamma_g$ respectively, then $f\pm g$ and $fg$ are GPAC-approximable
over $X$ but $\Gamma_f\cup\Gamma_g$.
\end{lemma}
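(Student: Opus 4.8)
The plan is to prove the three assertions in turn, leaning entirely on the already-established arithmetic closure of $\gval{\poly}$ (Lemma \ref{lem:gpac_ext_class_stable}) and on the helper functions from the zoo, most notably the multiplication approximant. First, for the claim that \emph{any generable function $f\in\gval{}$ is approximable on $\dom f$}: the witness is trivial. Take $g(x,\mu,\lambda)=f(x)$, which does not depend on $\mu,\lambda$ at all; then $\infnorm{f(x)-g(x,\mu,\lambda)}=0\leqslant e^{-\mu}$ for every $x$, with $\Gamma=\emptyset$. One small point to address is that the definition of GPAC-approximable demands $g\in\gval[\K]{\poly}$, i.e.~polynomial growth in all its arguments. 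If $f$ itself is not polynomially bounded this naive $g$ fails the growth requirement; so strictly the statement should be read for $f\in\gval{\poly}$, or else one notes that adding dummy arguments $\mu,\lambda$ to a generable function keeps it generable (Remark \ref{rem:multidim_out_ext}) and the growth bound is inherited. I would state this caveat explicitly and otherwise treat this case as immediate.

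Second, for $f\pm g$: let $g_f,g_g\in\gval[\K]{\poly}$ be the approximants of $f$ and $g$ witnessing approximability over $X$ but $\Gamma_f$, $\Gamma_g$ respectively. Define $h(x,\mu,\lambda)=g_f(x,\mu+\ln 2,\lambda)\pm g_g(x,\mu+\ln 2,\lambda)$. By Lemma \ref{lem:gpac_ext_class_stable}, $h\in\gval{\poly}$. Now fix $x\in X$ and $\mu,\lambda>0$ with $d(x,\Gamma_f\cup\Gamma_g)\geqslant\lambda^{-1}$; then $d(x,\Gamma_f)\geqslant\lambda^{-1}$ and $d(x,\Gamma_g)\geqslant\lambda^{-1}$, so by the triangle inequality
\[
\infnorm{(f\pm g)(x)-h(x,\mu,\lambda)}\leqslant\infnorm{f(x)-g_f(x,\mu+\ln 2,\lambda)}+\infnorm{g(x)-g_g(x,\mu+\ln 2,\lambda)}\leqslant 2e^{-\mu-\ln 2}=e^{-\mu}.
\]
This establishes approximability of $f\pm g$ over $X$ but $\Gamma_f\cup\Gamma_g$.

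Third, for the product $fg$: here a raw product of the approximants does not immediately give an $e^{-\mu}$ bound, because the error terms get multiplied by the (possibly large, but polynomially bounded) values of $f$, $g$, and of the approximants themselves. The approach is to write, with $a=g_f(x,\nu,\lambda)$, $b=g_g(x,\nu,\lambda)$ for a suitably inflated $\nu$ to be chosen,
\[
f(x)g(x)-ab=(f(x)-a)g(x)+a(g(x)-b),
\]
so $\infnorm{f(x)g(x)-ab}\leqslant\infnorm{f(x)-a}\,\infnorm{g(x)}+\infnorm{a}\,\infnorm{g(x)-b}$. Since $f,g\in\gval[\K]{\poly}$ and $g_f,g_g\in\gval[\K]{\poly}$, all of $\infnorm{g(x)}$, $\infnorm{a}$ are bounded by a fixed polynomial $P(\infnorm{x})$; and $\infnorm{f(x)-a},\infnorm{g(x)-b}\leqslant e^{-\nu}$ whenever $d(x,\Gamma_f\cup\Gamma_g)\geqslant\lambda^{-1}$. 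Hence the product error is at most $2P(\infnorm{x})e^{-\nu}$. The trick --- exactly as used repeatedly in the zoo, e.g.~in Definition \ref{def:lxh_hxl} where $\nu=\mu+\ln(1+x^2)$ --- is to absorb the polynomial factor into the exponent: choose $\nu=\mu+\ln\bigl(2P(\infnorm{x})\bigr)$, or more cleanly $\nu=\mu+r(\norm_{\infty,1}(x))$ for a generable polynomial bound $r$, so that $2P(\infnorm{x})e^{-\nu}\leqslant e^{-\mu}$. Because $\norm_{\infty,1}\in\gval{\poly}$ (Lemma \ref{lem:norm}), $\ln(1+\cdot)\in\gval{\poly}$, and polynomials are generable, the map $(x,\mu,\lambda)\mapsto(x,\nu,\lambda)$ is generable, and composing with $g_f,g_g$ and the product keeps everything in $\gval{\poly}$ by Lemma \ref{lem:gpac_ext_class_stable}. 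Define $h(x,\mu,\lambda)=g_f(x,\nu,\lambda)\,g_g(x,\nu,\lambda)$ with this $\nu$; then $h\in\gval{\poly}$ and $\infnorm{(fg)(x)-h(x,\mu,\lambda)}\leqslant e^{-\mu}$ whenever $d(x,\Gamma_f\cup\Gamma_g)\geqslant\lambda^{-1}$, as required.

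The main obstacle is precisely this last bookkeeping step for the product: one must exhibit an \emph{explicit} generable polynomial upper bound on the magnitude of $f$, $g$, and their approximants in terms of $x$, and feed it into the error parameter without breaking the $\gval{\poly}$ membership. Everything else is routine triangle-inequality estimation, and the substitution $\nu=\mu+\ln(1+\cdots)$ is the standard device already employed throughout Section \ref{sec:helper_func}.
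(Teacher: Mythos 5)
Your first two parts are essentially the paper's proof: the identity-as-approximant argument for generable functions (the paper is just as silent about the $\gval{\poly}$ growth requirement, so your caveat that this claim should be read for polynomially bounded generable functions is reasonable), and your treatment of $f\pm g$ is the paper's argument verbatim up to writing $\mu+\ln 2$ instead of $\mu+1$.

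The product case, however, contains a genuine gap. You assert ``since $f,g\in\gval[\K]{\poly}$ \dots all of $\infnorm{g(x)}$, $\infnorm{a}$ are bounded by a fixed polynomial $P(\infnorm{x})$'', but the lemma does not assume $f$ and $g$ to be generable, let alone polynomially bounded: they are merely GPAC-approximable, and may be discontinuous (like $\rnd$) or grow large near $\Gamma_f\cup\Gamma_g$. The only bound the definition provides at a point with $d(x,\Gamma_f)\geqslant\lambda^{-1}$ is $\infnorm{f(x)}\leqslant\infnorm{g_f(x,1,\lambda)}+1\leqslant\mtt{sp}(\max(\infnorm{x},1,\lambda))+1$ for some polynomial $\mtt{sp}$, i.e.\ a bound that also depends on $\lambda$ and degrades as $x$ approaches the exception set; no fixed polynomial in $\infnorm{x}$ alone is available. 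Hence your choice $\nu=\mu+r(\norm_{\infty,1}(x))$ does not in general give $2P(\infnorm{x})e^{-\nu}\leqslant e^{-\mu}$, and the key estimate collapses. The scheme is repairable, either by letting $\nu$ depend polynomially on $\lambda$ as well (using the bound just quoted, which keeps the approximant in $\gval{\poly}$), or, as the paper does, by bootstrapping instead of using a static bound: the paper sets $H(x,\mu,\lambda)=F\big(x,\mu+2+\norm_{\infty,1}(G(x,1,\lambda)),\lambda\big)\,G\big(x,\mu+3+\norm_{\infty,1}(F(x,1,\lambda)),\lambda\big)$, so that the precision of each factor is inflated by the (approximately computed) magnitude of the other factor, and then concludes with the elementary bound $u e^{-u}\leqslant 1$; this needs no growth hypothesis on $f$ or $g$ at all. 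As written, your product argument proves a weaker statement (the case where $f,g$ are themselves polynomially bounded), not the lemma.
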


\begin{proof}
Any generable function trivially satisfies the definition using itself as an approximation.
If $f$ is approximated by $F$ and $g$ by $G$ then for any $\mu,\lambda>0$ and $x\in X$ such that
$d(x,\Gamma_f\cup\Gamma_g)\geqslant\lambda^{-1}$:
\[\infnorm{f(x)+g(x)-F(x,\mu+1,\lambda)-G(x,\mu+1,\lambda)}\leqslant2e^{-\mu-1}\leqslant e^{-\mu}.\]
Thus $\lambdafun{x,\mu,\lambda}{F(x,\mu+1,\lambda)+G(x,\mu+1,\lambda)}$ approximate $f+g$
over $X$ but $\Gamma_f\cup\Gamma_g$.

The case of the multiplication is similar but slightly more involved.
Define for any $x\in X$ and $\mu,\lambda>0$:
\[H(x,\mu,\lambda)=\underbrace{F\big(x,\mu+2+\norm_{\infty,1}(G(x,1,\lambda)),\lambda\big)}_{:=\tilde{f}(x,\mu,\lambda)}
\underbrace{G\big(x,\mu+3+\norm_{\infty,1}(F(x,1,\lambda)),\lambda\big)}_{:=\tilde{g}(x,\mu,\lambda)}.\]
It will be useful to recall that $\infnorm{x}\leqslant\norm_{\infty,1}(x)$ thanks to Lemma \ref{lem:norm}.
Let $\mu,\lambda>0$ and $x\in X$ such that $d(x,\Gamma_f\cup\Gamma_g)\geqslant\lambda^{-1}$. Note
that since $\infnorm{f(x)-F(x,1,\lambda)}\leqslant e^{-1}$ then $\infnorm{F(x,1,\lambda)}\geqslant\infnorm{f(x)}-1$.
Similarly, $\infnorm{\tilde{g}(x,\mu,\lambda)-G(x,1,\lambda)}\leqslant e^{-1}+e^{-\mu}$ thus
$\infnorm{G(x,1,\lambda)}\geqslant\infnorm{\tilde{g}(x,\mu,\lambda)}-2$. Finally check that $x\mapsto x e^{-x}$
is globally bounded by $1$. Thus we have:
\begin{align*}
\infnorm{f(x)g(x)-H(x,\mu,\lambda)}
    &\leqslant \infnorm{f(x)}\infnorm{g(x)-\tilde{g}(x,\mu,\lambda)}\\
    &\;+\infnorm{f(x)-\tilde{f}(x,\mu,\lambda)}\infnorm{\tilde{g}(x,\mu,\lambda)}\\
    &\leqslant\infnorm{f(x)}e^{-\mu-2-\norm_{\infty,1}(F(x,1,\lambda))}\\
    &\;+e^{-\mu-3-\norm_{\infty,1}(G(x,1,\lambda))}\infnorm{\tilde{g}(x,\mu,\lambda)}\\
    &\leqslant\infnorm{f(x)}e^{-\mu-1-\infnorm{f(x)}}
    +e^{-\mu-1-\infnorm{\tilde{g}(x,\mu,\lambda)}}\infnorm{\tilde{g}(x,\mu,\lambda)}\\
    &\leqslant2e^{-\mu-1}\leqslant e^{-\mu}.
\end{align*}
This shows that $H$ approximates $fg$ over $x$ but $\Gamma_f\cup\Gamma_g$. The fact
that $H\in\gval{\poly}$ follows from the hypothesis on $F$ and $G$ and Lemma \ref{lem:gpac_ext_class_stable}.
\end{proof}

\begin{theorem}[Piecewise approximability]\label{th:piecewise_gpac}
Let $-\infty\leqslant a_0<a_1<\ldots<a_{k+1}\leqslant+\infty$ and $f:]a_0,a_{k+1}[\rightarrow\R$.
Assume that for each $i\in\{0,\ldots,k\}$, $f$ is GPAC-approximable over $]a_i,a_{i+1}[$ but $\Gamma_i$.
Further assume that all finite $a_i$ belong to $\K$. Then $f$ is GPAC-approximable over $]a_0,a_{k+1}[$
but $\{a_1,\ldots,a_k\}\cup\bigcup_{i=0}^k\Gamma_i$.
\end{theorem}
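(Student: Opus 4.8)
The plan is to build the single approximant $g$ by gluing the local approximants $g_0,\ldots,g_k$ together with a \emph{generable partition of unity} subordinate to the intervals $]a_i,a_{i+1}[$, the bump functions being manufactured from the floor approximant $\ip{1}$ of \lemref{cor:ip1} in exactly the spirit of the $\lxh$, $\hxl$ of \defref{def:lxh_hxl}. (Equivalently, one could glue two pieces at a time by induction on $k$; the ingredients are the same.)

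First I would fix, for each $i$, a witness $g_i\in\gval[\K]{\poly}$ that $f$ is GPAC-approximable over $]a_i,a_{i+1}[$ but $\Gamma_i$, and record a preliminary observation: each $g_i$ may be assumed defined on all of $\R\times\Rp^2$, since a polynomially bounded generable function keeps its full state bounded near any finite boundary point of its domain and therefore — by the continuation property of solutions of the polynomial differential equation $\jacobian{y}=p(y)$ — extends across it. For each internal breakpoint $a_i$ ($1\leqslant i\leqslant k$, which lies in $\K$ by hypothesis) I would set
\[S_i(x,\mu,\lambda)=\ip{1}\big(\ell(\lambda)(x-a_i)+1,\ \mu,\ \ell(\lambda)\big),\qquad \ell(\lambda)=1+\lambda+\lambda^2,\]
with the conventions $S_0\equiv1$, $S_{k+1}\equiv0$. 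By \lemref{cor:ip1}, $S_i\in\gval[\K]{\poly}$, takes values in $[0,1]$, is within $e^{-\mu}$ of $1$ when $x>a_i$ and $|x-a_i|\geqslant\lambda^{-1}$, and within $e^{-\mu}$ of $0$ when $x<a_i$ and $|x-a_i|\geqslant\lambda^{-1}$ (the set where the $\ip{1}$-estimate degrades is contained in $\{|x-a_i|<\ell(\lambda)^{-2}\}\subseteq\{|x-a_i|<\lambda^{-1}\}$). The bumps would be $b_i=S_i(1-S_{i+1})$ for $i=0,\ldots,k$: each is in $\gval[\K]{\poly}$, takes values in $[0,1]$, and a short case analysis shows that when $x\in\,]a_j,a_{j+1}[$ lies at distance $\geqslant\lambda^{-1}$ from every breakpoint one has $|b_j(x,\mu,\lambda)-1|\leqslant 2e^{-\mu}$ and $|b_i(x,\mu,\lambda)|\leqslant e^{-\mu}$ for all $i\neq j$.

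I would then set
\[g(x,\mu,\lambda)=\sum_{i=0}^{k}b_i(x,\nu,\lambda)\,g_i(x,\nu,\lambda),\qquad
\nu=\mu+c+C\ln\!\big(1+\norm_{\infty,1}(x)+\lambda\big)+\sum_{i=0}^{k}\norm_{\infty,1}\!\big(g_i(x,1,\lambda)\big),\]
with $c,C$ depending only on $k$ and the degrees of the PIVPs of the $g_i$. That $g\in\gval[\K]{\poly}$ follows from \lemref{lem:gpac_ext_class_stable}, \lemref{lem:norm} and \remref{rq:vingtquatre}, using that $\alpha\mapsto\ln(1+\alpha)$ is generable and polynomially bounded and that all $a_i\in\K$ with $\K$ generable. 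For the estimate, given $x\in\,]a_0,a_{k+1}[$ and $\mu,\lambda>0$ with $d\big(x,\{a_1,\ldots,a_k\}\cup\bigcup_i\Gamma_i\big)\geqslant\lambda^{-1}$, I would let $j$ be the index with $x\in\,]a_j,a_{j+1}[$; then $d(x,\Gamma_j)\geqslant\lambda^{-1}$, so $|g_j(x,\nu,\lambda)-f(x)|\leqslant e^{-\nu}$, and I would split
\[g(x,\mu,\lambda)-f(x)=\big(b_j(x,\nu,\lambda)-1\big)f(x)+b_j(x,\nu,\lambda)\big(g_j(x,\nu,\lambda)-f(x)\big)+\sum_{i\neq j}b_i(x,\nu,\lambda)\,g_i(x,\nu,\lambda)\]
and bound each of the three contributions by $e^{-\mu}/(k+3)$ for $c,C$ large enough.

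The hard part is precisely this last bookkeeping, and it is forced by the fact that the local approximants are only \emph{polynomially} — not uniformly — bounded: an ``off'' bump $b_i$ ($i\neq j$) is only of size $e^{-\nu}$ yet multiplies the possibly large value $g_i(x,\nu,\lambda)$, while $|b_j-1|$ multiplies the possibly large value $|f(x)|$. This is handled exactly as in the multiplication case of the preceding lemma on basic approximable functions: the term $C\ln(1+\norm_{\infty,1}(x)+\lambda)$ in $\nu$ lets $e^{-\nu}$ dominate the polynomial bound $|g_i(x,\nu,\lambda)|\leqslant\poly(\infnorm{x},\nu,\lambda)$, and the term $\sum_i\norm_{\infty,1}(g_i(x,1,\lambda))$ lets $e^{-\nu}$ dominate $|f(x)|$ because $\norm_{\infty,1}(g_j(x,1,\lambda))\geqslant|f(x)|-e^{-1}$ whenever $x\in\,]a_j,a_{j+1}[$ is $\lambda^{-1}$-far from $\Gamma_j$. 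The only other point requiring care is that a given $g_i$ must be evaluable at points of $]a_0,a_{k+1}[$ lying outside $]a_i,a_{i+1}[$, which is exactly what the preliminary extension observation secures.
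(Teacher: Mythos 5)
Your gluing step (the partition of unity built from $\ip{1}$, the precision boost $\nu$ inflated by $\ln(1+\norm_{\infty,1}(x)+\lambda)$ and by $\sum_i\norm_{\infty,1}(g_i(x,1,\lambda))$) is a legitimate variant of what the paper does: the paper instead writes a telescoping sum $\tilde f_0+\sum_i\lxh_{[-1,1]}((x-a_i)\lambda,\nu,\tilde f_i-\tilde f_{i-1})$ with the fixed boost $\nu=\mu+k+1$, the ``small switch times large payload'' problem being absorbed by the $\ln(1+x^2)$ term already built into the definition of $\lxh$. Your bookkeeping mirrors the multiplication case of the lemma on basic approximable functions and would go through.

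The genuine gap is your preliminary observation, which is exactly the point the paper flags as ``a subtle issue'' and devotes the second half of its proof to. The claim that a polynomially bounded generable function on $]a_i,a_{i+1}[\times]0,+\infty[^2$ ``extends across'' its boundary and hence ``may be assumed defined on all of $\R\times\Rp^2$'' is false. Boundedness of the state on the original domain only gives continuation of the underlying PIVP slightly past the boundary; nothing prevents finite-time blow-up just beyond it, and nothing preserves the polynomial bound there. Concretely, the solution of $h'=h^2$, $h(-1)=\tfrac12$, i.e.\ $h(x)=\frac{1}{1-x}$ restricted to $]-\infty,0[$, is in $\gval{\poly}$ on that interval but its (unique, by analyticity — \propref{prop:gpac_ext_analytic}) continuation blows up at $x=1$, so no generable function defined on all of $\R$ agrees with it on its domain. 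Since your error estimate needs $|g_i(x,\nu,\lambda)|\leqslant\poly(\infnorm{x},\nu,\lambda)$ at points of $]a_0,a_{k+1}[$ far outside $]a_i,a_{i+1}[$, this is not a cosmetic issue: the off-interval terms of your sum are simply not defined, or not controlled. The correct move — and where the real work of the theorem lies — is not to extend $g_i$ but to replace it by a different function: precompose with a generable $\clamp$ that maps all of $\R$ into $]a_i,a_{i+1}[$ while staying $\theta^{-1}$ away from the endpoints (built from $\mx,\mn$, using $a_i\in\K$), and control the error this introduces via the modulus-of-continuity bound of \propref{prop:generable_mod_cont}, paying for it with an additional precision boost $p(1+\norm_{\infty,1}(x,\mu,\lambda))$ in the $\mu$-argument. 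With that step supplied, your construction closes; without it, it does not.
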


\begin{proof}
Without loss of generality, we can assume that $f$ is defined over $\R$. Indeed if $f$ is only
defined over $[a,b]$, $[a,+\infty[$ or $]-\infty,b]$, we can add an extra infinite interval
over which $f$ is constantly equal to $0$. The resulting $g$ for this extended $f$ satisfies the
definition over the original domain of definition of $f$.

We now assume that $a_0=-\infty$ and $a_{k+1}=+\infty$. Let $\tilde{f}_i\in\gval{\poly}$
be the GPAC-approximation of $f$ over $]a_i,a_{i+1}[$ but $\Gamma_i$, for $i\in\{0,\ldots,k\}$.
There is a subtle issue at this point: a priori $\tilde{f}_i$ is only defined over
$]a_i,a_{i+1}[\times]0,+\infty[^2$. We will show that $\tilde{f}_i$ can be assumed to
be defined over $\R\times]0,+\infty[^2$ and we defer of proof of this fact to end of this proof.
Define for any $x\in\R$, $\mu\geqslant0$ and $\lambda>0$:
\[g(x,\mu,\lambda)=\tilde{f}_0(x,\nu,\lambda)+
\sum_{i=1}^k\lxh_{[-1,1]}\big((x-a_i)\lambda,\nu,\tilde{f}_i(x,\nu,\lambda)-\tilde{f}_{i-1}(x,\nu,\lambda)\big)\]
where $\nu=\mu+k+1$. First note that $g\in\gval[\K]{\poly}$ because it is a finite
sum of generable functions in $\gval{\poly}$, and the endpoints of the intervals belong to $\K$.
Define $\Gamma=\{a_1,\ldots,a_k\}\cup\bigcup_{i=0}^k\Gamma_i$. Let $\mu,\lambda>0$ and $x\in\R$
be such that $d(x,\Gamma)\geqslant\lambda^{-1}$. It follows that $a_i+\lambda^{-1}\leqslant x\leqslant a_{i+1}-\lambda^{-1}$
for some $i\in\{0,\ldots,k\}$. Let $j\in\{0,\ldots,k\}$ and apply Lemma \ref{lem:lxh_hxl}
to get that $|\lxh_{[-1,1]}((x-a_j)\lambda,\nu,X)|\leqslant e^{-\nu}$ if $j\geqslant i+1$
and $|\lxh_{[-1,1]}((x-a_j)\lambda,\nu,X)-X|\leqslant e^{-\nu}$ if $j\leqslant i$.
It follows that:
\begin{align*}
\left|g(x,\mu,\lambda)-f(x)\right|
&\leqslant\left|g(x,\mu,\lambda)-\tilde{f}_i(x,\nu,\lambda)\right|+e^{-\nu}\\
    &=\left|g(x,\mu,\lambda)-\tilde{f}_0(x,\nu,\lambda)-\sum_{j=1}^i\left(\tilde{f}_i(x,\nu,\lambda)-\tilde{f}_{i-1}(x,\nu,\lambda)\right)\right|\\
    &\leqslant\sum_{j=1}^i\Big(\lxh_{[-1,1]}\big((x-a_i)\lambda,\nu,\tilde{f}_i(x,\nu,\lambda)-\tilde{f}_{i-1}(x,\nu,\lambda)\big)\\
    &\hspace{1.5cm}-\left(\tilde{f}_i(x,\nu,\lambda)-\tilde{f}_{i-1}(x,\nu,\lambda)\right)\Big)+e^{-\nu}\\
    &\leqslant (k+1)e^{-\nu}\leqslant e^{-\mu}.
\end{align*}
This concludes the proof that $f$ is approximate by $g$ over $\R$ but $\Gamma$.
It remains to show that, indeed, each $\tilde{f}_i$ can be assumed to be defined over $\R$.
We show this in full-generality for intervals.

Let $f:]a,b[\rightarrow\R$ and $\tilde{f}:]a,b[\times]0,+\infty[^2$ a GPAC-approximation
of $f$. Let $\mtt{sp}$ be a polynomial such that $\tilde{f}\in\gval{\mtt{sp}}$.
Apply Proposition \ref{prop:generable_mod_cont} to $\tilde{f}$ to get a polynomial $q$. Recall
that $q$ acts as a modulus of continuity:
\[\left|\tilde{f}(x,\mu,\lambda)-f(y,\mu,\lambda)\right|\leqslant|x-y|q(\mtt{sp}(\max(|x|,|y|,\mu,\lambda)))\]
for any $x,y\in]a,b[$ and $\mu,\lambda>0$. Let $p\in\K[\R]$ be a nondecreasing polynomial
such that $p(x)\geqslant q(\mtt{sp}(x))$ for all $x\geqslant0$. Define for any $x\in\R$ and
$\mu,\lambda>0$:
\[\clamp(x,\mu,\lambda)=\mx(a+\theta^{-1},\mn(x,b-\theta^{-1},\mu+1,\theta),\mu+1,\theta)\]
where $\delta=b-a$ and $\theta=2\lambda+(2\delta)^{-1}$. Observe that $\clamp$ satisfies three key properties:
\begin{itemize}
\item $\clamp(x,\mu,\lambda)\in]a,b[$ for all $x\in\R$ and $\mu,\lambda>0$: by Lemma \ref{lem:max},
$\clamp(x,\mu,\lambda)\geqslant a+\theta^{-1}>a$. On the other hand,
$\clamp(x,\mu,\lambda)\leqslant\max(a+\theta^{-1},\mn(x,b,\mu+1,\theta))+\frac{1}{1+(1+\mu)\theta}$ but
$\mn(\mn(x,b-\theta^{-1},\mu+1,\theta))\leqslant b-\theta^{-1}$ so
$\clamp(x,\mu,\lambda)\leqslant\max(a+\theta^{-1},b-\theta^{-1})+\frac{1}{1+(1+\mu)\theta}$.
Note that $\theta>(2\delta)^{-1}$ so $a+\theta^{-1}<b-\theta^{-1}$. Consequently
$\clamp(x,\mu,\lambda)\leqslant b-\theta^{-1}+\frac{1}{1+(1+\mu)\theta}<b$.
\item if $a+\lambda^{-1}\leqslant x\leqslant b-\lambda^{-1}$ then $|\clamp(x,\mu,\lambda)-x|\leqslant e^{-\mu}$:
if $a+\lambda^{-1}\leqslant x$ then $x-(a+\theta^{-1})-\theta^{-1}\geqslant\lambda^{-1}-2\theta^{-1}\geqslant0$
so $|\clamp(x,\mu,\lambda)-\mn(x,b-\theta^{-1},\mu+1,\theta)|\leqslant e^{-\mu-1}$. Similarly, $x\leqslant b-\lambda^{-1}$
implies that $x\leqslant(b-\theta^{-1})-\theta^{-1}$ so $|\mn(x,b-\theta^{-1},\mu+1,\theta)-x|\leqslant e^{-\mu-1}$.
It follows that $|\clamp(x,\mu,\lambda)-x|\leqslant2e^{-\mu-1}\leqslant e^{-\mu}$.
\item $\clamp\in\gval{\poly}$: use Lemma \ref{lem:max} and the usual arithmetic lemmas.
Note that it works because $\lambda\mapsto(2\lambda+(2\delta)^{-1})^{-1}$ belongs
to $\gval{\poly}$ for any fixed $\delta$.
\end{itemize}
We can now use $\clamp$ to make sure the argument of $\tilde{f}$ is always within
the domain of definition $]a,b[$, and make sure that it is a good enough approximation
using the modulus of continuity. Define for any $x\in\R$ and $\mu,\lambda>0$:
\[\tilde{F}(x,\mu,\lambda)=\tilde{f}(\clamp(x,\mu+1+p(1+\norm_{\infty,1}(x,\mu,\lambda)),\lambda),\mu+1,\lambda)\]
Clearly $\tilde{F}\in\gval{\poly}$. Let $\mu,\lambda>0$ and $x\in]a,b[$ such that
$d(x,\Gamma\cup\{a,b\})\geqslant\lambda^{-1}$. It follows from the results above that:
\begin{align*}
\left|f(x)-\tilde{F}(x,\mu,\lambda)\right|
    &\leqslant\left|f(x)-\tilde{f}(x,\mu+1,\lambda)\right|+\left|\tilde{F}(x,\mu,\lambda)-\tilde{f}(x,\mu+1,\lambda)\right|\\
    &\leqslant e^{-\mu-1}+\big|x-\clamp(x,\mu+1+p(1+\norm_{\infty,1}(x,\mu,\lambda)),\lambda)\big|\\
    &\hspace{.5cm}\times p\left(\max(|x|,\big|\clamp(x,\mu+1+p(1+\norm_{\infty,1}(x,\mu,\lambda)),\lambda)\big|,\mu+1,\lambda)\right)\\
    &\leqslant e^{-\mu-1}+e^{-\mu-1-p(1+\norm_{\infty,1}(x,\mu,\lambda))}\\
    &\hspace{.5cm}\times p\left(\max(|x|,|x|+e^{-\mu-1-p(1+\norm_{\infty,1}(x,\mu,\lambda))},\mu+1,\lambda)\right)\\
    &\leqslant e^{-\mu-1}+e^{-\mu-1-p(\max(1+|x|,\mu+1,\lambda))}p(\max(|x|,|x|+1,\mu+1,\lambda))\\
    &\leqslant2e^{-\mu-1}\leqslant e^{-\mu}
\end{align*}
\end{proof}

\begin{theorem}[Periodic approximability]
Let $f:\R\rightarrow\R$ be a $\tau$-periodic function. Assume that there exists $a,b\in\K$
such that $b-a=\tau$ and $f$ is GPAC-approximable over $]a,b[$ but $\Gamma$.
Then $f$ is GPAC-approximable over $\R$ but $(\Gamma\cup\{a,b\})+\tau\Z$.
\end{theorem}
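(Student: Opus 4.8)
\emph{Proof plan.} The idea is to approximately reduce the argument $x$ modulo $\tau$ into the fundamental interval $]a,b[$ and then apply the given one‑period approximation. Write $\tau=b-a>0$ and let $\tilde f\in\gval[\K]{\poly}$ be a GPAC‑approximation of $f$ over $]a,b[$ but $\Gamma$. As a preliminary step I would replace $\tilde f$, exactly as in the final part of the proof of Theorem~\ref{th:piecewise_gpac}, by an extension (still in $\gval[\K]{\poly}$) that is defined over all of $\R\times\,]0,+\infty[^{\,2}$ and approximates $f$ over $]a,b[$ but $\Gamma\cup\{a,b\}$. Since this domain is convex, Proposition~\ref{prop:generable_mod_cont} endows $\tilde f$ with a polynomial modulus of continuity $q$; fix also a polynomial bound $\mtt{sp}$ with $\tilde f\in\gval{\mtt{sp}}$.

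Next I would build the reduction. Using that $\lfloor z\rfloor=\round{z-\tfrac12}$ away from the integers, set
\[\rho(x,\mu,\lambda)=x-\tau\,\rnd\!\left(\tfrac{x-a}{\tau}-\tfrac12,\;\mu,\;\tau\lambda+2\right),\]
which belongs to $\gval[\K]{\poly}$ by Lemma~\ref{lem:rnd}, Lemma~\ref{lem:gpac_ext_class_stable} and $a,\tau\in\K$. A short computation with Lemma~\ref{lem:rnd} — the third argument $\tau\lambda+2$ being chosen so that it is both $\geqslant2$ and $\geqslant\tau\lambda$, the latter being what converts the deadzone — shows that if $d(x,a+\tau\Z)\geqslant\lambda^{-1}$ and $n=\lfloor(x-a)/\tau\rfloor$, then $x_0:=x-n\tau\in[a+\lambda^{-1},b-\lambda^{-1}]\subseteq\,]a,b[$ and $\infnorm{\rho(x,\mu,\lambda)-x_0}\leqslant\tau e^{-\mu}$; moreover $|x_0|$ and, for $\mu\geqslant0$, $|\rho(x,\mu,\lambda)|$ are bounded by the constant $C:=\max(|a|,|b|)+\tau\in\K$.

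Then I would assemble $g$ in the style of the function $\tilde F$ of Theorem~\ref{th:piecewise_gpac}. Choose a nonnegative nondecreasing polynomial $P$ (with coefficients in $\K$) dominating $(\mu,\lambda)\mapsto q(\mtt{sp}(\max(C,\mu+1,\lambda)))$, and put
\[g(x,\mu,\lambda)=\tilde f\big(\rho\big(x,\;\mu+1+\tau P(\mu,\lambda),\;\lambda\big),\;\mu+1,\;\lambda\big),\]
which lies in $\gval[\K]{\poly}$ by Lemma~\ref{lem:gpac_ext_class_stable}. Let $\mu,\lambda>0$ and $x$ with $d\big(x,(\Gamma\cup\{a,b\})+\tau\Z\big)\geqslant\lambda^{-1}$; since $\{a,b\}+\tau\Z=a+\tau\Z$, this gives both $d(x,a+\tau\Z)\geqslant\lambda^{-1}$ and (by translation invariance and $\Gamma\subseteq\Gamma+\tau\Z$) $d(x_0,\Gamma\cup\{a,b\})\geqslant\lambda^{-1}$. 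Writing $\rho:=\rho(x,\mu+1+\tau P(\mu,\lambda),\lambda)$ and using $f(x)=f(x_0)$ by periodicity,
\[\infnorm{f(x)-g(x,\mu,\lambda)}\leqslant\infnorm{f(x_0)-\tilde f(x_0,\mu+1,\lambda)}+\infnorm{\tilde f(x_0,\mu+1,\lambda)-\tilde f(\rho,\mu+1,\lambda)}.\]
The first term is $\leqslant e^{-\mu-1}$ by the guarantee on the extended $\tilde f$. For the second, Proposition~\ref{prop:generable_mod_cont} (the segment stays in the convex domain $\R\times\,]0,+\infty[^{\,2}$) bounds it by $\infnorm{x_0-\rho}\,P(\mu,\lambda)\leqslant\tau P(\mu,\lambda)\,e^{-(\mu+1+\tau P(\mu,\lambda))}=e^{-\mu-1}\big(\tau P(\mu,\lambda)\,e^{-\tau P(\mu,\lambda)}\big)\leqslant e^{-\mu-1}$ using $te^{-t}\leqslant1$. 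Hence $\infnorm{f(x)-g(x,\mu,\lambda)}\leqslant 2e^{-\mu-1}\leqslant e^{-\mu}$, so $f$ is GPAC‑approximable over $\R$ but $(\Gamma\cup\{a,b\})+\tau\Z$.

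The main obstacle is exactly the one already met in Theorem~\ref{th:piecewise_gpac}: remaining inside $\gval{\poly}$ forbids feeding a logarithm to $\rho$, so the polynomial modulus‑of‑continuity factor $P(\mu,\lambda)$ cannot be cancelled directly and must instead be pushed into the \emph{exponent} of the precision of $\rho$ (the extra $\tau P(\mu,\lambda)$ term) and absorbed with $te^{-t}\leqslant1$. A secondary point requiring care is that $f$ is not assumed continuous, so $f(\rho)$ cannot be compared with $f(x_0)$; continuity must be routed through the (analytic, hence locally Lipschitz) function $\tilde f$, which is precisely why the $\R$‑extension of $\tilde f$, with a \emph{global} polynomial modulus of continuity, is used rather than the bare approximation on $]a,b[$.
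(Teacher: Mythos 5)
Your proof is correct and takes essentially the same route as the paper's: extend $\tilde f$ to $\R\times\left]0,+\infty\right[^2$ via the clamping trick of Theorem~\ref{th:piecewise_gpac}, reduce the argument modulo the period with $\rnd$, and absorb the polynomial modulus-of-continuity factor by boosting the rounding precision in the exponent and using $te^{-t}\leqslant1$. The only cosmetic differences are that the paper first normalizes to $a+b=0$ and $\tau=1$ and boosts the precision with $p(1+\norm_{\infty,1}(\mu,\lambda))$, whereas you keep general $a,b,\tau$ in the reduction map and use a dominating polynomial $P(\mu,\lambda)$ directly.
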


\begin{proof}
First note that we can assume that $a+b=0$: define $g(x)=f(x+\delta)$ where $\delta=\frac{a+b}{2}$, take
a GPAC-approximation $\tilde{f}$ of $f$ over $]a,b[$ but $\Gamma$. Observe that
$\tilde{g}(x,\mu,\lambda)=\tilde{f}(x+\delta,\mu,\lambda)$ provides an approximation of $g$
over $]a-\delta,b-\delta]$ but $\Gamma-\delta$. Then $f$ is approximable over $\R$ but $(\Gamma\cup\{a,b\})+\tau\Z$
if and only if $g$ is approximable over $\R$ but $((\Gamma-\delta)\cup\{a-\delta,b-\delta\})+\tau\Z$.
Now observe that $(a-\delta)+(b-\delta)=a+b-2\delta=0$.

For a similar reason, we can assume that $\tau=1$ by rescaling $x$. It follows
that we can assume that $a=-1/2$ and $b=1/2$. Let $\tilde{f}$ be a GPAC-approximation
of $f$ over $]\tfrac{-1}{2},\tfrac{1}{2}[$ but $\Gamma$. We use the same trick as
in Theorem \ref{th:piecewise_gpac} to ensure that $\tilde{f}$ is defined over $\R\times]0,+\infty[^2$.
Let $\mtt{sp}$ be a polynomial such that $\tilde{f}\in\gval{\mtt{sp}}$.
Apply Proposition \ref{prop:generable_mod_cont} to $\tilde{f}$ to get a polynomial $q$. Recall
that $q$ acts as a modulus of continuity:
\[\left|\tilde{f}(x,\mu,\lambda)-f(y,\mu,\lambda)\right|\leqslant|x-y|q(\mtt{sp}(\max(|x|,|y|,\mu,\lambda)))\]
for any $x,y\in]a,b[$ and $\mu,\lambda>0$. Let $p\in\K[\R]$ be a nondecreasing polynomial
such that $p(x)\geqslant q(\mtt{sp}(x))$ for all $x\geqslant0$. Define for any $x\in\R$ and $\mu,\lambda>0$:
\[\tilde{F}(x,\mu,\lambda)=\tilde{f}(x-\rnd(x,\mu+1+p(1+\norm_{\infty,1}(\mu,\lambda)),\lambda),\mu+1,\lambda)\]
Clearly $\tilde{F}\in\gval{\poly}$. Let $\mu,\lambda>0$ and $x\in]a,b[$ such that
$d(x,(\Gamma\cup\{a,b\})+\tau\Z)\geqslant\lambda^{-1}$. It follows that there exists $n\in\Z$
such that $x=n+u$ where $u\in]\tfrac{-1}{2}+\lambda^{-1},\tfrac{1}{2}-\lambda^{-1}[$
and $d(u,\Gamma)\geqslant\lambda^{-1}$. Apply Lemma \ref{lem:rnd} to get that
$|\rnd(x,\mu+1+p(1+\norm_{\infty,1}(\mu,\lambda)),\lambda)-n|\leqslant e^{-\mu-1-p(1+\norm_{\infty,1}(\mu,\lambda))}$
so in particular $|x-\rnd(x,\mu+1+p(1+\norm_{\infty,1}(\mu,\lambda)),\lambda)-u|\leqslant e^{-\mu-1-p(1+\norm_{\infty,1}(\mu,\lambda))}$.
In particular, $|x-\rnd(x,\mu+1+p(1+\norm_{\infty,1}(\mu,\lambda)),\lambda)|\leqslant1$.
It follows that:
\begin{align*}
\left|f(x)-\tilde{F}(x,\mu,\lambda)\right|
    &\leqslant\left|f(x)-\tilde{f}(u,\mu+1,\lambda)\right|+\left|\tilde{F}(x,\mu,\lambda)-\tilde{f}(u,\mu+1,\lambda)\right|\\
    &\leqslant \left|f(x-n)-\tilde{f}(u,\mu+1,\lambda)\right|\\
    &\hspace{.5cm}+\big|x-\rnd(x,\mu+1+p(1+\norm_{\infty,1}(\mu,\lambda)),\lambda)-u\big|\\
    &\hspace{.5cm}\times p\left(\max(|u|,\big|x-\rnd(x,\mu+1+p(1+\norm_{\infty,1}(\mu,\lambda)),\lambda)\big|,\mu+1,\lambda)\right)\\
    &\leqslant e^{-\mu-1}+e^{-\mu-1-p(1+\norm_{\infty,1}(\mu,\lambda))}p\left(\max(1,1,\mu+1,\lambda)\right)\\
    &\leqslant e^{-\mu-1}+e^{-\mu-1-p(\max(1,\mu+1,\lambda))}p(\max(1,\mu+1,\lambda))\\
    &\leqslant2e^{-\mu-1}\leqslant e^{-\mu}
\end{align*}

\end{proof}

\section{Generable fields}\label{sec:generable_field}

In Section \ref{sec:generable}, we introduced the notion of \emph{generable field},
which are fields with an additional stability property. We used this notion to
ensure that the class of functions we built is closed under composition.
It is well-known that if we allow any choice of constants in our computation,
we will gain extra computational power because of uncomputable real numbers.
For this reason, it is wise to make sure that we can exhibit at least one generable
field consisting of computable real numbers only, and possibly only polynomial
time computable numbers in the sense of computable analysis \cite{newcomputationalparadigms}.

Intuitively, we are looking for a (the) smallest generable field, call it $\Rgen$,
in order to minimize the computation power of the real numbers it contains.
The rest of this section is dedicated to the study of this field. We first recall
Definition \ref{def:generable_field}.

\begin{definition}[Generable field]
A field $\K$ is \emph{generable} if and only if $\Q\subseteq\K$ and for any
$\alpha\in\K$, and $(f:\R\rightarrow\R)\in\gval[\K]{}$,
$f(\alpha)\in\K$.
\end{definition}

\subsection{Extended stability}\label{sec:gen_field_ext_stab}

By definition of a generable field, $\K$ is preserved by unidimensional generable
functions. An interesting question is whether $\K$ is also
preserved by multidimensional functions. This is not immediate because because of several
key differences in the definition of multidimensional generable functions.
We first recall a folklore topology lemma.

\begin{lemma}[Offset of a compact set]\label{lem:offset_compact_set}
Let $X\subseteq U\subseteq\R^n$ where $U$ is open and $X$ is compact. Then there
exists $\varepsilon>0$ such that $X_\varepsilon\subseteq U$ where the $\varepsilon$-offset
of $X$ is defined by $X_\varepsilon=\bigcup_{x\in X}\openball{x}{\varepsilon}$.
\end{lemma}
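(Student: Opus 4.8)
The plan is to use the distance-to-the-complement function. Define $\varphi:\R^n\to[0,+\infty]$ by $\varphi(x)=\inf_{z\in\R^n\setminus U}\twonorm{x-z}$, with the convention $\varphi(x)=+\infty$ when $U=\R^n$. First I would dispatch the trivial case $U=\R^n$, where $X_\varepsilon\subseteq U$ holds for any $\varepsilon>0$ (say $\varepsilon=1$). Otherwise $\R^n\setminus U$ is a nonempty closed set, so $\varphi$ is real-valued and $1$-Lipschitz (hence continuous), and $\varphi(x)>0$ precisely when $x\notin\R^n\setminus U$, i.e.\ when $x\in U$; in particular $\varphi>0$ on $X$.

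Next, since $X$ is compact and $\varphi$ is continuous, $\varphi$ attains its infimum over $X$ at some $x^\star\in X$, so $\varepsilon_0:=\min_{x\in X}\varphi(x)=\varphi(x^\star)>0$. Set $\varepsilon=\varepsilon_0/2>0$. Then for any $x\in X$, any $y\in\openball{x}{\varepsilon}$, and any $z\in\R^n\setminus U$, the triangle inequality gives $\twonorm{y-z}\geqslant\twonorm{x-z}-\twonorm{x-y}>\varphi(x)-\varepsilon\geqslant\varepsilon_0-\varepsilon_0/2>0$, so $y\notin\R^n\setminus U$, i.e.\ $y\in U$. Hence $X_\varepsilon=\bigcup_{x\in X}\openball{x}{\varepsilon}\subseteq U$, which is exactly the claim.

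An alternative and equally short route is purely combinatorial: for each $x\in X$ choose $r_x>0$ with $\openball{x}{2r_x}\subseteq U$ (possible since $U$ is open), extract from the open cover $\{\openball{x}{r_x}\}_{x\in X}$ of $X$ a finite subcover $\openball{x_1}{r_{x_1}},\ldots,\openball{x_k}{r_{x_k}}$, and take $\varepsilon=\min_i r_{x_i}$; then any $y\in\openball{x}{\varepsilon}$ with $x\in X$ lies in some $\openball{x_i}{r_{x_i}+\varepsilon}\subseteq\openball{x_i}{2r_{x_i}}\subseteq U$. There is no real obstacle here; the only points requiring a little care are the degenerate case $U=\R^n$ and tracking strict versus non-strict inequalities so that the open balls $\openball{x}{\varepsilon}$ genuinely stay inside $U$.
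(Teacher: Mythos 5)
Your main argument is correct and is essentially the paper's own proof: both use the distance function to the closed complement $F=\R^n\setminus U$, its continuity, compactness of $X$ to get a positive minimum, and the triangle inequality to conclude (the paper works directly with the minimum $\varepsilon$ rather than halving it, which also suffices since the balls are open; your explicit treatment of the degenerate case $U=\R^n$ is a small point the paper glosses over). The alternative finite-subcover argument you sketch is a fine Lebesgue-number-style variant, but it is not needed beyond the first proof.
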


\begin{proof}This is a very classical result: let $F=\R^n\setminus U$, then $F$
is closed so the distance function\footnote{We always use the infinite norm $\infnorm{\cdot}$
in this paper but it works for any distance} $d_F$ to $F$ is continuous. Since $X$ is compact,
$d_F(X)$ is a compact subset of $\Rp$, and $d_F(X)$ is nowhere $0$ because $X\subseteq U\subseteq F$
where $U$ is open. Consequently $d_F(X)$ admits a positive minimum $\varepsilon$.
Let $x\in X_\varepsilon$, then $\exists y\in X$ such that $\infnorm{x-y}<\varepsilon$,
and by the triangle inequality, $\varepsilon\leqslant d_F(y)\leqslant \infnorm{x-y}+d_F(x)$
so $d_F(x)>0$ which means $x\notin F$, in other words $x\in U$.
\end{proof}

\begin{lemma}[Polygonal path connectedness]\label{lem:polygonal_connected}
An open, connected subset $U$ of $\R^n$ is always \emph{polygonal-path-connected}:
for any $a,b\in U$, there exists a polygonal path\footnote{A polygonal path
is a connected sequence of line segments} from $a$ to $b$ in $U$. Furthermore, we can take
all intermediate vertices in $\Q^n$.
\end{lemma}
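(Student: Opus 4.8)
The plan is to prove this in two stages: first obtain an ordinary polygonal path from $a$ to $b$ inside $U$ by the standard ``clopen'' connectedness argument, and then perturb its intermediate vertices to rational points, keeping the path inside $U$ by means of \lemref{lem:offset_compact_set}.

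For the first stage, I would fix $a\in U$ and set $A=\{x\in U\mid\text{there is a polygonal path from }a\text{ to }x\text{ in }U\}$. Then $A$ is nonempty (it contains $a$). It is open: if $x\in A$, pick $r>0$ with $\openball{x}{r}\subseteq U$ (possible since $U$ is open); for every $y\in\openball{x}{r}$ the segment $[x,y]$ stays in the convex set $\openball{x}{r}$, so appending it to a polygonal path from $a$ to $x$ yields one from $a$ to $y$, whence $\openball{x}{r}\subseteq A$. The same appending trick shows $U\setminus A$ is open: if $x\in U\setminus A$ and $\openball{x}{r}\subseteq U$, no $y\in\openball{x}{r}$ can belong to $A$, for otherwise $[y,x]$ would extend a path from $a$ to $y$ into one from $a$ to $x$. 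Since $U$ is connected and $A$ is a nonempty clopen subset, $A=U$; in particular there is a polygonal path $\gamma$ from $a$ to $b$ in $U$, say with vertices $a=v_0,v_1,\ldots,v_m=b$.

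For the second stage, note that $\gamma$ is a compact subset of the open set $U$, so \lemref{lem:offset_compact_set} gives $\varepsilon>0$ with $\gamma_\varepsilon\subseteq U$. By density of $\Q^n$, pick $v_i'\in\Q^n$ with $\twonorm{v_i'-v_i}<\varepsilon/2$ for $1\leqslant i\leqslant m-1$, and set $v_0'=a$, $v_m'=b$. For a point $p=(1-t)v_i'+tv_{i+1}'$ on a segment of the new path ($0\leqslant i\leqslant m-1$, $t\in[0,1]$), comparing with $p'=(1-t)v_i+tv_{i+1}\in\gamma$ gives $\twonorm{p-p'}\leqslant(1-t)\twonorm{v_i'-v_i}+t\twonorm{v_{i+1}'-v_{i+1}}<\varepsilon$, so $p\in\gamma_\varepsilon\subseteq U$. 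Hence the polygonal path through $v_0',\ldots,v_m'$ lies in $U$, joins $a$ to $b$, and has all of its intermediate vertices in $\Q^n$.

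I do not anticipate a real obstacle here; the only points that need care are that the vertex perturbation be uniformly small enough to keep \emph{every} segment in $U$ --- which is exactly what the single offset $\varepsilon$ from \lemref{lem:offset_compact_set} guarantees --- and that only the strictly interior vertices $v_1,\ldots,v_{m-1}$ be moved, so that the endpoints remain $a$ and $b$. The degenerate case $m\leqslant 1$, where $\gamma$ has no intermediate vertex, needs no perturbation at all.
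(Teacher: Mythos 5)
Your proof is correct. It does not follow the paper's route, because the paper gives no proof at all: it simply defers to a textbook reference (Theorem~3-5 of \cite{HockingYoungTopology}). Your argument is the standard self-contained one, and it is well adapted to the paper's toolkit: the clopen argument handles existence of a polygonal path, and the rational-vertex refinement --- which is precisely the part the paper actually needs later, when it takes $x_i\in\Q^n$ and $t_i\in\Q$ in the proof of \propref{prop:connected_is_generable_connected} --- is obtained by a uniform perturbation of the interior vertices controlled by \lemref{lem:offset_compact_set}, the very lemma the paper proves just before this statement and reuses in that proposition. So your proof buys self-containedness and makes explicit that the $\Q^n$-vertex claim really does follow from the plain polygonal-connectedness statement, at the cost of a page of routine topology that the paper chose to outsource. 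Two cosmetic remarks: your estimate actually gives $\twonorm{p-p'}<\varepsilon/2$, since every vertex error is below $\varepsilon/2$ and you take a convex combination, so the bound is even safer than stated; and the choice of norm is immaterial, since \lemref{lem:offset_compact_set} works for any norm (the paper's footnote says as much), so mixing $\twonorm{\cdot}$ with the paper's ambient $\infnorm{\cdot}$ causes no harm.
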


\begin{proof}This is a textbook property, e.g. Theorem~3-5 in \cite{HockingYoungTopology}.
\end{proof}

\begin{proposition}[Generable path connectedness]\label{prop:connected_is_generable_connected}
An open, connected subset $U$ of $\R^n$ is always \emph{generable-path-connected}:
for any $a,b\in U\cap\K^n$, there exists $(\phi:\R\rightarrow U)\in\gval[\K]{}$ such that
$\phi(0)=a$ and $\phi(1)=b$.
\end{proposition}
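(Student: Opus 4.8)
Let me set up the curve as a smoothed polygonal path. First I would apply Lemma~\ref{lem:polygonal_connected} to obtain a polygonal path $a=v_0,v_1,\dots,v_m=b$ lying in $U$ with all intermediate vertices in $\Q^n$; since $a,b\in\K^n$ and $\Q\subseteq\K$, every $v_i\in\K^n$. Its image $P=\bigcup_{i=1}^m[v_{i-1},v_i]$ is a compact subset of the open set $U$, so Lemma~\ref{lem:offset_compact_set} provides $\varepsilon>0$ with $P_\varepsilon\subseteq U$, where $P_\varepsilon$ is the $\varepsilon$-offset of $P$ (for the $\infty$-norm, which is admissible by the footnote of that lemma). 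It then suffices to produce $\phi\in\gval[\K]{}$ with $\phi(0)=a$, $\phi(1)=b$ and $\phi(\R)\subseteq P_\varepsilon$, since then $\phi:\R\to U$.

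The construction I would use is $\phi(t)=v_0+\sum_{i=1}^m\rho_i(t)\,(v_i-v_{i-1})$, where, for $\tau_i=\tfrac{2i-1}{2m}\in\Q$ and a large integer parameter $\beta$,
\[\rho_i(t)=\frac{\tanh(\beta(t-\tau_i))+\tanh(\beta\tau_i)}{\tanh(\beta(1-\tau_i))+\tanh(\beta\tau_i)}.\]
The normalization forces $\rho_i(0)=0$ and $\rho_i(1)=1$, hence $\phi(0)=v_0=a$ and $\phi(1)=v_0+\sum_{i=1}^m(v_i-v_{i-1})=v_m=b$ \emph{exactly}. Each $\rho_i$ is an affine (over $\K$) function of $t\mapsto\tanh(\beta(t-\tau_i))$: the denominator is a sum of two strictly positive numbers, hence nonzero, and it together with $\tanh(\beta\tau_i)$ lies in $\K$ because $\K$ is a generable field and $\tanh\in\gval[\Q]{}$. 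Since $t\mapsto\tanh(\beta(t-\tau_i))$ is the composition of $\tanh$ with a degree-one polynomial over $\K$ (Example~\ref{ex:poly_generable}, Example~\ref{ex:elem_func_gpac} and Lemma~\ref{lem:gpac_gen_op}), it is generable, so $\phi$ — a $\K$-affine combination of generable functions — is in $\gval[\K]{}$, and in fact in $\gval[\K]{\poly}$ since it is bounded.

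The remaining point, which is the technical heart, is that $\phi(\R)\subseteq P_\varepsilon$ once $\beta$ is large. Writing $L=\sum_{i=1}^m\infnorm{v_i-v_{i-1}}$ and using Lemma~\ref{lem:bounds_tanh} ($0\le 1-\tanh(u)\le e^{-u}$ for $u\ge0$), one checks: (i) for $t\le0$, $\rho_i$ is increasing with $\rho_i(0)=0$, so $\rho_i(t)\in[-e^{-\beta\tau_i},0]$ and $\infnorm{\phi(t)-a}\le L\,e^{-\beta/(2m)}$; (ii) symmetrically $\infnorm{\phi(t)-b}\le L\,e^{-\beta/(2m)}$ for $t\ge1$; (iii) if $t\in[\tfrac{j}{m},\tfrac{j+1}{m}]$, the centers $\tau_i$ with $i\le j$ lie at distance $\ge\tfrac1{2m}$ to the left of $t$ and those with $i\ge j+2$ at distance $\ge\tfrac1{2m}$ to the right, so $\rho_i(t)$ is within $e^{-\beta/(2m)}$ of $1$ for $i\le j$ and of $0$ for $i\ge j+2$, while $\rho_{j+1}(t)\in[0,1]$; expanding then gives $\phi(t)=\bigl(v_j+\rho_{j+1}(t)(v_{j+1}-v_j)\bigr)+R(t)$ with $v_j+\rho_{j+1}(t)(v_{j+1}-v_j)\in[v_j,v_{j+1}]\subseteq P$ and $\infnorm{R(t)}\le 2L\,e^{-\beta/(2m)}$. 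Choosing $\beta$ with $2L\,e^{-\beta/(2m)}<\varepsilon$ thus yields $\phi(t)\in P_\varepsilon\subseteq U$ for every $t\in\R$, completing the proof.

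The main obstacle is reconciling three requirements that make a naive piecewise-linear concatenation impossible (generable functions being analytic): $\phi$ must hit $a$ and $b$ \emph{exactly}, which the normalization of the $\rho_i$ achieves; it must remain in $U$ for \emph{all} $t\in\R$, not just on $[0,1]$, which forces control of the two tails $t\to\pm\infty$; and the overlapping smooth switches $\rho_i$ must not drag $\phi$ off the polygonal path. Placing the switch centers $\tau_i$ at spacing $\tfrac1m$ and taking the sharpness $\beta$ large enough relative to $m$, $L$ and $\varepsilon$ is exactly what makes estimate (iii) — and hence the whole argument — go through.
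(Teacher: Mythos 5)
Your proof is correct and follows essentially the same route as the paper's: Lemma~\ref{lem:polygonal_connected} for a rational-vertex polygonal path, Lemma~\ref{lem:offset_compact_set} for the $\varepsilon$-offset, and a sum of $\tanh$-based switches with integer sharpness controlled via the tail bound of Lemma~\ref{lem:bounds_tanh}, with the same three-case estimate ($t\leqslant 0$, middle subintervals, $t\geqslant 1$); your normalization $\rho_i(0)=0$, $\rho_i(1)=1$ is a clean way to get $\phi(0)=a$ and $\phi(1)=b$ exactly after summation, which the paper arranges segment by segment. The only slip is the parenthetical claim that $\phi\in\gval[\K]{\poly}$ ``since it is bounded'': boundedness of a generable function does not by itself give membership in $\gval{\poly}$ (cf.\ Example~\ref{ex-notin-gpval}), though this is immaterial here since the proposition only asks for $\phi\in\gval[\K]{}$.
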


\begin{proof}
Let $a,b\in U\cap\K^n$ and apply \lemref{lem:polygonal_connected} to get a polygonal path
$\gamma:[0,1]\rightarrow U$ from $a$ to $b$. We are going to build a highly smoothed
approximation of $\gamma$. This is usually done using bump functions
but bump functions are not analytic, which complicates the matter. Furthermore,
we need to build a path which domain of definition is $\R$, although this will be
a minor annoyance only. We ignore the case where $a=b$ which is trivial and focus
on the case where $a\neq b$.

Let $X=\gamma([0,1])$ which is a compact connected set. Apply \lemref{lem:offset_compact_set}
to get $\varepsilon>0$ such that $X_\varepsilon\subseteq U$. Without loss of
generality, we can assume that $\varepsilon\in\Q$ so that it is generable.

Assume for a moment that $\gamma$ is trivial, that is $\gamma$ is a line segment
from $a$ to $b$. Let $\alpha\in\N\subseteq\K$ such that
$\frac{1}{\tanh(\alpha)}\leqslant1+\frac{2\varepsilon}{\infnorm{b-a}}$.
It exists because $\frac{1}{\tanh(x)}\xrightarrow[x\rightarrow\infty]{}1$.
Define $\phi(t)=a+\frac{1+\mu(t)}{2}(b-a)$ where
$\mu(t)=\frac{\tanh((2t-1)\alpha)}{\tanh(\alpha)}$. One can check that
$\mu$ is an increasing function and that $\mu(0)=-1$ and $\mu(1)=1$.
Furthermore, if $t>1$, $|\mu(t)-1|<\frac{2\varepsilon}{\infnorm{b-a}}$,
and conversely, if $t<0$, $|\mu(t)+1|<\frac{2\varepsilon}{\infnorm{b-a}}$.
Consequently, $\phi(0)=a$, $\phi(1)=b$ and $\phi([0,1])$
is the line segment between $a$ and $b$, so $\phi([0,1])\subseteq X$.
Furthermore, if $t<0$, $\infnorm{a-\phi(t)}\leqslant
\left|\frac{1+\mu(t)}{2}\right|\infnorm{b-a}<\varepsilon$, and if $t>1$,
$\infnorm{b-\phi(t)}\leqslant\left|\frac{1-\mu(t)}{2}\right|\infnorm{b-a}<\varepsilon$.
We conclude from this analysis that $\phi(\R)\subseteq X_\varepsilon\subseteq U$.
It remains to show that $\phi\in\gval[\K]{}$. Using \lemref{lem:gpac_gen_op},
it suffices to show that $\tanh\in\gval[\K]{}$ and $\frac{1}{\tanh(\alpha)}\in\K$.
Since $\K$ is a field, we need to show that $\tanh(\alpha)\in\K$ which is a consequence
of $\K$ being a generable field and $\tanh$ being a generable function. We already
saw in Example \ref{ex:elem_func_gpac} that $\tanh\in\gval[\Q]{}\subseteq\gval[\K]{}$.

In the general case where $\gamma$ is a polygonal path, there are $0=t_1<t_2<\ldots<t_k=1$ such that
$\frestrict{\gamma}{[t_i,t_{i+1}]}$ is the line segment between $x_i=\gamma(t_i)$
and $x_{i+1}=\gamma(t_{i+1})$, furthermore we can always take $x_i\in\Q^n$.
Note that we can choose any parametrization for the path so in particular we can take $t_i=\frac{i}{k}$
and ensure that $t_i\in\Q$
for $i\in\intinterv{0}{k}$. Since by hypothesis $x_0,x_n\in\K^n$, we get that $x_i\in\K^n$ and $t_i\in\K$ for
all $i\in\intinterv{0}{k}$.

Let us denote by $\phi_{\varepsilon}^{a,b}$ the path
built in the previous case. We are simply going to add several instances of this path,
with the necessary shifting and scaling. Since the errors will sum up, we will increase
the approximation precision of each segment. Define
$\phi(t)=a+\sum_{i=1}^{k-1}\left(\phi_{\varepsilon/k}^{x_i,x_{i+1}}\left(\frac{t-t_i}{t_{i+1}-t_i}\right)-x_i\right)$
and consider the following cases:
\begin{itemize}
\item if $t<0$, then $\infnorm{\phi_{\varepsilon/k}^{x_i,x_{i+1}}\left(\frac{t-t_i}{t_{i+1}-t_i}\right)-x_i}<\frac{\varepsilon}{k}$
for all $i\in\intinterv{1}{k-1}$, so $\infnorm{a-\phi(t)}<\frac{k-1}{k}\varepsilon$ and $\phi(t)\in X_\varepsilon$
\item if $t\in[t_j,t_j+1]$ for some $j$, then
$\infnorm{\phi_{\varepsilon/k}^{x_i,x_{i+1}}\left(\frac{t-t_i}{t_{i+1}-t_i}\right)-x_i}<\frac{\varepsilon}{k}$
for all $i>j$, and conversely
$\infnorm{\phi_{\varepsilon/k}^{x_i,x_{i+1}}\left(\frac{t-t_i}{t_{i+1}-t_i}\right)-x_{i+1}}<\frac{\varepsilon}{k}$
for all $i<j$. Finally $u=\phi_{\varepsilon/k}^{x_j,x_{j+1}}\left(\frac{t-t_j}{t_{j+1}-t_j}\right)$ belongs
to the line segment from $x_j$ to $x_j+1$. Since $a=x_1$, we get that $\infnorm{u-\phi(t)}\leqslant\frac{k-1}{k}\varepsilon$
and thus $\phi(t)\in X_\varepsilon$.
\item if $t>1$ then $\infnorm{b-\phi(t)}<\varepsilon$ for the same reason as $t<0$, and
thus $\phi(t)\in X_\varepsilon$.
\end{itemize}
We conclude that $\phi(\R)\subseteq X_\varepsilon\subseteq U$ and one easily checks
that $\phi(0)=a$ and $\phi(1)=b$. Furthermore $\phi\in\gval[\K]{}$ by \lemref{lem:gpac_gen_op}
and because the $x_i$ and $t_i$ belong to $\K$ (see the details in the case of the trivial path).
\end{proof}

The immediate corollary of this result is that $\K$ is also preserved by multidimensional
generable functions. Indeed, by composing a multidimensional function with a
unidimensional one, we get back to the unidimensional case and conclude that
any generable point in the input domain must have a generable image.

\begin{corollary}[Generable field stability]\label{cor:generable_field_ext_stab}
Let $(f:\subseteq\R^d\rightarrow\R^\ell)\in\gval[\K]{}$, then $f(\K^d\cap\dom{f})\subseteq\K^\ell$.
\end{corollary}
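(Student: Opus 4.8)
The plan is to carry out precisely the reduction sketched in the paragraph preceding the statement: pull a multidimensional generable function back to a unidimensional one along a generable path, and then invoke the defining property of a generable field. One point must be flagged at the outset: we may \emph{not} appeal to Lemma~\ref{lem:gpac_ext_class_stable} here, since its proof already uses this corollary; the composition has to be done by hand at the level of PIVPs.

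First I would fix $a\in\K^d\cap\dom{f}$ and unpack Definition~\ref{def:gpac_generable_ext} for $f$, obtaining $n\geqslant\ell$, a polynomial matrix $p\in\MAT{n}{d}{\K}[\R^n]$, a base point $x_0\in\K^d\cap I$ (where $I:=\dom{f}$ is open and connected), $y_0\in\K^n$ and $y:I\rightarrow\R^n$ with $\jacobian{y}=p(y)$, $y(x_0)=y_0$, $f=y_{1..\ell}$ on $I$. Since $f(a)=y_{1..\ell}(a)$, it suffices to show $y(a)\in\K^n$. Because $I$ is open and connected and $x_0,a\in\K^d\cap I$, Proposition~\ref{prop:connected_is_generable_connected} supplies $\phi\in\gval[\K]{}$ with $\phi:\R\rightarrow I$, $\phi(0)=x_0$, $\phi(1)=a$; unpacking Definition~\ref{def:gpac_generable} for $\phi$ gives $n'\geqslant d$, $q\in\K^{n'}[\R^{n'}]$, $z_0\in\K^{n'}$ and $z:\R\rightarrow\R^{n'}$ with $z'=q(z)$, $z(0)=z_0$, $\phi=z_{1..d}$, and a nondecreasing bound $\mtt{sp}_\phi$ on $\infnorm{z}$.

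The core step is to observe that $u:=y\circ\phi$ is well defined on all of $\R$ (because $\phi(\R)\subseteq I$) and, by the chain rule,
\[u'(t)=\jacobian{y}(\phi(t))\phi'(t)=p(u(t))\,q_{1..d}(z(t)).\]
Hence $w:=(u,z):\R\rightarrow\R^{n+n'}$ satisfies a genuine polynomial IVP $w'=r(w)$, $w(0)=(y_0,z_0)$, with $r\in\K^{n+n'}[\R^{n+n'}]$ (products of $\K$-coefficient polynomials stay in $\K$) and $\K$-valued initial condition, and it is bounded by the nondecreasing function $\ovl{\mtt{sp}}:=\max(\mtt{sp}\circ\mtt{sp}_\phi,\ \mtt{sp}_\phi):\Rp\rightarrow\Rp$, where $\mtt{sp}$ is the bound attached to $f$ — using here $\infnorm{\phi(t)}\leqslant\infnorm{z(t)}$ and monotonicity of $\mtt{sp}$, together with the fact that $w$ is defined on all of $\R$. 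Thus $w\in\gval[\K]{\ovl{\mtt{sp}}}$, so by Remark~\ref{rem:multidim_out} each component $w_i:\R\rightarrow\R$ lies in $\gval[\K]{}$. Since $1\in\Q\subseteq\K$, the defining property of a generable field gives $w_i(1)\in\K$ for every $i$; in particular $y(a)=u(1)=w_{1..n}(1)\in\K^n$, and therefore $f(a)=y_{1..\ell}(a)\in\K^\ell$. As $a$ was an arbitrary point of $\K^d\cap\dom{f}$, this yields $f(\K^d\cap\dom{f})\subseteq\K^\ell$.

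The only genuinely delicate point I expect is bookkeeping rather than mathematics: checking that the composite right-hand side really is a single polynomial system with coefficients in $\K$ and a legitimate nondecreasing growth bound on $\Rp$, and — as already noted — steering clear of Lemma~\ref{lem:gpac_ext_class_stable} so as not to argue in a circle. Everything else is a routine application of the chain rule and of the two ingredients (Proposition~\ref{prop:connected_is_generable_connected} and the definition of generable field) that were set up for exactly this purpose.
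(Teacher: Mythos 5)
Your proposal is correct and follows essentially the same route as the paper: compose $y$ with the generable path from Proposition~\ref{prop:connected_is_generable_connected}, check by the chain rule that the pair $(y\circ\phi,z)$ satisfies a polynomial IVP with data in $\K$, and evaluate at $1$ using the definition of a generable field. Your extra care about the growth bound and about avoiding Lemma~\ref{lem:gpac_ext_class_stable} (which indeed relies on this corollary) only makes explicit what the paper leaves implicit.
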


\begin{proof}
Apply Definition \ref{def:gpac_generable_ext} to get $n\in\N$, $p\in\MAT{n}{d}{\K}[\R^n]$,
$x_0\in\dom{f}\cap\K^d$, $y_0\in\K^n$ and $y:\dom{f}\rightarrow\R^n$. Let $u\in\dom{f}\cap\K^d$.
Since $\dom{f}$ is open and connected, by \propref{prop:connected_is_generable_connected},
there exists $(\gamma:\R\rightarrow\dom{f})\in\gval{}$ such that $\gamma(0)=x_0$
and $\gamma(1)=u$. Apply Definition \ref{def:gpac_generable_ext} to $\gamma$ to get
$\bar{n}\in\N$, $\bar{p}\in\MAT{\bar{n}}{1}{\K}[\R^{\bar{n}}]$, $\bar{x}_0\in\K$, $\bar{y}_0\in\K^{\bar{n}}$
and $\bar{y}:\R\rightarrow\R^{\bar{n}}$. Define $z(t)=y(\gamma(t))=y(\bar{y}_{1..d}(t))$,
then $z'(t)=\jacobian{y}(\gamma(t))\gamma'(t)=p(y(\gamma(t)))\gamma'(t)=p(z(t))\bar{p}_{1..d}(\bar{y}(t))$
and $z(0)=y(\gamma(0))=y(x_0)=y_0$. In other words $(\bar{y},z)$ satisfy:
\[\left\{\begin{array}{@{}r@{}l}\bar{y}(0)&=x_0\in\K^d\\z(0)&=y_0\in\K^n\end{array}\right.
\qquad\left\{\begin{array}{@{}r@{}l}\bar{y}'&=\bar{p}(\bar{y})\\z'&=p(z)\bar{p}_{1..\ell}(\bar{y})\end{array}\right.\]
Consequently $(z:\R\rightarrow\R^\ell)\in\gval{}$ so, by definition of a generable field,
$z(\K)\subseteq\K^zell$.
Conclude by noticing that $z(1)=y(\gamma(1))=y(u)$.
\end{proof}

\subsection{Generable real numbers}

In this section, we formalize the notion of generable field with an operator
and study its properties. Recall that the smallest field we are looking for is a
subset of $\R$ but it must also contains $\Q$. We consider the following
operator $G$ on subset of real numbers.

\[G:\left\{\begin{array}{ccc}
\powerset{\R}&\rightarrow&\powerset{\R}\\
X&\mapsto&\displaystyle\bigcup_{f\in\gval[X]{}}f(X)
\end{array}\right.\]

\begin{remark}[$G$ monotone and non-decreasing]
One can check that $G$ is monotone ($X\subseteq G(X)$
for any $X\subseteq\R$). Indeed for any $x\in X$, the constant function
$\lambdafunex{u}{x}$ belongs to $\gval[X]{}$. Moreover, it is non-decreasing
because $\gval[X]{}\subseteq\gval[Y]{}$ if $X\subseteq Y$.
\end{remark}

It is clear that by definition, a field is generable if and only if it is $G$-stable.
An interesting property of $G$ is that its definition can be simplified.
More precisely, by rescaling the functions, we can always assume that the image
of $G$ is produced by the evaluation of generable functions at a particular
point, say $1$, instead of the entire field.

\begin{lemma}[Alternative definition of $G$]\label{lem:alt_def_G}
If $X$ is a field then,
\[G(X)=\big\{f(1):f\in\gval[X]{}\big\}\]
\end{lemma}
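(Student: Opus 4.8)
The plan is to prove the two inclusions separately. The inclusion $\big\{f(1):f\in\gval[X]{}\big\}\subseteq G(X)$ is essentially trivial: if $f\in\gval[X]{}$ and $X$ is a field, then $1\in X$ (since $\Q\subseteq X$, or at least $1\in X$ as $X$ is a field), so $f(1)\in f(X)\subseteq G(X)$. Strictly speaking one should note that $f$ may only be defined on a subinterval or subdomain; but a generable function with $1$ in its domain contributes $f(1)$ to $f(X)$ directly. So the content is entirely in the reverse inclusion $G(X)\subseteq\big\{f(1):f\in\gval[X]{}\big\}$.

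For the reverse inclusion, I would take an arbitrary element of $G(X)$, say $f(\alpha)$ with $f\in\gval[X]{}$ and $\alpha\in X\cap\dom{f}$, and exhibit a generable function $h\in\gval[X]{}$ with $h(1)=f(\alpha)$. The natural construction is a reparametrisation: find a generable curve $\gamma\in\gval[X]{}$ with $\gamma(0)=x_0$ (the base point in the definition of $f$) and $\gamma(1)=\alpha$, both of which lie in $X^d\cap\dom{f}$ since $x_0\in\K^d\cap\dom f$ by Definition~\ref{def:gpac_generable_ext} and $\alpha\in X$ by assumption; such a $\gamma$ exists by Proposition~\ref{prop:connected_is_generable_connected} because $\dom f$ is open and connected. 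Then set $h=f\circ\gamma$. By Lemma~\ref{lem:gpac_ext_class_stable} (closure under composition), $h\in\gval[X]{}$, and $h(1)=f(\gamma(1))=f(\alpha)$, which is exactly what we want. In the one-dimensional case where $\dom f=\R$ one can even avoid invoking the path-connectedness proposition and just use the affine reparametrisation $t\mapsto x_0+t(\alpha-x_0)$ composed into an auxiliary smoothing, but appealing to Proposition~\ref{prop:connected_is_generable_connected} handles all cases uniformly.

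The main obstacle — really the only subtle point — is bookkeeping about domains of definition and the generable-field hypothesis on $X$: one must check that the intermediate values $\gamma(1)=\alpha$ and $\gamma(0)=x_0$ both lie in $X^d$ so that Proposition~\ref{prop:connected_is_generable_connected} applies with field $X$, and that composition preserves membership in $\gval[X]{}$ rather than in a larger field, which is guaranteed precisely because $X$ is assumed to be a field (and $h(1)\in X$ is not needed here — we only need $h\in\gval[X]{}$, and $h(1)=f(\alpha)$ is the value we are trying to realise). One should also remark that if $f$ is multidimensional, $f(\alpha)$ is a tuple, but $G$ as defined collects $f(X)\subseteq\R$ only for $f$ with scalar output, or one works componentwise; in either reading the argument is unchanged. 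I would close by noting that this shows every element of $G(X)$ has the form $h(1)$ for some $h\in\gval[X]{}$, completing the proof.
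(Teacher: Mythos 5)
Your overall reparametrisation idea (turn $f(\alpha)$ into $h(1)$ by precomposing with a generable curve from the base point to $\alpha$) is the right one, and the easy inclusion is handled correctly, but the execution has a genuine gap: both tools you invoke require $X$ to be a \emph{generable} field, whereas the lemma assumes only that $X$ is a field. The paper states explicitly (see the remark after Lemma~\ref{lem:gpac_gen_op}, and the proof of Lemma~\ref{lem:gpac_ext_class_stable}) that closure under composition needs the generable-field hypothesis, because the composed system's initial condition is the value of the outer solution $y$ at the inner function's initial output, which need not lie in $\K$ when $\K$ is merely a field; your parenthetical claim that field-ness alone guarantees $f\circ\gamma\in\gval[X]{}$ is therefore false as an appeal to that lemma. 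Likewise Proposition~\ref{prop:connected_is_generable_connected} is proved under the standing assumption that $\K$ is generable (its construction needs $\tanh(\alpha)\in\K$), so it cannot be invoked for a bare field either. This is not a pedantic point: the lemma is deliberately stated for arbitrary fields so that it can be applied to $\fiter{G}{n}(\Q)$ in Lemma~\ref{lem:G_preserves_field} and in the construction of $\Rgen$, where generability is exactly what is not yet known; routing the proof through generable-field-only lemmas would make that development circular.

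The repair, which is what the paper does, is to carry out the reparametrisation by hand at the level of the defining ODE rather than through the closure lemmas. The functions $f$ appearing in $G(X)$ are unidimensional with domain $\R$ and base point $0$ (Definition~\ref{def:gpac_generable}), so no path-connectedness argument is needed: writing $x=f(t)$ with $t\in X$, take the system $y'=p(y)$, $y(0)=y_0\in X^d$ witnessing $f$, and set $z(u)=y(tu)$. Then $z(0)=y_0\in X^d$, $z'(u)=t\,p(z(u))$, and $tp$ has coefficients in $X$ precisely because $t\in X$ and $X$ is a field; moreover $z_1(u)=f(ut)$, so $g=\lambdafunex{u}{f(ut)}$ lies in $\gval[X]{}$ and $g(1)=f(t)=x$. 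Because the inner map is affine with slope in $X$ and starts at the outer system's own base point, the composed initial condition is just $y_0$, and the generable-field hypothesis is never needed. If you restructure your argument this way (or, equivalently, redo the composition proof in this special case and check by hand that all constants stay in $X$), it becomes correct and coincides with the paper's proof.
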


\begin{proof}
Let $x\in G(X)$, then there exists $f\in\gval[X]{}$
and $t\in X$ such that $x=f(t)$. Consequently there exists $d\in\N$, $y_0\in X^d$,
$p\in X^d[\R^d]$ and $y:\R\rightarrow\R^d$ satisfying
Definition \ref{def:gpac_generable}:
\begin{itemize}
\item $y'=p(y)$ and $y(0)=y_0$
\item $y_1=f$
\end{itemize}
Consider $g(u)=f(ut)$ and note that $g(1)=f(t)=x$.
We will see that $g\in\gval[X]{}$.
Indeed, consider $z(u)=y(tu)$ then for all $u\in\R$:
\begin{itemize}
\item $z(0)=y(0)=y_0\in X^d$;
\item $z'(u)=ty'(tu)=tp(z(u))=q(z(u))$ where $q=tp$ is a polynomial with coefficients
in $X$ since $t\in X$ and $X$ is a field
\item $z_1(u)=y_1(tu)=g(u)$
\end{itemize}
\end{proof}

A consequence of this alternative definition is a simple proof that $G$ preserves
the property of being a field. This will turn out to be crucial fact later on.

\begin{lemma}[$G$ maps fields to fields]\label{lem:G_preserves_field}
If $X$ is a field, then $G(X)$ is a field.
\end{lemma}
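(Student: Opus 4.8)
The goal is to show that $G(X)$ is a field whenever $X$ is a field, and the natural route is to use the simplified description $G(X)=\{f(1):f\in\gval[X]{}\}$ from \lemref{lem:alt_def_G}, so that I only have to combine functions evaluated at a single point rather than juggle evaluation points. First I would note that $G(X)$ contains $0$ and $1$ (constant generable functions) and is nonempty, and that since $X\subseteq G(X)$ by monotonicity, $G(X)$ is in particular a nontrivial subset of $\R$ containing $\Q$. Then the bulk of the proof is the four field operations: given $a,b\in G(X)$ with $a=f(1)$, $b=g(1)$ for some $f,g\in\gval[X]{}$, I must exhibit generable functions (over $X$) whose value at $1$ is $a+b$, $a-b$, $ab$, and (when $b\neq0$) $a/b$.

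\textbf{Key steps.} For $a\pm b$ and $ab$, I would invoke \lemref{lem:gpac_gen_op} (arithmetic on generable functions): $f+g$, $f-g$, $fg$ are all in $\gval[X]{}$ — and crucially the first three items of that lemma only require $X$ to be a field, not a generable field, which is exactly our hypothesis. Evaluating at $1$ gives $a+b, a-b, ab\in G(X)$. For the reciprocal, the subtlety is that $\frac{1}{g}$ is only generable when $g$ \emph{never vanishes} on $\R$, whereas we only know $g(1)=b\neq0$. The fix is to replace $g$ by a well-chosen surrogate: since $g$ is continuous and $g(1)=b\neq0$, pick a rational $c\in X$ (e.g. $c=b/2$ if $b>0$, suitably signed) and consider the constant function $c$; actually the cleanest trick is to define $h = \lambdafunex{u}{g(1)} $ is not generable-from-$X$ a priori... so instead I would argue directly: the constant function $u\mapsto a$ is in $\gval[X]{}$ only if $a\in X$, which we do not know. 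The right move is: let $\tilde g\in\gval[X]{}$ with $\tilde g(1)=b$; then $\tilde g$ is analytic (Remark on regularity), so its zero set is discrete, and I can compose with a generable reparametrization $\phi$ of $\R$ onto a small interval around $1$ avoiding all zeros of $\tilde g$ — but $\phi$'s image is an interval, and $\tilde g\circ\phi$ still might not be defined on all of $\R$... Cleaner still: take the function $k(u) = \tfrac{1}{1+(\tilde g(u)-b)^2}\cdot\text{(something)}$ — honestly the slick approach is $\frac{1}{b}=\frac{\bar g(1)}{\bar g(1)^2+ (?)}$. I would instead define $r(u)=\dfrac{\tilde g(u)}{\tilde g(u)^2+\psi(u)}$ where $\psi$ is a generable function, positive everywhere, with $\psi(1)=0$; e.g. $\psi(u)=(\tilde g(u)-b)^2$ won't be $0$ only at $1$... but $\psi(u)=\sin^2$ of a generable reparametrization vanishing only at $1$ works. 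Then $r\in\gval[X]{}$ by \lemref{lem:gpac_gen_op} (the denominator $\tilde g^2+\psi$ is a sum/product of generables, and it never vanishes since $\psi>0$ off the zeros of $\tilde g$ and... ) — and $r(1)=b/b^2=1/b$.

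\textbf{Main obstacle.} The genuine difficulty is precisely the reciprocal: \lemref{lem:gpac_gen_op} gives $1/f\in\gval{}$ only under a global non-vanishing hypothesis, so I must massage $g$ into an everywhere-nonzero generable function agreeing with $g$ at the point $1$, keeping all constants in $X$. I expect the author handles this with a short explicit construction (rescaling/reparametrizing the PIVP so the relevant coordinate is evaluated where it is nonzero, or adding a strictly positive generable ``cushion'' that vanishes only at the evaluation point), and this is the one step that is not a pure citation of \lemref{lem:gpac_gen_op}. Everything else — closure under $+,-,\times$, presence of $\Q$, and the conclusion that $G(X)$ is a subfield of $\R$ — is routine given the earlier lemmas.
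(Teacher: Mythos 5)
Your proof follows the paper's strategy almost exactly: use \lemref{lem:alt_def_G} to write elements of $G(X)$ as values at $1$ of functions in $\gval[X]{}$, close under $+,-,\times$ via the first items of \lemref{lem:gpac_gen_op} (which, as you note, only need $X$ to be a field), and handle the reciprocal by replacing the given function with an everywhere-nonvanishing generable surrogate taking the right value at $1$ --- precisely the ``positive cushion vanishing only at the evaluation point'' you predicted the author would use. The implementations differ slightly: the paper assumes WLOG $x>0$, picks a rational $\varepsilon$ with $f>0$ on $[1-\varepsilon,1+\varepsilon]$, and takes $1/g$ for $g(t)=f(t)+(1+f(t)^2)\bigl(\tfrac{t-1}{\varepsilon}\bigr)^2$, whereas your $r=\tilde g/(\tilde g^2+\psi)$ needs no sign reduction and no continuity/$\varepsilon$ step, since the denominator is at least $\psi>0$ away from $1$ and equals $b^2>0$ at $1$; this is arguably cleaner. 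One detail you should repair: your proposed $\psi$ (``$\sin^2$ of a generable reparametrization'') implicitly relies on composition, but the composition item of \lemref{lem:gpac_gen_op} requires a \emph{generable} field, which $X$ is not assumed to be (and writing an explicit PIVP for such a composition can force initial values such as $\sin(\arctan(-1))=-\tfrac{\sqrt{2}}{2}$, which need not lie in $X$). Taking the rational-coefficient polynomial $\psi(u)=(u-1)^2$ --- essentially the same factor the paper uses --- avoids this entirely, since then $r$ is built from $\tilde g$ using only sums, products and the nowhere-vanishing-reciprocal item, all valid over a mere field, and the verification that the denominator never vanishes (which trails off in your write-up) is immediate.
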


\begin{proof}
Let $x,y\in G(X)$, by \lemref{lem:alt_def_G} there exists
$f,g\in\gval[X]{}$ such that $x=f(1)$ and $y=g(1)$. Apply \lemref{lem:gpac_gen_op}
to get that $f\pm g$ and $fg$ belong to $\gval[X]{}$
And thus $x\pm y$ and $xy$ belong to $G(X)$.

Finally the case of $\frac{1}{x}$ (when $x\neq0$) is slightly more subtle: we cannot
simply compute $\frac{1}{f}$ because $f$ may cancel. Instead we are going to compute
$\frac{1}{g}$ where $g(1)=f(1)$ but $g$ nevers cancels.

First, note that we can always assume that $x>0$ because $G(X)$ is closed under the negation,
and $-\frac{1}{x}=\frac{1}{-x}$. Since $f(1)=x>0$ and $f$ is continuous,
it means there exists $\varepsilon>0$ such that $f(t)>0$ for all $t\in[1-\varepsilon,1+\varepsilon]$
and we can take $\varepsilon\in\Q$. Define $g(t)=f(t)+\big(1+f(t)^2\big)\left(\frac{t-1}{\varepsilon}\right)^2$.
It is not hard to see that $g(1)=f(1)$ and that $g(t)>0$ for all $t\in\R$.
Furthermore, $g\in\gval[X]{}$ because of \lemref{lem:gpac_gen_op}. Note
that we use the part of the lemma which does not assume that $X$ is a generable field!

Using \lemref{lem:gpac_gen_op}, we conclude that $\frac{1}{g}\in\gval[X]{}$
and thus $\frac{1}{x}\in G(X)$.
\end{proof}

Not only $G$ maps fields to fields, but it also preserves polynomial-time computability.
This is of major interest to us to show that there exists a generable field with
low complexity numbers. Here $\Rpoly$ denotes the set of polynomial
time computable real numbers \cite{Ko91}.

\begin{lemma}[$G$ preserves polytime computability]\label{lem:G_preserves_Rpoly}
$G$ maps subsets of polynomial time computable real numbers into themselves, \emph{i.e.}
for any $X\subseteq\Rpoly$, $G(X)\subseteq\Rpoly$.
\end{lemma}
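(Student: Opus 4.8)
The plan is to take $X \subseteq \Rpoly$ and an arbitrary element $x \in G(X)$, and show $x \in \Rpoly$. By \lemref{lem:alt_def_G}, since $X$ is a field (here we should first note $X$ generates a field inside $\Rpoly$, or simply apply the argument to the field generated by $X$), we may write $x = f(1)$ for some $f \in \gval[X]{}$. Unfolding \defref{def:gpac_generable}, there are $d \in \N$, an initial condition $y_0 \in X^d \subseteq \Rpoly^d$, a polynomial vector $p \in X^d[\R^d]$ with coefficients in $\Rpoly$, and the unique solution $y : \R \to \R^d$ of $y' = p(y)$, $y(0) = y_0$, with $y_1 = f$ and some bound $\infnorm{y(t)} \leqslant \mtt{sp}(|t|)$. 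So $x = y_1(1)$, the value at time $1$ of the solution of a PIVP whose data (initial value, coefficients) are all polynomial-time computable reals.

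The key step is then to invoke the result that the solution of a PIVP with polynomial-time computable data can be computed in polynomial time on any bounded time interval — this is precisely the content of \cite{BGP12}, which is cited earlier in the paper for exactly this purpose (``we can compute the solution of a PIVP in time polynomial in the growth bound of the PIVP''). Concretely, one feeds the machine the coefficients of $p$ and the components of $y_0$ (available to arbitrary precision in polynomial time by hypothesis), runs the polynomial-time PIVP-solving algorithm to approximate $y(1)$ to precision $2^{-n}$ in time polynomial in $n$ (using that $t = 1$ is fixed and the growth bound $\mtt{sp}$ is fixed, so its contribution is an absolute constant), and reads off the first coordinate. This shows $x = f(1) \in \Rpoly$, and since $x \in G(X)$ was arbitrary, $G(X) \subseteq \Rpoly$.

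The main obstacle — really the only nontrivial point — is making sure the hypotheses of the PIVP-computability theorem from \cite{BGP12} are met in a form that gives a \emph{single} polynomial-time procedure: the bound $\mtt{sp}$ and the dimension $d$ depend on $x$, but for a fixed $x$ these are fixed constants, so the running time is polynomial in the output precision $n$ for that fixed $x$, which is exactly what membership in $\Rpoly$ requires. One should also be mildly careful that $y(1)$ genuinely lies in the domain where the algorithm applies (it does, since $y$ is defined on all of $\R$ by the definition of generable and is bounded by $\mtt{sp}$ there). Beyond that the proof is essentially a citation: everything else — that the coefficients and initial conditions are polytime, that the evaluation point $1$ is rational, that the solution exists globally — is immediate from the setup.
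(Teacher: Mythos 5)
Your proposal is correct and follows essentially the same route as the paper: the paper's proof simply takes $x=f(t)$ with $f\in\gval[X]{}$ and $t\in X$ and cites \cite{BGP12} for polynomial-time computability of the solution of a PIVP with polytime-computable data. Your detour through \lemref{lem:alt_def_G} (normalizing to evaluation at $1$, which forces you to pass to the field generated by $X$ inside $\Rpoly$) is harmless but unnecessary, since the \cite{BGP12} result already handles evaluation at an arbitrary polytime-computable time $t\in X$.
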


\begin{proof}
Let $X\subseteq\Rpoly$ and $x\in G(X)$,
$f\in\gval[X]{}$ and $t\in X$ such that $x=f(t)$. We can use
\cite{BGP12}
to conclude that $x$ is polynomial time computable, thus $x\in\Rpoly$.
\end{proof}

Finally, the core of what makes $G$ very special is its finiteness property.
Essentially, it means that if $x\in G(X)$ then $x$ really only requires a finite
number of elements in $X$ to be computed. In the framework of order and lattice theory,
this shows that $G$ is a Scott-continuous function between the complete partial
order (CPO) $(\mathcal{L},\subseteq)$ and itself.

\begin{lemma}[Finiteness of $G$]\label{lem:finiteness_G}
For any $X\subseteq\R$ and $x\in G(X)$, there exists a finite $Y\subseteq X$ such
that $x\in G(Y)$.
\end{lemma}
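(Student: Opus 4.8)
The plan is to simply trace through the definitions and collect the finitely many real constants that are actually used to witness $x\in G(X)$. Let $x\in G(X)$. By definition of the operator $G$, there are $f\in\gval[X]{}$ and $t\in X$ with $x=f(t)$; recall that unidimensional generable functions in the sense of Definition \ref{def:gpac_generable} have domain all of $\R$, so every $t\in X$ is admissible. Unfolding Definition \ref{def:gpac_generable} for $f\in\gval[X]{}$, there exist $n\geqslant1$, an initial vector $y_0\in X^n$, a polynomial vector $p\in X^n[\R^n]$, a nondecreasing $\mtt{sp}:\Rp\rightarrow\Rp$, and the (unique) $y:\R\rightarrow\R^n$ such that $y'=p(y)$, $y(0)=y_0$, $f=y_1$, and $\infnorm{y(s)}\leqslant\mtt{sp}(|s|)$ for all $s\in\R$.

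Now I would take $Y$ to be the finite set consisting of the $n$ components of $y_0$, the finitely many coefficients of the polynomials $p_1,\ldots,p_n$, and the single element $t$. Then $Y\subseteq X$ is finite. The key observation is that the very same data $(n,y_0,p,\mtt{sp},y)$ still witnesses membership in $\gval[Y]{}$: by construction $y_0\in Y^n$ and $p\in Y^n[\R^n]$, while the three remaining requirements ($y'=p(y)$, $f=y_1$, and the bound by $\mtt{sp}$) make no reference to the set of admissible constants. Hence $f\in\gval[Y]{}$. Since moreover $t\in Y$, we conclude $x=f(t)\in f(Y)\subseteq\bigcup_{g\in\gval[Y]{}}g(Y)=G(Y)$, which is exactly the claim.

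There is essentially no obstacle here; it is a bookkeeping argument. The only point deserving a word of care is that $\gval[Y]{}$ must be meaningful for the set $Y$, which need not be a field — but Definition \ref{def:gpac_generable} uses $\K$ (here $Y$) only as the set from which the initial condition and the polynomial coefficients are drawn, so the definition is perfectly well-posed for an arbitrary subset of $\R$, exactly as it is already implicitly used in the definition of $G$ itself.
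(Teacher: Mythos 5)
Your proof is correct and follows exactly the paper's argument: extract the witnessing data $(t,y_0,p)$ from Definition~\ref{def:gpac_generable}, let $Y$ be the finite set of these constants, and observe that the same system witnesses $f\in\gval[Y]{}$, hence $x=f(t)\in G(Y)$. The additional remark that $\gval[Y]{}$ is well-posed even when $Y$ is not a field is a fair clarification but does not change the substance.
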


\begin{proof}
Let $x\in G(X)$, then there exists $f\in\gval[X]{}$ and $t\in X$
such that $x=f(t)$. Then there exists $y_0\in X^d$ and a polynomial $p$ with
coefficients in $X$ such that $f$ satisfies Definition \ref{def:gpac_generable}.
Define $Y$ as the subset of $X$ containing $t$, the components of $y_0$ and all
the coefficients of $p$. Then $Y$ is finite and $f\in\gval[Y]{}$.
Furthermore $t\in Y$ so $x\in G(Y)$.
\end{proof}

We can now define the set of ``generable real numbers'', call it $\Rgen$.
The main result of this section is that $\Rgen$ is the smallest generable field.
But more surprisingly, we show that all the elements of $\Rgen$ are
polynomial time computable (in the sense of Computable Analysis).

\begin{definition}[Generable real numbers]\label{def:Rgen}
\[\Rgen=\bigcup_{n\geqslant0}\fiter{G}{n}(\Q).\]
\end{definition}

\begin{theorem}[$\Rgen$ is generable subfield of $\Rpoly$]\label{th:Rgen_in_Rpoly}
$\Rgen$ is the smallest generable field for inclusion. Furthermore, it form a generable
subfield of polynomial time computable real numbers in the sense of Computable Analysis,
i.e. $\Rgen\subseteq\Rpoly$.
\end{theorem}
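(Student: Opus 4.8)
The plan is to prove the two assertions separately: first that $\Rgen$ is a generable field, second that it is the smallest one, and finally that $\Rgen\subseteq\Rpoly$. Throughout we use the lemmas established above about the operator $G$: monotonicity and non-decreasingness, the fact that $G$ maps fields to fields (\lemref{lem:G_preserves_field}), the fact that $G$ preserves polynomial-time computability (\lemref{lem:G_preserves_Rpoly}), and the finiteness property (\lemref{lem:finiteness_G}).

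First I would show $\Rgen$ is a field. By \lemref{lem:G_preserves_field} and induction, each $\fiter{G}{n}(\Q)$ is a field, and since $G$ is non-decreasing we have $\Q=\fiter{G}{0}(\Q)\subseteq\fiter{G}{1}(\Q)\subseteq\cdots$, so the $\fiter{G}{n}(\Q)$ form an increasing chain of fields; hence their union $\Rgen$ is a field (any two elements lie in a common $\fiter{G}{n}(\Q)$, where the field operations can be carried out). Next, to see $\Rgen$ is generable, I must check $G(\Rgen)\subseteq\Rgen$ (the reverse inclusion being monotonicity of $G$). Take $x\in G(\Rgen)$. By the finiteness property (\lemref{lem:finiteness_G}) there is a finite $Y\subseteq\Rgen$ with $x\in G(Y)$. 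Each of the finitely many elements of $Y$ lies in some $\fiter{G}{n_i}(\Q)$; taking $N=\max_i n_i$ and using that the chain is increasing, $Y\subseteq\fiter{G}{N}(\Q)$. Then by monotonicity of $G$, $x\in G(Y)\subseteq G(\fiter{G}{N}(\Q))=\fiter{G}{N+1}(\Q)\subseteq\Rgen$. This shows $\Rgen$ is $G$-stable, hence a generable field (it also contains $\Q$ by construction).

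Second, minimality: let $\K$ be any generable field. Since $\Q\subseteq\K$ and $\K$ is $G$-stable, an easy induction gives $\fiter{G}{n}(\Q)\subseteq\fiter{G}{n}(\K)=\K$ for all $n$ (the inductive step uses non-decreasingness of $G$ applied to $\fiter{G}{n-1}(\Q)\subseteq\K$, then $G(\K)\subseteq\K$). Taking the union, $\Rgen\subseteq\K$, so $\Rgen$ is the smallest generable field for inclusion.

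Finally, for $\Rgen\subseteq\Rpoly$: since $\Q\subseteq\Rpoly$, \lemref{lem:G_preserves_Rpoly} and induction show $\fiter{G}{n}(\Q)\subseteq\Rpoly$ for every $n$, hence $\Rgen=\bigcup_n\fiter{G}{n}(\Q)\subseteq\Rpoly$. The main subtlety in the whole argument is the step establishing $G$-stability of $\Rgen$: one must not naively expect $G$ to commute with the infinite union, and it is precisely the finiteness property of \lemref{lem:finiteness_G} (Scott-continuity of $G$) that makes the union of the increasing chain closed under $G$. Everything else is a routine induction on $n$ using the already-proved closure properties of $G$.
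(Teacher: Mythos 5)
Your proposal is correct and follows essentially the same route as the paper: the same increasing chain $\fiter{G}{n}(\Q)$ of fields via \lemref{lem:G_preserves_field}, the same use of the finiteness property \lemref{lem:finiteness_G} to get $G$-stability of the union, the same induction with \lemref{lem:G_preserves_Rpoly} for $\Rgen\subseteq\Rpoly$, and the same non-decreasingness argument for minimality. Only the order of the parts differs, and that is immaterial.
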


\begin{proof}
First observe that any generable field must contain $\Rgen$. Indeed, let $\K$
be a generable field: then $G(\K)\subseteq\K$ by definition. But $G$ is non-decreasing
thus $G(\Q)\subseteq G(\K)\subseteq\K$. By applying $G$ repeatedly, we get that
$\fiter{G}{n}(\Q)\subseteq\K$ for all $n$. Thus $\Rgen\subseteq\K$.

Conversely, we need to show that $\Rgen$ is a field. Observe that since $G$ is
monotone, $\fiter{G}{n}(\Q)$ is an increasing sequence (for inclusion). Let $x,y\in\Rgen$,
then there exists $n\in\N$ such that $x,y\in\fiter{G}{n}(\Q)$. Apply \lemref{lem:G_preserves_field}
to get that $\fiter{G}{n}(\Q)$ is a field. It follows that $x+y,x-y,xy$ and $\frac{x}{y}$
(if $y\neq0$) belong to $\fiter{G}{n}(\Q)\subseteq\Rgen$. Thus $\Rgen$ is a field.

It remains to show that $\Rgen$ is a \emph{generable} field. This follows
from \lemref{lem:finiteness_G}: let $x\in G(\Rgen)$, then there exists a \textbf{finite}
$Y\subseteq\Rgen$ such that $x\in G(Y)$. Using the same reasoning as above,
there exists $n\in\N$ such that $Y\subseteq\fiter{G}{n}(\Q)$. Thus
$x\in G(Y)\subseteq G(\fiter{G}{n}(\Q))=\fiter{G}{n+1}(\Q)\subseteq\Rgen$. It
follows that $G(\Rgen)\subseteq\Rgen$, \emph{i.e.} it is generable.

Finally, since $\Q\subseteq\Rpoly$, iterating \lemref{lem:G_preserves_Rpoly} yields that
$\fiter{G}{n}(\Q)\subseteq\Rpoly$ for all $n\in\N$ and thus $\Rgen\subseteq\Rpoly$.
\end{proof}

\section*{References}

\bibliographystyle{elsarticle-harv}
\bibliography{% perso,
  ContComp% ,bournez
}

\end{document}